\begin{document}

\tolerance 10000
\title{Symmetries as Ground States of Local Superoperators: Hydrodynamic Implications}
\author{Sanjay Moudgalya}
\email{sanjay.moudgalya@gmail.com}
\affiliation{Department of Physics and Institute for Quantum Information and Matter,
California Institute of Technology, Pasadena, California 91125, USA}
\affiliation{Walter Burke Institute for Theoretical Physics, California Institute of Technology, Pasadena, California 91125, USA}
\affiliation{Department of Physics, Technische Universit\"{a}t M\"{u}nchen (TUM), James-Franck-Str. 1, 85748 Garching, Germany}
\affiliation{Munich Center for Quantum Science and Technology (MCQST), Schellingstr. 4, 80799 M\"{u}nchen, Germany}
\author{Olexei I. Motrunich}
\email{motrunch@caltech.edu}
\affiliation{Department of Physics and Institute for Quantum Information and Matter,
California Institute of Technology, Pasadena, California 91125, USA}
\begin{abstract}
Symmetry algebras of quantum many-body systems with locality can be understood using commutant algebras, which are defined as algebras of operators that commute with a given set of local operators.
In this work, we show that these symmetry algebras can be expressed as frustration-free ground states of a local superoperator, which we refer to as a ``super-Hamiltonian".
We demonstrate this for conventional symmetries such as $\mbZ_2$, $U(1)$, and $SU(2)$, where the symmetry algebras map to various kinds of ferromagnetic ground states, as well as for unconventional ones that lead to weak ergodicity breaking phenomena of Hilbert space fragmentation and quantum many-body scars.
In addition, we show that the low-energy excitations of this super-Hamiltonian can be understood as approximate symmetries, which in turn are related to slowly relaxing hydrodynamic modes in symmetric systems.
This connection is made precise by relating the super-Hamiltonian to the superoperator that governs the operator relaxation in noisy symmetric Brownian circuits, and this physical interpretation also provides a novel interpretation for Mazur bounds for autocorrelation functions.
We find examples of gapped/gapless super-Hamiltonians indicating the absence/presence of slow-modes, which happens in the presence of discrete/continuous symmetries.
In the gapless cases, we recover hydrodynamic modes such as diffusion, tracer diffusion, and asymptotic scars in the presence of $U(1)$ symmetry, Hilbert space fragmentation, and a tower of quantum scars respectively.
In all, this demonstrates the power of the commutant algebra framework in obtaining a comprehensive understanding of exact symmetries,  and associated approximate symmetries and hydrodynamic modes, and their dynamical consequences in systems with locality.
\end{abstract}
\date{\today}
\maketitle
%


\tableofcontents

\section{Introduction}
\label{sec:intro}
Developing an understanding of symmetries in their most general form has been a recent quest in many different parts of physics.
The definition of symmetries in most of the quantum many-body physics literature implicitly assumes some kinds of restriction imposed on the symmetry operators, e.g., they are usually on-site unitary symmetries with nice group structures, or lattice symmetries such as translation, rotation, reflection, etc.
However, several recent works have demonstrated that generalized symmetries beyond what is usually studied in much of the literature can naturally arise in several physical settings.
In the context of equilibrium physics, several new types of symmetries have recently been studied in the context of various physical lattice models or quantum field theories~\cite{mcgreevy2022generalized, cordova2022snowmass}. 
Examples include ``higher-form symmetries," where the symmetry operators live on manifolds of some non-zero co-dimension~\cite{batista2005generalized, nussinov2009sufficient, nussinov2009symmetry, gaiotto2015generalized}; ``non-invertible" or ``categorical" symmetries, where the symmetries are not representations of groups but of categories~\cite{ji2020categorical, moradi2022topological, cao2023subsystem}; ``MPO symmetries," where the symmetry operators are Matrix Product Operators (MPO)~\cite{aasen2016topological, aasen2020topological, lootens2021dualities}; or even more exotic symmetries that appear in the study of fractons~\cite{nandkishore2019fractons, pretko2020fracton, seiberg2020exotic3d, seiberg2021exoticZn}, where the symmetries depend on the system size.
In the context of non-equilibrium physics, a general framework for symmetries based on ``commutant algebras" has naturally appeared in the study of dynamical phenomena known as weak ergodicity breaking~\cite{serbyn2020review, papic2021review, moudgalya2021review, chandran2022review}.
For example, systems exhibiting Hilbert Space Fragmentation (HSF)~\cite{sala2020fragmentation, khemani2020localization, moudgalya2019thermalization, rakovszky2020statistical} have symmetry algebras that grow exponentially with system size~\cite{moudgalya2021hilbert}, and systems exhibiting Quantum Many-Body Scars (QMBS)~\cite{shiraishi2017systematic, moudgalya2018exact, turner2017quantum, moudgalya2018entanglement, turner2018quantum,  serbyn2020review, moudgalya2021review, chandran2022review} have non-local symmetries such as projectors onto certain pure states~\cite{moudgalya2022exhaustive}.
The discovery of such wide varieties of symmetries motivates the search for certain characterizing properties of symmetry operators that are allowed in physical quantum many-body systems.
A hint comes from the framework where symmetry algebras can be understood as \textit{commutant algebras}, i.e., as the associative algebra of operators that commute with a given set of \textit{local} operators.
In our previous works, we explored this framework in detail, and demonstrated that it can be used to understand regular symmetries and symmetry sectors in a wide variety of standard Hamiltonians~\cite{moudgalya2022from}, as well as to discover novel symmetries that explain the phenomena of Quantum Many-Body Scars~\cite{moudgalya2022exhaustive} and Hilbert Space Fragmentation~\cite{moudgalya2021hilbert}.
Further, in \cite{moudgalya2023numerical}, we introduced numerical methods to calculate commutant algebras.
One such method was based on the property of these operator algebras that when operators are viewed as states in a doubled Hilbert space, these algebras are the ground state spaces of certain frustration-free local superoperators.
Mapping the determination of symmetry algebras to a ground state problem led to efficient algorithms to determine symmetries, which used ideas from tensor network methods for determining ground states in general~\cite{schollwock2011density} as well as specialized methods for determining frustration-free ground states~\cite{debeaudrap2010solving, movassagh2010unfrustrated, yao2022bounding}. 
In this work, we explore the analytical aspects and consequences of the idea that symmetries are ground states of local superoperators, which we refer to as ``super-Hamiltonians.''
This allows us to analytically understand several properties of symmetric systems with locality.
We work out the explicit super-Hamiltonians, which have interpretations in terms of simple ladder or bilayer Hamiltonians, and we solve for their ground states, which precisely map onto commutant algebras.
This allows us to obtain a priori bizarre connections between $\mbZ_2$, $U(1)$, and $SU(2)$ symmetry algebras to ferromagnetic states of various kinds, which can all be expressed as ground states of frustration-free Hamiltonians. 
In addition, we illustrate the commutant algebras in certain unconventional symmetries, including some of the examples of HSF and QMBS discussed in \cite{moudgalya2021hilbert, moudgalya2022exhaustive}. 
In addition to a clear understanding of \textit{exact} symmetries, which are understood as ``white" or ``black" properties of the system, i.e., a given symmetry either exists or it does not, this mapping to ground states also introduces a grey-scale, and provides a precise language for discussing approximate symmetries.
These approximate symmetries are naturally defined as operators that are in the low-energy spectrum of the super-Hamiltonians, of which \textit{exact} symmetries are the ground states.
Since the exact symmetries in many of the examples map onto ferromagnetic states, the low-energy excitations are given by spin waves, which map back onto approximate symmetries.
These are approximately conserved quantities, which can be loosely viewed as long-wavelength textures in the densities of the exactly conserved quantities, and hence are conserved up to times that diverge with the system size (e.g., taking the longest wavelengths fitting into the system).
This feature of approximate symmetries also illustrates their connection to hydrodynamic modes, as we discuss below.
This connection to approximate symmetries is made precise by a remarkable physical relation between the super-Hamiltonian and noisy Brownian circuit models similar to those studied in the context of noisy spin chains~\cite{bauer2017stochastic, bauer2019equilibrium, bernard2021entanglement, bernard2022dynamics, swann2023spacetime}, or as toy models for quantum chaos~\cite{lashkari2013towards,shenker2015stringy,  xu2019locality, zhou2019operator, sunderhauf2019quantum, jian2021note, jian2022linear, agarwal2022emergent, agarwal2023charge, zhang2023information}.
In particular, the low-energy spectrum of the superoperator is related to the relaxation rates of noise-averaged autocorrelation functions towards their Mazur bound values dictated by symmetry~\cite{mazurbound1969, dhar2020revisiting}, which leads to two main insights.
First, it provides an alternate physical meaning to the Mazur bound value, usually interpreted as a lower bound on the autocorrelation function of an operator evolving under a single physical Hamiltonian with a given set of symmetries.
Second, it shows that the \textit{approximate symmetries} that appear as low-energy excitations above the ground state of the super-Hamiltonian correspond to slowly relaxing hydrodynamic modes that govern late-time transport properties in symmetric systems with locality.
For example, we are able to understand the $\sim L^2$ relaxation time in $U(1)$ symmetric systems, which occurs due to the presence of charge or spin diffusion, in terms of spin-wave excitations above ferromagnetic ground states of the superoperator Hamiltonian. 
In addition, we are also able to use this framework to understand hydrodynamic modes for unconventional symmetries such as QMBS, which coincide with slowly thermalizing initial states in such systems, recently referred to as \textit{asymptotic} QMBS~\cite{gotta2023asymptotic}.
With these insights, we connect exact symmetries understood in the commutant algebra framework to approximate symmetries that are related to hydrodynamic relaxation modes and late-time transport, which have been of significant interest lately in systems with various kinds of symmetries~\cite{guardado2020subdiffusion, gromov2020fracton, feldmeier2020anomalous, moudgalya2021spectral, iaconis2019anomalous, sala2022dynamics, ogunnaike2023unifying, morningstar2023hydrodynamics, gliozzi2023hierarchical}.
This paper is organized as follows.
In Sec.~\ref{sec:commutants}, we review key concepts in the study of commutant algebras and their connection to ground states of local super-Hamiltonians.
In Sec.~\ref{sec:examples}, we work out several examples in the context of conventional and unconventional symmetries.
Then in Sec.~\ref{sec:liouvspec}, we illustrate the connection between the low-energy spectrum of the super-Hamiltonians and operator relaxation to Mazur bounds, which can be made concrete in Brownian or noisy circuit models.
We also exhibit the approximate conserved quantities in the context of various kinds of symmetries.
Finally, we conclude with open questions in Sec.~\ref{sec:conclusions}.
\section{Commutant Algebras and Ground States}\label{sec:commutants}
We first review the connection between commutant algebras and frustration-free ground states of local superoperator Hamiltonians, which was first introduced in the context of numerical methods to detect symmetries in \cite{moudgalya2023numerical}.
Here we only review aspects necessary for this work, and more comprehensive discussions can be found in our previous papers~\cite{moudgalya2021hilbert, moudgalya2022from, moudgalya2022exhaustive, moudgalya2023numerical}.
\subsection{Definitions}
The essential idea of commutant algebras is to think of symmetries in terms of a pair of operator algebras $(\mA, \mC)$, referred to as the local algebra and commutant algebra respectively, which are centralizers of each other in the algebra of all operators on the full (finite-dimensional) Hilbert space.
As the name suggests, the \textit{local algebra} $\mA$ is generated by a set of Hermitian local operators $\{\hH_\alpha\}$, which can either be strictly local or extensive local, and we denote it as $\mA = \lgen \{\hH_\alpha\} \rgen$.
In the case when all the operators $\hH_\alpha$ are strictly local, the algebra $\mA$ is also commonly referred to as a \textit{bond algebra}~\cite{nussinov2009bond, cobanera2010unified, cobenera2011bond}. 
The \textit{commutant algebra} $\mC$, by definition, is the centralizer of $\mA$, i.e., the set of all operators that commute with the $\{\hH_\alpha\}$, which is the symmetry algebra for \textit{all} Hamiltonians in $\mA$, i.e., those that can be expressed in terms of linear combinations of products of $\{\hH_\alpha\}$.
For the Hamiltonians constructed out of the generators of $\mA$, symmetry sectors and dynamically disconnected Krylov subspaces due to the symmetry algebra $\mC$ can be understood in terms of their representation theory of von Neumann algebras. 
Thinking of symmetries in this commutant algebra framework leads to a comprehensive understanding of symmetry algebras, symmetric operators, and associated symmetry quantum number sectors, we refer to \cite{moudgalya2021hilbert, moudgalya2022from, moudgalya2022exhaustive} for concrete examples in a variety of systems. 
\subsection{Liouvillians and Super-Hamiltonians}\label{subsec:liouvsuper}
Given the local algebra $\mA$, determining the commutant $\mC$ is not always straightforward in practice.
Hence in \cite{moudgalya2023numerical}, we introduced two numerical methods to numerically construct the full commutant algebra $\mC$ given a set of local terms $\{\hH_\alpha\}$ that generate the local algebra $\mA$.
The method relevant for this work is the ``Liouvillian method," which starts by interpreting operators $\hO$ on the Hilbert space $\mH$ as vectors $\oket{\hO}$.
In particular, operators on the Hilbert space $\mH$, which themselves form a Hilbert space denoted as $\mL(\mH)$, can be mapped onto states on the doubled Hilbert space $\mH \otimes \mH$ via the mapping
\begin{equation}
    \hO = \sumal{\mu, \nu}{}{o_{\mu\nu}\ketbra{v_\mu}{v_\nu}} \iff \oket{\hO} = \sumal{\mu,\nu}{}{o_{\mu\nu}\ket{v_\mu}\otimes\ket{v_\nu}},
\label{eq:opmapping}
\end{equation}
where $\{\ket{v_\mu}\}$ is an orthonormal basis for $\mH$, which we take to be the computational basis of product states. 
For example, for a spin-1/2 system we have
\begin{equation}
    \oket{\mathds{1}}_j = \ket{\up}_j \otimes \ket{\up}_j + \ket{\dn}_j \otimes \ket{\dn}_j, 
\label{eq:idexample}
\end{equation}
where $j$ labels a site [identity operator in a many-body system is $\hat{\mathds{1}} \iff \otimes_j \oket{\mathds{1}}_j$]. 
In this work, we will sometimes interchangeably use $\ket{\bullet}$ and $\oket{\bullet}$ when referring to operators as states on a doubled Hilbert space, and the meaning should be obvious from the context.
Adapting the definition of Eq.~(\ref{eq:opmapping}) together with the conventional inner product in the doubled space implies that the inner product in the operator Hilbert space is defined as\footnote{Note that this differs from the commonly-used convention~\cite{moudgalya2021hilbert} of $\obraket{\hA}{\hB}_{\text{dyn.}} \defn \text{Tr}(\hA^\dagger \hB)/\text{Tr}(\mathds{1})$, which is useful in the study of time-dependent correlation functions.}
\begin{equation}
    \obraket{\hA}{\hB} \defn \text{Tr}(\hA^\dagger \hB).
\label{eq:overlapAB}
\end{equation}
The action of the commutator of an operator $\hO$ with an operator $\hH_\alpha$ can be represented as
\begin{equation}
[\hH_\alpha, \hO]\;\;\iff\;\;\overbrace{\left(\hH_\alpha \otimes \mathds{1} - \mathds{1} \otimes \hH_\alpha^T\right)}^{\hmL_\alpha \defn}\oket{\hO},
\label{eq:commliouv}
\end{equation}
where the transpose is taken in the computational basis. 
In Eq.~(\ref{eq:commliouv}), $\hmL_\alpha$ is referred to as the \textit{Liouvillian} corresponding to the term $\hH_\alpha$, which is the superoperator that represents the adjoint action of that term, i.e., $\hmL_\alpha\opket{\bullet} \defn [\hH_\alpha, \bullet]$.
Given an algebra $\mA = \lgen \{\hH_\alpha\} \rgen$, the operators in the commutant $\mC$ by definition commute with each of the $\hH_\alpha$.
Hence, as vectors in the doubled Hilbert space, they span the common kernel of the Liouvillian superoperators $\{\hmL_\alpha\}$, which is also the nullspace of the positive semi-definite (p.s.d.) superoperator defined as
\begin{equation}
\hmP \defn \sumal{\alpha}{}{\overbrace{\hmL_\alpha^\dagger \hmL_\alpha}^{\hmP_\alpha \defn}},\;\;\;\hmP\oket{\hO} = 0 \iff \hmL_\alpha\oket{\hO} = 0\;\;\forall \alpha, 
\label{eq:psdLiouv}
\end{equation}
where the second condition follows since all the $\hmP_{\alpha}$ are positive semi-definite (p.s.d.).
As discussed in \cite{moudgalya2023numerical}, this provides an efficient numerical method to compute the full commutant $\mC$, given the generators $\{\hH_\alpha\}$.
In addition, in the absence of exact symmetries, the low-energy spectrum of $\hmP$ can be treated as approximate symmetries, as we discuss later in this work.
In App.~\ref{app:extraneous_features} we comment on the dependence of the super-Hamiltonians on the choice of the generators of the bond algebra $\mA$, which does not affect the exact ground states and is not of any concern when using this method to find the commutant $\mC$.
\subsection{Ladder/Bilayer Interpretation}
\label{subsec:ladder_bilayer}
\begin{figure}[t!]
\centering
\includegraphics[scale=0.3]{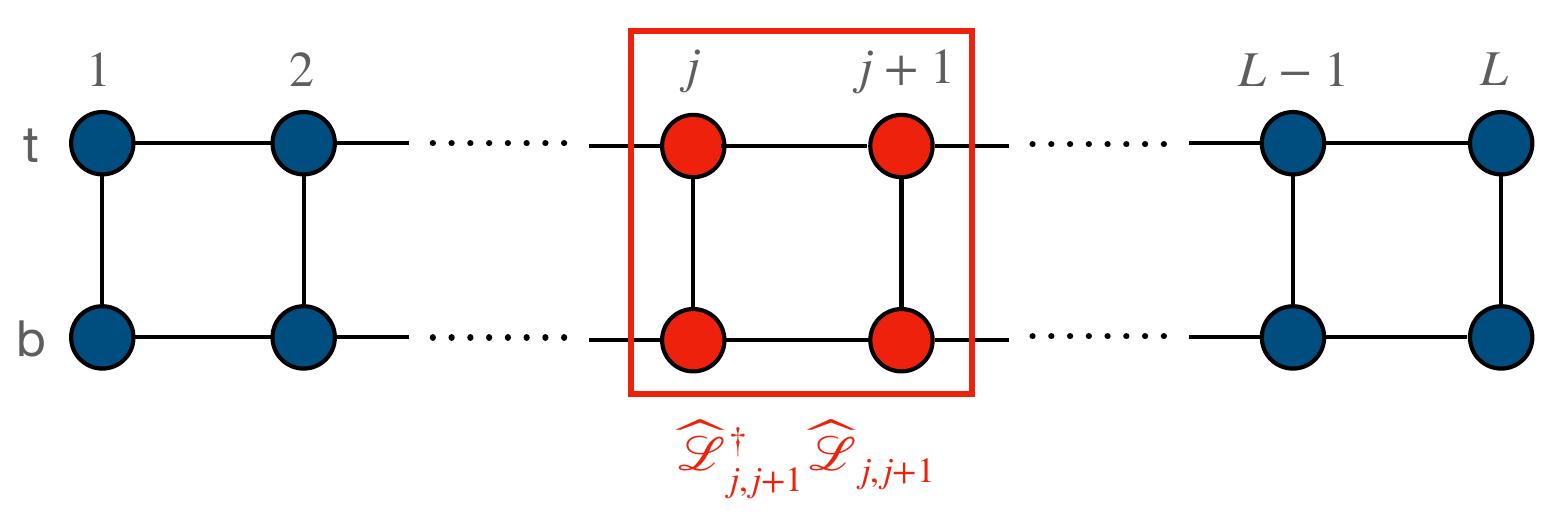}
\caption{The doubled Hilbert space $\mH \otimes \mH$ for one-dimensional systems depicted as a ladder with the two legs labelled by $\{t, b\}$.
The super-Hamiltonian term $\hmL^\dagger_{j,j+1} \hmL_{j,j+1}$ arising from a nearest-neighbor bond algebra term $\hh_{j,j+1}$ is a nearest-neighbor term along the ladder, i.e., it acts non-trivially only on rungs $j$ and $j+1$.}
\label{fig:ladder}
\end{figure}
In order to study the $\hmP$ in typical cases, where $\mH$ is a tensor product Hilbert space with qudits arranged on some lattice, it is convenient to view the Hilbert space $\mH \otimes \mH$ as two copies of the original lattice with the sites aligned, i.e., a site $j$ on the first copy is taken to be neighbor of the site $j$ on the second copy.
This geometry corresponds to a ladder (in one dimension, see Fig.~\ref{fig:ladder}) or a bilayer (in two dimensions), hence we will refer to the first copy of the Hilbert space as the ``top" leg/layer, the second as the ``bottom" leg/layer, and the link between the two legs/layers as ``rungs."
We denote operators on the doubled Hilbert space that act only on one leg/layer using shorthand notations
\begin{equation}
\hO_{\alpha;t} \defn \hO_\alpha \otimes \mathds{1}, \quad
\hO_{\alpha;b} \defn \mathds{1}\otimes \hO_\alpha ~,
\end{equation}
using leg/layer labels $\{t, b\}$ as ``local indices," as often done when writing local Hamiltonians in compact form (we label them using subscripts $\{t, b\}$).
In most examples we study, $\mA$ is a bond algebra, i.e., the operators $\{\hH_\alpha\}$ are strictly local operators, e.g., nearest-neighbor terms of the form $\{\hh_{j,j+1}\}$.
In such cases, the superoperators $\hmP_{\alpha} \defn \hmL_\alpha^\dagger \hmL_\alpha$ are also strictly local operators on the ladder/bilayer with the same range along the ladder/bilayer as the $\{\hH_\alpha\}$, e.g., nearest-neighbor terms $\{\hh_{j,j+1}\}$ give rise to superoperator terms $\{\hmL^\dagger_{j,j+1}\hmL_{j,j+1}\}$ as shown in Fig.~\ref{fig:ladder}.
As a consequence, $\hmP$ of Eq.~(\ref{eq:psdLiouv}) is an extensive local operator on the ladder/bilayer, and has a natural interpretation as superoperator Hamiltonian, which we refer to as a ``super-Hamiltonian".
The definition of the $\hmL_\alpha$ in Eq.~(\ref{eq:commliouv}) can be used to directly obtain an illuminative expression for $\hmP$ of Eq.~(\ref{eq:psdLiouv}) in terms of the $\{\hH_\alpha\}$, which reads
\begin{equation}
    \hmP = \sumal{\alpha}{}{\hmP_\alpha} = \sumal{\alpha}{}{\left(\hH_{\alpha;t}^2 + (\hH_{\alpha;b}^\ast)^2 - 2 \hH_{\alpha;t}\hH_{\alpha;b}^\ast\right)},
\label{eq:Pexp}
\end{equation}
where we have used the fact that the $\hat{H}_\alpha$'s are Hermitian.
This super-Hamiltonian object is the key focus of this work, and in the subsequent sections we will study several examples of this superoperator in various bond and commutant algebras.
Symmetries, which are the operators in the commutant $\mC$ and satisfy $\hmL_\alpha\oket{\cdot}=0$, hence are the {\bf frustration-free ground states} of the local superoperator $\hmP$, since they are ground states of each term $\hmP_\alpha$ of $\hmP$.
The dimension of the commutant, $\dim(\mC)$, is given by the number of independent ground states of $\hmP$.
Finally, we note that this super-Hamiltonian $\hmP$ also has an interpretation as the dissipator of a Lindbladian if we treat $\hH_\alpha$'s as jump operators of a Lindblad master equation, since the action of $\hmP$ on an operator $\oket{\hO}$ reads
\begin{equation}
    \hmP\oket{\hO} \iff -\sumal{\alpha}{}{\left(2\hH_\alpha \hO \hH_\alpha - \{\hH^2_\alpha, \hO\}\right)},
\label{eq:lindbladian}
\end{equation}
which corresponds to the dissipative part of the Lindbladian. 
Indeed, similar mappings are also commonly used in the literature on Lindblad systems~\cite{medvedyeva2016exact, ziolkowska2020yangbaxter}. 
In App.~\ref{app:superHsymms}, we further discuss formal symmetry properties of the super-Hamiltonians viewed as the ladder/bilayer systems; in particular, we encounter ones known as strong symmetries in the Lindblad context and consider how the ground states of the super-Hamiltonians relate to these.
\section{Examples of Super-Hamiltonians}\label{sec:examples}
We now study super-Hamiltonians $\hmP$ in the context of several conventional and unconventional symmetries studied in earlier literature~\cite{moudgalya2021hilbert, moudgalya2022from, moudgalya2022exhaustive}, and show that the respective commutant algebras can be understood as its ground states.
Note that the super-Hamiltonian corresponding to a given symmetry is not unique, and it depends on the choice of generators of the corresponding bond algebra.
The ground states of all such super-Hamiltonians are the same by definition, but the excited states can differ.
We will restrict to a simple choice of local generators of the bond algebras, which lead to simple local super-Hamiltonians, and we expect qualitative features of low-energy excited states to be the same for any other choice of local generators of the bond algebras.
We also restrict examples to one-dimensional systems, and higher dimensional examples proceed in similar ways.
\subsection{Global Symmetry}\label{subsec:globalsymmetry}
We start with the case of global symmetries, studied in \cite{moudgalya2022from}, and we separately show examples of discrete and continuous symmetries.
Note that we only illustrate the Hamiltonians for one-dimensional systems, but the results generalize to higher dimensions as well.
\subsubsection{\texorpdfstring{$\mbZ_2$}{} Symmetry}\label{subsubsec:Z2}
As an example of a discrete symmetry, we focus on $\mathbb{Z}_2$ symmetry in spin-1/2 systems, where we know that the pair of bond and commutant algebras are given by~\cite{moudgalya2022from}
\begin{align}
    \mA_{\mbZ_2} &= \lgen \{X_j X_{j+1}\}, \{Z_j\} \rgen, \nn \\
    \mC_{\mbZ_2} &= \lgen \prod_j{Z_j} \rgen = \text{span}\{\mathds{1}, \prodal{j=1}{L}{Z_j}\}, 
\label{eq:Z2bond}
\end{align}
where $X_j$'s and $Z_j$'s are Pauli matrices on site $j$.
We can use the generators of $\mA_{\mbZ_2}$ to construct the corresponding superoperator $\hmP_{\mbZ_2}$, which when interpreted as a Hamiltonian on a ladder reads, following Eq.~(\ref{eq:Pexp})
\begin{equation}
    \hmP_{\mbZ_2} = 2\sumal{j}{}{[1- Z_{j;t} Z_{j;b}]} + 2 \sumal{j}{}{[1 - X_{j,  t} X_{j;b} X_{j+1;t}  X_{j+1;b}]},
\label{eq:PZ2}
\end{equation}
where $X_{j; \ell}$ and $Z_{j,\ell}$ are now the Pauli matrices on the $\ell$-th leg of the $j$-th site of the ladder system representing the doubled Hilbert space.
Note that all the terms of $\hmP_{\mbZ_2}$ commute among themselves, hence its spectrum is completely solvable!
We can directly solve for the ground states of $\hmP_{\mbZ_2}$ by starting with configurations that ``satisfy," i.e., minimize, the energy under each of the terms individually.
First, we note that ``rung term" $\{1 - Z_{j;t} Z_{j;b}\}$ in $\hmP_{\mbZ_2}$ is satisfied when both the spins on the rung at site $j$ are aligned; hence we can work in the space of composite spins on the rungs, defined as
\begin{equation}
    \sket{\tup} \defn \ladket{\up}{\up},\;\;\;\sket{\tdn} \defn \ladket{\dn}{\dn},
\label{eq:compspins}
\end{equation}
where the top and bottom spins in the ket are states of the top and bottom legs respectively.
Within the $2^L$-dimensional space spanned by all product configurations of the ``composite spins" $\sket{\tup}$ and $\sket{\tdn}$, the action of $\hmP_{\mbZ_2}$ reads
\begin{equation}
    \hmP_{\mbZ_2|\text{comp}} = 2\sumal{j}{}{[1 - \tX_j \tX_{j+1}]}, 
\label{eq:PZ2comp}
\end{equation}
where $\tX_j$ is the composite spin Pauli matrix on site $j$; this is because the action of $X_{j;t} X_{j;b}$ flips the composite spins, and hence can be mapped to $\tilde{X}_j$.
Equation~(\ref{eq:PZ2comp}) is simply the classical Ising ferromagnet, and its two ground states $\ket{G_\rt}$ and $\ket{G_\lt}$ are given by 
\begin{gather}
\ket{G_\rt} = \ket{\trt\ \trt\ \cdots\ \trt\ \trt},\;\;
\ket{G_\lt} = \ket{\tlt\ \tlt\ \cdots\ \tlt\ \tlt},\nn \\
\ket{\trt} \defn \frac{\sket{\tup} + \sket{\tdn}}{\sqrt{2}},\;\;\;
\ket{\tlt} \defn \frac{\sket{\tup} - \sket{\tdn}}{\sqrt{2}}.
\label{eq:PZ2GS}
\end{gather}
In the operator language, the composite spins on rung $j$ map to physical spin projectors on site $j$ as $\sket{\tup}_j = \oket{\ketbra{\up}}_j$ and $\sket{\tdn}_j = \oket{\ketbra{\dn}}_j$.
Hence the composite spins of Eq.~(\ref{eq:PZ2GS}) map as
\begin{equation}
    \ket{\trt} = \frac{1}{\sqrt{2}}\oket{\mathds{1}}_j,\;\;\ket{\tlt} = \frac{1}{\sqrt{2}}\oket{Z}_j,
\label{eq:ladderopcorr}
\end{equation}
and the normalized ground states are
\begin{equation}
    \ket{G_{\rt}} = \frac{1}{2^{\frac{L}{2}}}\oket{\mathds{1}},\;\;\; \ket{G_{\lt}} = \frac{1}{2^{\frac{L}{2}}}\oket{\prod_{j=1}^L{Z_j}},
    \label{eq:GrtGlt}
\end{equation} 
which are precisely the two linearly independent operators that span the commutant algebra for the $\mbZ_2$ symmetry, shown in Eq.~(\ref{eq:Z2bond}). 
Hence the ground state space of the superoperator $\hmP_{\mbZ_2}$ precisely maps to the commutant algebra $\mC_{\mbZ_2}$.
In App.~\ref{subapp:superHsymms_Z2} we further discuss the fate of the formal inherited symmetries of the super-Hamiltonian---here independent $\mbZ_2$ symmetries associated with each leg---in the corresponding quantum ``phase" that contains these ground states, and show that they can be understood in terms of particular spontaneous symmetry breaking.
As an extension, it is easy to see that all Pauli string bond algebras, i.e., those generated by Pauli strings, have super-Hamiltonians that are composed of commuting projectors. 
This is because the Pauli strings themselves square to $1$, making the first two terms in Eq.~(\ref{eq:Pexp}) constants, while the Pauli strings appear in pairs in the last term, which commute with any other pair of Pauli strings.
Hence the ground states of these super-Hamiltonians can be solved to reproduce the respective commutants (which in such cases are also generated by physical Pauli strings~\cite{moudgalya2022from}).
Since the commutants in such cases are also generated by Pauli strings~\cite{moudgalya2022from}, we expect them to correspond to discrete symmetries.
\subsubsection{\texorpdfstring{$U(1)$}{} Symmetry}\label{subsec:U1symmetry}
We next illustrate a continuous symmetry, turning to the commutant of the spin-1/2 $U(1)$ bond algebra, given in one dimension by~\cite{moudgalya2021hilbert, moudgalya2022from}
\begin{align}
    &\mA_{U(1)} = \lgen \{X_j X_{j+1} + Y_j Y_{j+1}\}, \{Z_j\} \rgen, \nn \\
    &\mC_{U(1)} = \lgen \sumal{j = 1}{L}{Z_j} \rgen = \text{span}\{\sumal{j_1 < \cdots < j_m}{}{Z_{j_1} \cdots Z_{j_m}}\}.
\label{eq:U1bond}
\end{align}
We can then use the generators of $\mA_{U(1)}$ to construct the superoperator $\hmP_{U(1)}$ using Eq.~(\ref{eq:Pexp}).
When expressed on the two-leg ladder, after simplification it reads
\begin{gather}
\hmP_{U(1)} = 2\sum_j \left[1 - Z_{j;t} Z_{j;b}\right] + 2 \sum_j \Big[2 - \sum_{\ell \in \{t, b\}} Z_{j,\ell} Z_{j+1,\ell} \nn\\
- (X_{j;t} X_{j+1;t} + Y_{j;t} Y_{j+1;t}) (X_{j;b} X_{j+1;b} + Y_{j;b} Y_{j+1;b})\Big] .
\label{eq:PU1}
\end{gather}
Although $\hmP_{U(1)}$ is not composed of commuting terms, we can solve for their exact ground states. 
Note that similar to the $\mbZ_2$ case, each of the first rung terms proportional to $1 - Z_{j;t} Z_{j;b}$ commutes with all other terms in $\hmP_{U(1)}$ and is satisfied when spins on both legs at site $j$ are aligned, which then justifies working in terms of the composite spins of Eq.~(\ref{eq:compspins}).
It is also easy to see that $\hmP_{U(1)}$ leaves the subspace spanned by the $2^L$ product configurations of the composite spins invariant, and, within that subspace, $\hmP_{U(1)}$ acts as
\begin{gather}
    \hmP_{U(1)|\text{comp}} = 8\sumal{j}{}{(\sket{\tup\tdn} - \sket{\tdn\tup})(\sbra{\tup\tdn} - \sbra{\tdn\tup})}_{[j, j+1]}\nn \\
    =4\sumal{j}{}{[1 - (\tX_j \tX_{j+1} + \tY_j \tY_{j+1} + \tZ_j \tZ_{j+1})]},
\label{eq:PU1comp}
\end{gather}
where $\tX_j$, $\tY_j$, and $\tZ_j$ are the composite spin Pauli operators on site $j$, defined in the obvious way.
Up to an overall factor, this is simply the ferromagnetic Heisenberg model reviewed in App.~\ref{app:Heiscanon}, here in terms of the composite spins.
Its ground state space is hence the $(L+1)$-dimensional ferromagnetic multiplet of the composite spins; these are obtained by replacing the regular spins of the usual ferromagnet shown in Eq.~(\ref{eq:HeisenbergGS}) by the composite spins.
Using the correspondence between states on the ladder and operators of Eq.~(\ref{eq:ladderopcorr}), the states of the composite spin ferromagnetic multiplet translate into operators of the form
\begin{equation}
   \oket{Q^z_{m}} \sim \sumal{j_1 < \cdots < j_m}{}{\oket{Z_{j_1}Z_{j_2} \cdots Z_{j_m}}},
\label{eq:HeisenbergGSop}
\end{equation}
where we have ignored an overall constant.
These are precisely the $L+1$ linearly independent operators forming a basis in the commutant algebra $\mC_{U(1)}$ corresponding to the $U(1)$ symmetry~\cite{moudgalya2021hilbert, moudgalya2022from}, shown in Eq.~(\ref{eq:U1bond}).
While the above operator mapping is more evident in the $\hat{x}$-basis of the composite spins, the same multiplet can be described in terms of the $\hat{z}$-basis of composite spins, analogous to Eq.~(\ref{eq:HeisenbergGS}).
Since the composite spin states $\sket{\tup}_j$ and $\sket{\tdn}_j$ map onto physical spin projectors $\ketbra{\up}_j$ and $\ketbra{\dn}_j$ in the operator language, this $\hat{z}$ basis for the ground state space of $\hmP_{U(1)}$ corresponds to projectors onto the $L+1$ spin sectors labelled by different values of the physical spin $S^z_{\tot}$.
Indeed, for Abelian symmetries, the projectors onto symmetry sectors form an orthogonal basis for the commutant algebra~\cite{moudgalya2021hilbert}. 
Finally, in App.~\ref{subapp:superHsymms_U1} we consider the formal inherited symmetries of the super-Hamiltonian---here independent $U(1)$ symmetries associated with each leg---and show that the ground state manifold can be understood using particular spontaneous symmetry breaking.
\subsubsection{\texorpdfstring{$SU(2)$}{} Symmetry}\label{subsec:SU2symmetry}
As an example of a non-Abelian symmetry, we now illustrate the commutant of the spin-1/2 $SU(2)$ bond algebra, given by
\begin{gather}
    \mA_{SU(2)} = \lgen \{\vec{S}_j \cdot \vec{S}_{j+1} \} \rgen = \lgen \{P^{(2)}_{[j, j+1]}\} \rgen, \nn \\
    P^{(2)}_{[j, j+1]} \defn 2 \vec{S}_j\cdot\vec{S}_{j+1} + \frac{1}{2},\nn \\
    \mC_{SU(2)} = \lgen S^x_{\tot}, S^y_{\tot}, S^z_{\tot} \rgen,
\label{eq:SU2bondalgebra}
\end{gather}
where $P^{(2)}_{[j, j+1]}$ here is the spin-1/2 permutation operator between sites $j$ and $j+1$, i.e.,
\begin{equation}
    P^{(2)}_{[j, j+1]}\ket{\sigma\sigma'}_{[j, j+1]} = \ket{\sigma'\sigma}_{[j, j+1]} ~,
\label{eq:P2exch}
\end{equation}
and $\{S^\alpha_{\tot}\}$ are the total spin operators. 
(This bond algebra contains the Heisenberg Hamiltonian reviewed in App.~\ref{app:Heiscanon}.)
As we will see, the expression in terms of the permutation operators is more convenient for solving the corresponding super-Hamiltonian, and in this form the treatment immediately generalizes to bond algebras for $SU(q)$ symmetry for any $q \geq 2$, which are generated by permutation operators for $q$-level systems, which we denote by $P^{(q)}_{[j, j+1]}$.
In the following, we denote the permutation operators without superscripts to mean that they hold for any $q$, and the $q = 2$ corresponds to the $SU(2)$ case. 
Using $(P_{[j, j+1]})^2 = 1$, the super-Hamiltonian associated with $\{P_{[j, j+1]}\}$ in the ladder representation has the form
\begin{equation}
    \hmP_{SU(q)} = 2\sumal{j}{}{(1 - P_{[j, j+1];t} P_{[j, j+1];b})} = 2\sumal{j}{}{(1 - P^{\text{rung}}_{[j, j+1]}) ~},
\label{eq:suqsuperhamiltonian}
\end{equation}
where $P^{\text{rung}}_{[j, j+1]}$ is the permutation operator for the rungs $j$ and $j+1$, i.e., $P^{\text{rung}}_{[j, j+1]} \ladket{\sigma\ \sigma'}{\tau\ \tau'}_{[j, j+1]} = \ladket{\sigma'\ \sigma}{\tau'\ \tau}_{[j, j+1]}$.
Note that this can also be viewed as the permutation operator $P^{(q^2)}_{[j, j+1]}$ acting on the $q^2$-level systems associated with each of the rungs $j$ and $j+1$.
The permutation operator $P^{(q^2)}_{[j, j+1]}$  possesses an $SU(q^2)$ symmetry, and the super-Hamiltonian $\hmP_{SU(q)}$ is then equivalent to a chain of $q^2$-level systems with nearest-neighbor ferromagnetic $SU(q^2)$-invariant interactions.
The ground states of $\hmP_{SU(q)}$ are the ferromagnetic states of this chain of $q^2$-level systems which can be tabulated as follows.
Given a fixed number of on-site states of each type $N_1, N_2, \dots, N_{q^2}$ with constraint $N_1 + N_2 + \dots + N_{q^2} = L$, we define $\sket{\Psi_{N_1,N_2,\dots,N_{q^2}}}$ to be an equal-weight superposition of all possible configurations with precisely the given number of on-site states of each type.
This is analogous to the spin-1/2 ferromagnetic multiplet that appears as the ground state of the super-Hamiltonian in the $U(1)$ case.
These states are invariant under permutations $P^{(q^2)}_{[j, j+1]}$, and it is easy to prove that they completely span the ground state manifold of $\hmP_{SU(q)}$.
Their count is
\begin{align}
\text{dim}(\mC_{SU(q)}) &= \sum_{N_1, N_2, \dots, N_{q^2} = 0}^L  \delta_{N_1 + N_2 + \dots N_{q^2} = L} \nn \\
&= \binom{L+q^2-1}{q^2-1},
\end{align}
which, for the $q=2$ case, matches precisely the known description of the commutant for the $SU(2)$ symmetry~\cite{moudgalya2021hilbert, moudgalya2022from}.
\subsection{Hilbert Space Fragmentation}\label{subsec:HSF}
We now turn to examples of Hilbert space fragmentation, where the dimension of the commutant scales exponentially with the system size~\cite{moudgalya2021hilbert, andreadakis2022coherence, li2023hilbert}, which leads to exponentially many disconnected Krylov subspaces~\cite{sala2020fragmentation, khemani2020localization, moudgalya2019thermalization,  moudgalya2021review, papic2021review}, which are analogues of quantum number sectors for conventional symmetries.
We start with an example of classical fragmentation, the $t-J_z$ model~\cite{zhang1997tJz, batista2000tJz, batista2001generalizedJW}, which is a model of two species of spins $\up$ and $\dn$, which are allowed to hop but not allowed to cross.
Schematically, there are 3 possible states at any given site: spin $\up$, spin $\dn$, and the vacant site 0, and the allowed ``moves" can be denoted as $\up 0 \leftrightarrow 0 \up$ and $\dn 0 \leftrightarrow 0 \dn$. 
These moves satisfy a conservation of the full pattern of spins (i.e., $\up$'s and $\dn$'s, ignoring the $0$'s) in one dimension, which results in a fragmented Hilbert space with exponentially many Krylov subspaces corresponding to exponentially many allowed patterns~\cite{rakovszky2020statistical, moudgalya2021hilbert}.
In \cite{moudgalya2021hilbert}, we showed that these exponentially many subspaces were attributed to exponentially many conserved quantities in the commutant algebra. 
Specifically, the bond algebra corresponding to the $t-J_z$ model, when viewed as a hard-core bosonic model for simplicity, is given by~\cite{moudgalya2021hilbert}
\begin{gather}
    \mA_{t-J_z} = \lgen \{\hT^{\up}_{[j, j+1]}\}, \{\hT^{\dn}_{[j, j+1]}\}, \{S^z_j\} \rgen,\nn \\
    \hT^{\sigma}_{[j, j+1]} \defn (\ketbra{\sigma\ 0}{0\ \sigma} + h.c.)_{[j, j+1]} ,\nn \\
    S^z_j \defn (\ketbra{\up} - \ketbra{\dn})_j,
\label{eq:tJzbond}
\end{gather}
where $\sigma \in \{\up, \dn\}$, and $\{T^{\sigma}_{[j, j+1]}\}$ are the nearest-neighbor hopping terms for the spin of type $\sigma$.
The corresponding commutant, derived in \cite{moudgalya2021hilbert}, reads
\begin{gather}
    \mC_{t-J_z} = \text{span} \{N^{\sigma_1 \sigma_2 \cdots \sigma_k}, k=0,1,\dots,L \},\nn \\
    N^{\sigma_1 \sigma_2 \cdots \sigma_k} = \sumal{j_1 < j_2< \cdots< j_k}{}{N^{\sigma_1}_{j_1} N^{\sigma_2}_{j_2}\cdots N^{\sigma_k}_{j_k}},\;\;\
    \sigma_j \in \{\up,\dn\},
\label{eq:tJzcomm}
\end{gather}
Note that most of the conserved quantities in the commutant $\mC_{t-J_z}$ are functionally independent from the two obvious $U(1)$ symmetries, $N^\sigma = \sum_j{N^\sigma_j}$, $\sigma \in \{\up, \dn\}$, which are the separate particle number conservations of $\up$ spins and $\dn$ spins.
The commutant $\mC_{t-J_z}$ can be generated by a distinct set of non-local operators, referred to as Statistically Localized Integrals of Motion (SLIOMs)~\cite{rakovszky2020statistical, moudgalya2021hilbert}; however, for our purposes working with the linearly independent basis for $\mC_{t-J_z}$ is more convenient. 
We can construct the super-Hamiltonian using the generators of $\mA_{t-J_z}$ to understand the operators in $\mC_{t-J_z}$ as its ground states.
We first note that the super-Hamiltonian is of the form
\begin{equation}
\hmP_{t-J_z} = \sum_{j}{(S^z_{j;t} - S^z_{j;b})^2} + \sum_j{F(\{\hT^\sigma_{[j, j+1]; \ell}\})},
\label{eq:tJzliouv}
\end{equation}
where $F(\{\hT^\sigma_{[j, j+1]; \ell}\})$ is the positive semi-definite part of the super-Hamiltonian that comes from the generators $\{\hT^\sigma_{[j, j+1]}\}$ of $\mA_{t-J_z}$. 
We now observe that the first sum in Eq.~(\ref{eq:tJzliouv}) enforces that the ground states $\ket{\Psi}$ of $\hmP_{t-J_z}$ satisfy $S^z_{j;t}\ket{\Psi} = S^z_{j;b}\ket{\Psi}$; 
hence the ground states lie in the subspace spanned by product states of composite spins, defined here as 
\begin{equation}    
\sket{\tup} \defn \ladket{\up}{ \up},\;\;\;\sket{\tzero} \defn \ladket{0}{ 0},\;\;\;\sket{\tdn} \defn \ladket{\dn}{ \dn}.
\label{eq:tJzsuperspins}
\end{equation}
It is easy to check that these composite spins are left invariant under the action of $\hmP_{t-J_z}$, and the effective Hamiltonian in this subspace reads:
\begin{equation}
    \hmP_{t-J_z|\text{comp}} = 2 \sum_{j, \sigma \in \{\up, \dn\}}{\left(\sket{\tilde{\sigma}\ \tzero} - \sket{\tzero\ \tilde{\sigma}}\right)\left(\sbra{\tilde{\sigma}\ \tzero} - \sbra{\tzero\ \tilde{\sigma}}\right)_{[j, j+1]}}, 
\label{eq:tJzcomposite}
\end{equation}
where we have used the expression for $\hT^{\sigma}_{[j, j+1]}$ in Eq.~(\ref{eq:tJzbond}). 
This resembles the form of the ferromagnetic Heisenberg Hamiltonian in Eq.~(\ref{eq:PU1comp}), and in fact, as we discuss in Sec.~\ref{subsubsec:fragmentation}, there is a mapping between eigenstates of $\hmP_{t-J_z|\text{comp}}$ and those of the Heisenberg model discussed in App.~\ref{app:Heiscanon}.
For the purposes of the ground states, it is easy to show that they must have equal amplitudes on \textit{all} product states that are ``connected" by the nearest-neighbor ``hops" of the composite spin $\tilde{\sigma}$.
An orthogonal basis for the ground states is formed by states that are equal weight superpositions of all spin configurations with a fixed pattern of composite spins $\tup$ and $\tdn$.
For example, the ground states of $\hmP_{t-J_z}$ with OBC read
\begin{equation}
    \ket{G^{\sigma_1 \cdots \sigma_k}} = \sumal{j_1 < j_2 < \cdots < j_k}{}{\sket{\tilde{\sigma}_1(j_1)\ \tilde{\sigma}_2(j_2)\ \cdots\ \tilde{\sigma}_k(j_k)}},
\label{eq:PtJzGS}
\end{equation}
where the notation $\tilde{\sigma}_l(j_l)$ indicates that the composite spin $\tilde{\sigma}_l$ is at site $j_l$, and the remaining sites are implicitly assumed to be $\tzero$. 
We note that this form of the ground state is a consequence of the fact that $\hmP_{t-J_z}$ of Eq.~(\ref{eq:tJzcomposite}) is of the Rokhsar-Kivelson form [also referred to as Stoquastic or Stochastic Matrix Form (SMF) decomposable]~\cite{castelnovo2005rk, bravyi2015monte}. 
Such superoperators often appear in systems with ``classical" symmetries, i.e., where all the symmetry operators are diagonal in the product state basis, and we discuss these connections in App.~\ref{app:RKHamils}.
Note that the symmetry of the action of the super-Hamiltonian $\hmP_{t-J_z|\text{comp}}$ on $\tup$ and $\tdn$ in Eq.~(\ref{eq:tJzcomposite}) is in fact a composite spin $SU(2)$ symmetry.
As a consequence, $\hmP_{t-J_z|\text{comp}}$ has the same form when written in terms of composite spins $\trt$ and $\tlt$ in the $\hat{x}$-basis defined in Eq.~(\ref{eq:PZ2GS}), i.e., when $\tilde{\sigma} \in \{\trt, \tlt\}$ in Eq.~(\ref{eq:tJzcomposite}).
We can map the ground states of $\hmP_{t-J_z}$ to operators by noting that a composite spin configuration $\sket{\tilde{s}}_j$, $s \in \{\up, 0, \dn\}$, maps to on-site projectors $\oket{\ketbra{s}}_j$.
The ground states of $\hmP_{t-J_z}$ then correspond to the projectors onto the Krylov subspaces of the $t-J_z$ model, which are spanned by all configurations with a fixed pattern of spins. 
Thus, the ground state space of $\hmP_{t-J_z}$ is equivalent to the subspace spanned by these projectors, which one can argue is the same as the subspace spanned by the (non-orthogonal) operators $N^{\sigma_1 \sigma_2 \cdots \sigma_k}$, hence it reproduces precisely the commutant algebra $\mC_{t-J_z}$ of Eq.~(\ref{eq:tJzcomm}).
\subsection{Quantum Many-Body Scars}\label{subsec:isolatedQMBS}
We finally analyze commutants of quantum many-body scars (QMBS)~\cite{serbyn2020review, papic2021review, moudgalya2021review, chandran2022review}, using the super-Hamiltonian picture.
In \cite{moudgalya2022exhaustive} we built on several earlier works~\cite{shiraishi2017systematic, pakrouski2020many, pakrouski2021group} to show that QMBS can be defined as singlets of locally generated algebras.
By this we mean that there are local operators such that their common eigenstates are the QMBS. 
A simple example is when the QMBS eigenstates are simultaneous eigenstates of a set of strictly local operators, w.l.o.g., projectors $\{R_{[j]}\}$ acting on few sites neighboring $j$ (with range bounded by some fixed number) that annihilate the QMBS's, as proposed by Shiraishi and Mori in \cite{shiraishi2017systematic}.
We denote the common kernel of $\{R_{[j]}\}$ as
\begin{equation}
    \text{span}\{\ket{\Phi_n}\} \defn \{\ket{\psi}: R_{[j]}\ket{\psi} = 0\;\;\;\forall\;j\}, 
\label{eq:QMBSproperty}
\end{equation}
where $\{\ket{\Phi_n}\}$ is an orthonormal basis for the kernel, and these are the QMBS eigenstates.
Several known examples of QMBS, including embedded MPS states~\cite{shiraishi2017systematic}, towers of states in the spin-1 XY~\cite{schecter2019weak}, Hubbard~\cite{mark2020eta, moudgalya2020eta, pakrouski2020many, pakrouski2021group}, as well as those in the AKLT models~\cite{moudgalya2018exact, moudgalya2018entanglement, odea2020from, rozon2023broken}, can be understood to be of this form. 
For simplicity, we restrict ourselves to spin-1/2 systems.
As discussed in \cite{moudgalya2022exhaustive}, we expect the bond and commutant algebra
\begin{align}
\mA_{\scar} &= \lgen \{ R_{[j]} \sigma_k^\alpha, ~ k \in \Lambda_j, ~ \alpha = 0,x,y,z \}\rgen, \nn \\
\mC_{\scar} &= \lgen \{\ketbra{\Phi_n}{\Phi_m}\} \rgen,
\label{eq:QMBSalgebra}
\end{align}
where $\{\sigma_k^\alpha\}$ are the on-site Pauli matrices with $\sigma^0_k = \mathds{1}_k$, $R_{[j]}$ is a projector acting on few sites near $j$, and the index $k$ runs over a set of sites $\Lambda_j$ that does not have overlap with the support of $R_{[j]}$, but is in the vicinity.
\footnote{\label{ftn:scarbond} In general, the bond algebra might require generators of the form $\{R_{[j]} O_{\text{nb}[j]}\}$ where ``extensions" $O_{\text{nb}[j]}$ go over all operators over a region neighboring $\text{supp}(R_{[j]})$ but not overlapping with it, with the total range of the generators bounded by a fixed number that depends on the specific case.
However, for simplicity we will only study the simplest examples where it is sufficient to consider single-site extensions, where $O_{\text{nb}[j]} \to \{\sigma_k^\alpha, \alpha=0,x,y,z\}$.}
Note that there are many different choices of generators for $\mA_{\scar}$, which then determine the super-Hamiltonian, and we have chosen one set that is convenient for calculations.
The algebra $\mA_{\scar}$ is claimed to be the \textit{exhaustive} algebra of operators with $\{\ket{\Phi_n}\}$ as degenerate eigenstates with eigenvalue 0, i.e., any such operator can be expressed as sums of products of generators of $\mA_{\scar}$.
We will verify this claim using the super-Hamiltonian formalism for specific examples below.
In examples of towers of QMBS, a ``lifting" term~\cite{mark2020eta, odea2020from} can be added to the generators of these algebras to obtain the exhaustive algebra of operators $\{\ket{\Phi_n}\}$ as non-degenerate eigenstates, see \cite{moudgalya2022exhaustive} for more details.
Following Eq.~(\ref{eq:Pexp}), the expression for the super-Hamiltonian $\hmP_{\scar}$ reads
\begin{gather}
 \hmP_{\scar} = \sumal{j, k, \alpha}{}{[R_{[j];t} + R_{[j];b} - 2 R_{[j];t} R_{[j];b} \, \eta^\alpha \sigma_{k;t}^\alpha \sigma_{k;b}^\alpha]}, \nn \\
 = 4\sumal{j, k}{}{(R_{[j];t} - R_{[j];b})^2} + 8\sumal{j,k}{}{R_{[j];t} R_{[j];b} [1 - \ketbra{\iota}{\iota}]_{k}},
\label{eq:RSigm}
\end{gather}
where for simplicity we have assumed $R_{[j]}^T = R_{[j]}$ (equivalently, $R_{[j]}^* = R_{[j]}$) in the computational basis, we have defined $\eta^0 = \eta^x = \eta^z = 1$ and $\eta^y = -1$, and the state
\begin{equation}
    \ket{\iota}_k \defn \frac{1}{\sqrt{2}}\left(\ladket{\up}{ \up}_k + \ladket{\dn}{\dn}_k\right). 
\label{eq:idproj}
\end{equation}
Note that $\ket{\iota}_k$ is identical to the composite spin $\ket{\trt}_k$ of Eq.~(\ref{eq:PZ2GS}), and hence in the operator language it maps onto $\frac{1}{\sqrt{2}}\oket{\mathds{1}}_k$.
Since all the individual terms in Eq.~(\ref{eq:RSigm}) are positive semi-definite, any ground state $\ket{\Psi}$ of $\hmP_{\scar}$ should satisfy
\begin{align}
&(R_{[j];t} - R_{[j];b})^2\ket{\Psi} = 0 \implies  R_{[j];t}\ket{\Psi} = R_{[j];b}\oket{\Psi}\label{eq:Rcond1} \\
&R_{[j];t} R_{[j];b}[1 - \ketbra{\iota}{\iota}]_k\ket{\Psi} = 0\implies R_{[j],\ell} [1 - \ketbra{\iota}{\iota}]_k\ket{\Psi} = 0, 
\label{eq:Rcond2}
\end{align}
where $\ell \in \{t, b\}$, $k \in \Lambda_j$, and we have used Eq.~(\ref{eq:Rcond1}) in the second step of Eq.~(\ref{eq:Rcond2}).
Using these equations, it is easy to see the presence of the following two types of ground states
\begin{equation}
    \ket{G_{m,n}} \defn \ket{\Phi_m}_t\otimes \ket{\Phi_n}_b,\;\;\;\ket{G_{\mathds{1}}} = \bigotimes_k{\ket{\iota}_k}. 
\end{equation}
In the operator language, $\ket{G_{m,n}}$ maps onto $\oket{\ketbra{\Phi_m}{\Phi_n}}$ (using $\ket{\Phi_n}$ with real-valued amplitudes in the computational basis corresponding to the earlier assumption of real-valuedness of $R_{[j]}$'s in this basis), and $\ket{G_{\mathds{1}}}$ is proportional to the  global identity operator $\oket{\mathds{1}}$. 
These operators are all in the commutant $\mC_{\scar}$ shown in Eq.~(\ref{eq:QMBSalgebra}).
\subsubsection{Isolated QMBS}\label{subsubsec:isolatedQMBS}
Proving that these are the \textit{only} ground states of $\hmP_{\scar}$ is more challenging, and we do that in specific cases in App.~\ref{app:QMBSliouv}.
There we consider an example of an \textit{isolated} QMBS~\cite{moudgalya2021review}, where $R_{[j]} \defn R_j = (1 - \sigma_j^z)/2 = \ketbra{\dn}_j$, hence following Eq.~(\ref{eq:QMBSproperty}) the only scar state is 
\begin{equation}
    \ket{\Phi} \defn \ket{\up\ \up\ \dots\ \up}.
\label{eq:isoscar}
\end{equation}
We start with a set of bond generators of the form of Eq.~(\ref{eq:QMBSalgebra}) with $\Lambda_j = \{j-1, j+1\}$ and follow Eq.~(\ref{eq:RSigm}) to construct the super-Hamiltonian $\hmP_{\text{iso}}$ corresponding to this case [explicit expression in Eq.~(\ref{eq:Liouvfullexp})].
As we show in App.~\ref{app:QMBSliouviso}, in the same composite spin subsector of interest, we obtain the spin model
\begin{equation}
\hmP_{\text{iso}|\text{comp}} = 2 \sumal{j}{}{[(1 - \tZ_j) (1 - \tX_{j+1}) + (1 - \tX_j)(1 - \tZ_{j+1})]},
\label{eq:peschelemeryscar}
\end{equation}
where $\{\tX_j, \tZ_j\}$ are the composite spin Pauli operators also used in Eq.~(\ref{eq:PU1comp}).
As we show in App.~\ref{app:QMBSliouviso}, this Hamiltonian can be mapped to a frustration-free model that lies on the so-called Peschel-Emery line in the vicinity of a transverse field Ising model~\cite{peschel1981calculation,katsura2015exact, mahyaeh2018exact}.
From these studies, this is known to possess only two ground states both in PBC and OBC, which in the composite spin language read
\begin{align}
    &\ket{G_{\mathds{1}}} = \ket{\trt\ \trt\ \cdots\ \trt\ \trt} = \frac{1}{2^{\frac{L}{2}}}\oket{\mathds{1}},\nn \\
    &\ket{G_{\text{QMBS}}} = \sket{\tup\  \tup\ \cdots\ \tup\ \tup} = \oket{\ketbra{\Phi}}.
\label{eq:isoscarGS}
\end{align}
This proves the above claim and hence show the existence of the following pairs of algebras of the form of Eq.~(\ref{eq:QMBSalgebra}) that are centralizers of each other:
\begin{equation}
    \mA_{\text{iso}} \defn \lgen \{ R_j \sigma^\alpha_{j+1}, \sigma^\alpha_j R_{j+1}\} \rgen,\;\;\mC_{\text{iso}} \defn \lgen \ketbra{\Phi} \rgen. 
\label{eq:ACiso}
\end{equation}
Any operator constructed out of the generators of $\mA_{\text{iso}}$ contains the state $\ket{\Phi}$ as an eigenstate, which can be a QMBS if it is in the bulk of the spectrum~\cite{moudgalya2022exhaustive}.
Finally, in App.~\ref{subapp:superHsymms_isoQMBS} we consider relation between the ground state manifold and the inherited formal symmetries of the super-Hamiltonian in this case.
\subsubsection{Tower of QMBS}\label{subsubsec:towerofQMBS}
Similarly, as an example of a tower of QMBS, we consider $R_{[j]} \defn R_{j, j+1} = \frac{1}{4} - \vec{S}_j \cdot \vec{S}_{j+1} = \frac{1}{2}(\ket{\up\dn} - \ket{\dn\up})(\bra{\up\dn} - \bra{\dn\up})_{j,j+1}$, where the common kernel contains the entire ferromagnetic tower of states as scars:
\footnote{Note that these states are by definition the ground states of the ferromagnetic Heisenberg model discussed in App.~\ref{app:Heiscanon}, hence we have $\ket{\Phi_{n,0}} = \ket{F^z_n}$ of Eq.~(\ref{eq:HeisenbergGS}).}
\begin{equation}
	\ket{\Phi_{n,0}} \defn \frac{1}{n!} \binom{L}{n}^{-\frac{1}{2}}(S^{z-}_{\tot})^n\ket{\up\ \up\ \dots\ \up},\;\;\;0 \leq n \leq L, 
\label{eq:FMtower}
\end{equation}
where $S^{z-}_{\tot}$ is the total spin lowering operator in the $\hat{z}$ direction.
We again start with bond generators of the form of Eq.~(\ref{eq:QMBSalgebra}) with $\Lambda_j \in \{j-1, j+2\}$ and follow Eq.~(\ref{eq:RSigm}) to construct the super-Hamiltonian $\hmP_{\text{tower}}$ corresponding to this case [explicit expression in Eq.~(\ref{eq:Liouvfulltower})], and then solve for all of its ground states to show the existence of the following pairs of algebras that are centralizers of each other
\begin{align}
    \mA_{\text{tower}} &\defn \lgen \{ R_{j, j+1} \sigma^\alpha_{j+2}, \sigma^\alpha_{j-1} R_{j,j+1}\} \rgen \nn \\
    \mC_{\text{tower}} &= \lgen \{\ketbra{\Phi_{n,0}}{\Phi_{m,0}}\} \rgen. 
\label{eq:ACtower}
\end{align}
Any operator constructed out of the generators of $\mA_{\text{tower}}$ contains the ferromagnetic multiplet as degenerate QMBS~\cite{moudgalya2022exhaustive}.
These results also generalize to algebras corresponding to the case where the QMBS are non-degenerate towers. 
However, we discuss this case in App.~\ref{app:QMBStower} due to more care required in the Brownian circuit setup and analysis. 
\section{Super-Hamiltonian Spectrum: Approximate Symmetries and Slow Modes}\label{sec:liouvspec}
While the ground states of the superoperator $\hmP$ correspond to the symmetry operators in the commutant algebra, we now show that the low-energy excitations correspond to operators that are ``approximate" symmetries until long times under local dynamics.  
To make this idea precise, we show that the superoperator $\hmP$ acts as a dissipator for operators in ensemble-averaged noisy Brownian circuits.  
Several types of Brownian circuits have been studied in the literature, e.g., in the context of information scrambling~\cite{lashkari2013towards, shenker2015stringy, xu2019locality}, the SYK model~\cite{sunderhauf2019quantum, jian2021note,jian2022linear,  agarwal2022emergent, zhang2023information}, in quantum generalizations of certain classical processes such as SSEP~\cite{bauer2017stochastic, bauer2019equilibrium, bernard2020entanglement, bernard2022dynamics, swann2023spacetime}, in the context of transport with  symmetries~\cite{agarwal2023charge, ogunnaike2023unifying}; and similar connections between superoperators and ensemble-averaged dissipative dynamics of operators have been noted in some of them. 
\subsection{Algebra-based Brownian Circuits}\label{subsec:algebrabrownian}
Given a bond algebra $\mA = \lgen \{\hH_\alpha\} \rgen$, we consider an associated \textit{Brownian circuit} that consists of time evolution with the Hamiltonian $H = \sum_\alpha{J^{(t)}_\alpha \hH_\alpha}$ for a short time of $\dt$.
At each time step, $\{J^{(t)}_\alpha\}$ are chosen to be uncorrelated random variables from a fixed distribution, and we are eventually interested in ensemble-averaged quantities such as correlation functions.  
Operators under this circuit evolve in the adjoint language as
\begin{align}
\oket{\hO(t + \dt)} &= e^{i \sum_{\alpha}{J_\alpha^{(t)} \hmL_\alpha \dt}} \oket{\hO(t)} \nn \\
&= \oket{\hO(t)} + i \dt \sumal{\alpha}{}{J_\alpha^{(t)} \hmL_\alpha \oket{\hO(t)}} \nn \\
&- \frac{(\dt)^2}{2} \sumal{\alpha,\beta}{}{J_\alpha^{(t)} J_\beta^{(t)} \hmL_\alpha \hmL_\beta \oket{\hO(t)}} + \mO((\dt)^3),
\label{eq:opevolutionliouv}
\end{align}
where $\hmL_\alpha$ is the Liouvillian corresponding to $\hH_\alpha$, defined in Eq.~(\ref{eq:commliouv}).
If $\{J_\alpha^{(t)}\}$ at different times $t$ are chosen to be uncorrelated random variables, and we are only interested in ensemble-averaged quantities that are linear in the operator $\hO(t)$, e.g.,  correlation functions such as $\textrm{Tr}(\hA^\dagger \hO(t) \rho_0)$ for some fixed operator $\hA$ and density matrix $\rho_0$, we can average the operator over the probability distribution of the random variables directly.
Using Gaussian distributions for $J^{(t)}_\alpha$'s: 
\begin{equation}
P(\{J^{(t)}_\alpha\}) \sim \exp(-\sumal{t, \alpha}{}{\frac{(J^{(t)}_\alpha)^2}{2 \sigma^2_\alpha}}),\;\;\sigma^2_\alpha = \frac{2\kappa_\alpha}{\dt},
\label{eq:Jalphapdf}
\end{equation}
where $\sigma^2_\alpha$ is the variance of $J^{(t)}_\alpha$, we obtain
\begin{equation}
	 \oket{\overline{\hO(t + \dt)}} =  \oket{\overline{\hO(t)}} - \dt \sumal{\alpha}{}{\kappa_\alpha \hmL^\dagger_\alpha \hmL_\alpha \oket{\overline{\hO(t)}}} + \mO((\dt)^2),
\label{eq:Oavgfinal}
\end{equation}
where $\overline{\cdots}$ denotes the average over all the $\{J^{(t)}_\alpha\}$ performed independently at all times $t$, and we have used the properties that $\overline{J^{(t)}_\alpha} = 0$, $\overline{J^{(t)}_\alpha J^{(t')}_\beta} = \sigma^2_\alpha \delta_{\alpha,\beta} \delta_{t, t'}$ and also the Hermiticity of the super-Hamiltonian with respect to the Frobenius scalar product of Eq.~(\ref{eq:overlapAB}).
Note that in the continuous time limit when $\dt \rt 0$, the distribution of Eq.~(\ref{eq:Jalphapdf}) is referred to as shot noise~\cite{majumdar2005brownian}, and the Brownian circuits can be understood using the language of stochastic processes and It\^{o} calculus~\cite{grigoriu2002stochastic, schilling2014brownian}, which we will not discuss here. 
For our purposes, it is sufficient to note that in the continuous time limit we obtain
\begin{align}
	\frac{d}{dt} \oket{\overline{\hO(t)}} &= -\sumal{\alpha}{}{\kappa_\alpha \hmL^\dagger_\alpha\hmL_\alpha}\oket{\overline{\hO(t)}},\nn \\
 \implies\;\oket{\overline{\hO(t)}} &= e^{-\kappa \hmP t}\oket{\hO(0)}, 
\label{eq:Oavgcontinuum}
\end{align}
where for simplicity we assume $\kappa_\alpha = \kappa$, we use that $\overline{\hO(0)} = \hO(0)$ since it is independent of all the $J^{(t)}_\alpha$'s, and $\hmP$ is defined in Eqs.~(\ref{eq:psdLiouv}) and (\ref{eq:Pexp}).
Ensemble-averaged quantities linear in the time-evolved operator $\hO(t)$, e.g., two-point correlation functions with fixed second operator, can then be expressed in terms of the ensemble-averaged operator $\oket{\overline{\hO(t)}}$.
We note in passing that ensemble-averages of higher-order functionals of the operator $\hO(t)$, e.g., higher-point correlation functions or R\'{e}nyi entropies, can also be studied using similar methods; these usually involve studying effective Hamiltonians on more copies/replicas of the original Hilbert space, see e.g., \cite{sunderhauf2019quantum, agarwal2022emergent} for discussions of such techniques.
\subsection{Correlation Functions}
With this understanding, ensemble-averaged correlation functions can also be studied using the eigenstates and spectrum of $\hmP$. 
The two-point dynamical correlation functions of operators $\hA$ and $\hB$ at infinite temperature, are defined as\footnote{We remind readers again that the definition of $\obraket{\hB}{\hA}$ differs by a factor of $D$ from the commonly-used definition.}
\begin{equation}
    C_{\hB, \hA}(t) \defn \frac{1}{D}\text{Tr}(\hB(0)^\dagger \hA(t)) = \frac{1}{D}\obraket{\hB(0)}{\hA(t)},
\label{eq:corrdefn}
\end{equation}
where $D \defn \text{Tr}(\mathds{1}) = \dim(\mH)$. 
After ensemble-averaging this can be written in terms of eigenstates of $\hmP$ as
\begin{align}
\overline{C_{\hB, \hA}(t)} &= \frac{1}{D}\obraket{\hB(0)}{\overline{\hA(t) }} = \frac{1}{D}\obra{\hB(0)}e^{-\kappa\hmP t}\oket{\hA(0)}\nn \\
&= \frac{1}{D}\sumal{\mu}{}{\obraket{\hB}{\lambda_\mu}\obraket{\lambda_\mu}{\hA} e^{-\kappa p_\mu t} },\nn \\
&=\frac{1}{D}\sumal{E}{}{e^{- \kappa E t} \sumal{\nu_E = 1}{N_E}{\obraket{\hB}{\lambda_{\nu_E}(E)}\obraket{\lambda_{\nu_E}(E)}{\hA}}}, 
\label{eq:2ptcorrelationavg}
\end{align}
where $\{\oket{\lambda_\mu}\}$ are the orthonormal eigenstates of $\hmP$ with eigenvalues $\{p_\mu\}$, with real $p_\mu \geq 0$ since $\hmP$ is positive semi-definite. 
In the last step we have reorganized the sum in terms of energy eigenvalues of $\hmP$ and their degeneracies, where $\{\oket{\lambda_{\nu_E}(E)}\}$ are eigenstates with eigenvalue $E$, and $N_E$ is the degeneracy at that energy; this form is convenient to work with in examples we study in Sec.~\ref{sec:slowmodes}.
As $t \rt \infty$ for a finite system size $L$, we obtain the equilibrium value of the ensemble-averaged two point correlation function,
\begin{equation}
\overline{C_{\hB, \hA}(\infty)} \defn \lim_{t \rt \infty}{\overline{C_{\hB, \hA}(t)}} = \frac{1}{D}\sumal{\mu}{}{\delta_{p_\mu, 0}\obraket{\hB}{\lambda_\mu}\obraket{\lambda_\mu}{\hA}}.
\label{eq:2ptcorrlimit}
\end{equation}
The information of the late-time transport associated with the symmetry is stored in the nature of the approach to the infinite-time quantity, which for a finite system is given by $\overline{C_{\hB, \hA}(t)} - \overline{C_{\hB, \hA}(\infty)}$.
However, for such purposes we are usually interested in the $L \rt \infty$, in which case we usually have $\overline{C_{\hB, \hA}(\infty) \rt 0}$, and it is sufficient to focus on $\overline{C}_{\hB, \hA}(t)$.
\subsection{Autocorrelation Functions and Mazur Bounds}
With the understanding of correlation functions, we illustrate a novel interpretation for the Mazur bounds of autocorrelation functions, studied extensively in the literature~\cite{mazurbound1969, suzukiequality1971, ilievski2013thermodynamic, prosen2014quasilocal, zadnik2016quasilocal, dhar2020revisiting, rakovszky2020statistical, moudgalya2021hilbert}.
The autocorrelation function of an operator $\hA$ is defined as
\begin{equation}
    C_{\hA}(t) \defn \frac{1}{D}\obraket{\hA(0)}{\hA(t)},
\end{equation}
and using Eq.~(\ref{eq:2ptcorrelationavg}) its ensemble-averaged value can be written as
\begin{align}
	\overline{C_{\hA}(t)} &= \frac{1}{D}\obra{\hA(0)}e^{-\kappa\hmP t}\oket{\hA(0)} = \frac{1}{D}\sumal{\mu}{}{|\obraket{\lambda_\mu}{\hA}|^2 e^{-\kappa p_\mu t} }\nn \\
    &=\sumal{E}{}{e^{- \kappa E t} \underbrace{\frac{1}{D}\sumal{\nu_E = 1}{N_E}{|\obraket{\lambda_{\nu_E}(E)}{\hA}|^2}}_{W_{\hA}(E) \defn}},
\label{eq:correlationavg}
\end{align}
where in the second line we have expressed the sum in terms of energies (possibly degenerate) and corresponding eigenstates of $\hmP$, similar to Eq.~(\ref{eq:2ptcorrelationavg}).
Note that $W_{\hA}(E)$ can also be viewed as the weight of the state $\oket{\hA}$ in the subspace spanned by energy eigenstates $\{\oket{\lambda_{\nu_E}(E)}\}$, i.e., the degenerate subspace of eigenstates with eigenvalue $E$.
This shows that only eigenstates $\oket{\lambda_\mu}$ that have non-zero overlap with the operator $\oket{\hA}$ contribute to $\overline{C_{\hA}(t)}$.
At $t = 0$, this is simply the total weight of the initial operator $\hA(0)$, given by
\begin{equation}
    \overline{C_{\hA}(0)} = \frac{1}{D}\obraket{\hA}{\hA} = \sum_E{W_{\hA}(E)}.
\label{eq:opnorm}
\end{equation}
For operators of interest like a local operator, this starts at a value of $O(1)$, e.g., for a Pauli matrix in a spin-$\frac{1}{2}$ system, we have $\obraket{Z_j}{Z_j}/D = 1$ in our definition of the operator scalar product.
As $t \rt \infty$ for a finite system size $L$, its equilibrium value is
\begin{equation}
\overline{C_{\hA}(\infty)} \defn \lim_{t \rt \infty}{\overline{C_{\hA}(t)}} = \frac{1}{D}\sumal{\mu}{}{\delta_{p_\mu, 0}  
|\obraket{\lambda_\mu}{\hA}|^2}.
\label{eq:autocorrlimit}
\end{equation}
Noting that $\{\oket{\lambda_\mu}\}_{p_\mu = 0}$,  i.e., the ground states of $\hmP$ among the above eigenstates, is an orthonormal basis for the commutant algebra $\mC$, the R.H.S. of Eq.~(\ref{eq:autocorrlimit}) is precisely the Mazur bound~\cite{mazurbound1969, dhar2020revisiting, moudgalya2021hilbert}.
Hence the Mazur bound can also be interpreted as the saturation value of the ensemble-averaged autocorrelation function in Brownian circuits, in addition to the conventional interpretation as a lower bound for the time-averaged autocorrelation function for a static Hamiltonian. 
Note that there is an extra factor of $\frac{1}{D}$ in Eq.~(\ref{eq:autocorrlimit}) due to the different definition of operator overlap in Eq.~(\ref{eq:overlapAB}) from that commonly used in the related literature.
The nature of decay of the autocorrelation to the Mazur bound of Eq.~(\ref{eq:autocorrlimit}) also reveals information about the slowest operators or hydrodynamic modes in the system.
The deviation of the autocorrelation function from the Mazur bound for a finite system is $\overline{C_{\hA}(t)} - \overline{C_{\hA}(\infty)}$.
Unlike the Mazur bound, which is usually computed for finite $L$, in the decay of autocorrelations we are usually interested in the limit $L \rt \infty$ and finite but long times $t$.
Since the Mazur bound $\overline{C_{\hA}(\infty)}$ vanishes in the $L \rt \infty$ limit for all the examples we study, we can still restrict our study to $\overline{C_{\hA}(t)}$.
In this limit, it is clear that the behavior is dominated by the nature of the low-energy excitations of $\hmP$.
However, the precise behavior also depends on their degeneracies as well as the weights of the operator of interest on these eigenstates.
If $\hmP$ is gapped in the thermodynamic limit, with gap $E_{\min} \defn \min_{\mu}\ E_\mu > 0$, using Eqs.~(\ref{eq:opnorm}) and (\ref{eq:correlationavg}), and assuming $\overline{C_{\hA}(\infty) = 0}$ in the thermodynamic limit, we obtain that $\overline{C_{\hA}(t)} \leq \overline{C_{\hA}(0)} \exp(-\kappa E_{\min} t)$, hence it decays exponentially fast, with a rate proportional to the inverse gap. 
When $\hmP$ is gapless, i.e., when $E_{\min} \rt 0$ as $L \rt \infty$, we need to be more careful to derive the form of decay.
Since $\hmP$ is a local superoperator, the low-energy excited states are usually quasiparticles such as spin-waves with dispersion relations of the form $E(k) \sim k^\gamma$.
If the full weight of the operator $\hA$, or at least a majority of it lies within this quasiparticle band of states, we can heuristically write Eq.~(\ref{eq:correlationavg}) as $\overline{C_{\hA}(t)} \sim \int \mathrm{d}k\ e^{-\kappa E(k) t}  \sim t^{-\frac{1}{\gamma}}$.
As we will see with concrete examples in the next section, for many conventional symmetries such as $U(1)$ or $SU(2)$, we obtain $E(k) \sim k^2$, and hence we obtain a power-law decay of the autocorrelation function, i.e., $\overline{C_{\hA}(t)} \sim \frac{1}{\sqrt{t}}$.  
However, note that this argument is not rigorous, and if the weight of the operator does not lie fully in the lowest quasiparticle band, this argument may not lead to the correct form of the late-time $\overline{C_{\hA}(t)}$, and we demonstrate this with an example of Hilbert space fragmentation in the next section. 
Hence, when $\hmP$ is gapless, it is important to carefully study the nature of the operator weight distribution $W_{\hA}(E)$ in Eq.~(\ref{eq:correlationavg}) across the spectrum.
\subsection{Approximate Block-Diagonalization}\label{subsec:approxblock}
While exact symmetries lead to exact block diagonalizations of operators with those symmetries~\cite{moudgalya2021hilbert, moudgalya2022from}, it is natural to ask if approximate symmetries lead to approximate block diagonalizations of operators.
While we have not been able to establish this in complete generality, here we nevertheless make some simple observations in this direction.
Given a super-Hamiltonian $\hmP$ of the form of Eq.~(\ref{eq:psdLiouv}),  and a subspace $\mK$ of the Hilbert space with dimension $D_{\mK}$,  the ``energy" of its projector $\Pi_{\mK}$ under $\hmP$ is a measure of how connected this subspace is to the rest of the Hilbert space under the action of the terms $\{\hH_\alpha\}$. 
To see this,  note that the energy of $\Pi_{\mK}$ is given by
\begin{align}
\varepsilon_{\mK} &\defn \frac{\obra{\Pi_{\mK}} \hmP \oket{\Pi_{\mK}}}{\obraket{\Pi_{\mK}}{\Pi_{\mK}}} \nn \\
&= \frac{2}{D_{\mK}}\sum_{\alpha}{(\text{Tr}[\hH_\alpha^2 \Pi_{\mK} ] - \text{Tr}[\Pi_{\mK} \hH_\alpha \Pi_{\mK} \hH_\alpha])} \nn \\
&= \frac{2}{D_{\mK}}\sum_{\alpha}{\text{Tr}[\Pi_{\mK} \hH_\alpha \Pi_{\mK^\perp} \hH_\alpha]},
\label{eq:subspaceenergy}
\end{align}
where $\Pi_{\mK^\perp}$ is the projector onto the subspace orthogonal to $\mK$.
The last line in Eq.~(\ref{eq:subspaceenergy}) is precisely the sum of the norms of the ``block off-diagonal" parts of the $\hH_\alpha$'s, i.e., the sums of squares of matrix elements between states in $\mK$ and $\mK^\perp$. 
This is consistent with the fact that if $\mK$ is a symmetry sector or a closed Krylov subspace, $\Pi_{\mK}$ has zero energy under $\hmP$, which implies that $\mK$ is completely disconnected from the rest of the Hilbert space. 
Hence the existence of a basis in which all the $\hH_\alpha$'s have an ``approximate block diagonal structure" (in the sense that $\varepsilon_{\mK}$ of Eq.~(\ref{eq:subspaceenergy}) is small) implies the existence of low-energy excitations in the corresponding super-Hamiltonian. 
Likewise, the existence of a \textit{projector} that has a small energy under $\hmP$ implies the existence of a basis in which $\hH_\alpha$'s (and hence the Hamiltonians formed by their linear combinations) have approximate block-diagonal forms. 
However, the existence of general low-energy excitations of the super-Hamiltonian, which is what we show for a number of cases in Sec.~\ref{subsec:gapless}, does not itself guarantee the existence of low-energy projectors in general. 
We defer a careful exploration of this issue for future works. 
\section{Examples of Low-Energy Excitations}\label{sec:slowmodes}
In this section, we construct the low-energy excited states of the super-Hamiltonian $\hmP$ for several of the examples discussed in Sec.~\ref{sec:examples} and discuss corollaries for dynamical properties.
We illustrate examples of gapped and gapless super-Hamiltonians separately, since they lead to qualitatively different physics.
Again, we restrict explicit illustrations to one-dimensional systems, but the results carry over to higher dimensional systems \textit{mutatis mutandis}.
Note that while the low-energy spectra depend on the precise choice of generators of the bond algebras $\mA$ that leads to the super-Hamiltonians $\hmP$, qualitative aspects should not depend on these details despite some choices leading to extraneous features that allow more tractability, as we argue in App.~\ref{app:extraneous_features}.
\subsection{Gapped Super-Hamiltonians}\label{subsec:gapped}
\subsubsection{\texorpdfstring{$\mbZ_2$}{} Symmetry}
We begin by considering example from Sec.~\ref{subsubsec:Z2}.
Since $\hmP_{\mbZ_2}$ of Eq.~(\ref{eq:PZ2}) is a commuting projector Hamiltonian, it is easy to see that it is gapped, since the lowest energy excitations can be constructed only by ``unsatisfying" one of the terms $Z_{j;t} Z_{j;b}$ or $X_{j;t} X_{j;b} X_{j+1;t} X_{j+1;b}$.
For example, one of the lowest excited state can be constructed by ``destroying" a single composite spin, say at the rung $j_0$, by acting the operator $X_{j_0;b}$ on either of the ferromagnetic ground states $\ket{G_{\rt}}$ or $\ket{G_{\lt}}$ of Eq.~(\ref{eq:PZ2GS}).
The resulting excitation excitation ``violates" the $Z_{j_0;t} Z_{j_0;b}$ term, and hence the energy of the excitation from Eq.~(\ref{eq:PZ2}) is $4$.
In the original operator language, these excitations can be written as $\oket{X_{j_0}}$ and $\oket{i Y_{j_0}\prod_{j \neq j_0}{Z_j}}$ respectively (ommitting numerical factors), which commute with all $\{X_j X_{j+1}\}$ terms and all $\{Z_j\}$ in the generators of the bond algebra $\mA_{\mbZ_2}$, except with $Z_{j_0}$.
Hence they can be also viewed as ``local charge insertion" operators for the original $\mbZ_2$ symmetry generated by $\prod_j{Z_j}$.
Another type of excitation can be constructed by violating one of the $X_{j;t} X_{j;b} X_{j+1;t} X_{j+1;b}$ terms, which corresponds to a domain wall between the two ferromagnetic configurations $\ket{G_{\lt}}$ and $\ket{G_{\rt}}$.
For a system with OBC, such excitations have the same energy as the ones in the previous paragraph, whereas for PBC, they have twice the energy since domain walls necessarily appear in pairs in this case.
In the operator language, considering OBC for simplicity, we get operators of the form  $\oket{\prod_{k \leq j_0} Z_{k}}$ and $\oket{\prod_{k > j_0} Z_{k}}$, which can be viewed as operators that create a charge of the ``dual $\mbZ_2$ symmetry" obtained by applying a Kramers-Wannier duality transformation on the original system~\cite{chatterjee2022algebra}.
These are operators that commute with all the generators of $\mA_{\mbZ_2}$ except $X_{j_0} X_{j_0+1}$.
Both types of the low-energy excitations can be viewed as operators that create either a $\mbZ_2$ charge or a dual $\mbZ_2$ charge discussed in \cite{chatterjee2022algebra} (note that their convention has $Z$ and $X$ interchanged compared to our Sec.~\ref{subsubsec:Z2}).
In the ``holographic" view of symmetry in 1d~\cite{chatterjee2022algebra, moradi2022topological}, these charge creation operators in turn correspond to $e$ or $m$ particles of the 2d toric code, and it would be interesting to make further connections between their view of symmetries and our super-Hamiltonian perspective.
Finally, we note that the gapped nature following from the commuting-projector property of $\hmP_{\mbZ_2}$ extends to super-Hamiltonians for all Pauli string algebras, which usually correspond to discrete symmetries.
Hence this feature also carries over to higher dimensional super-Hamiltonians corresponding to such symmetries, which can be interpreted as commuting projector Hamiltonians on a bilayer geometry.
\subsubsection{Isolated QMBS}
\label{subsubsec:gappedisolatedQMBS}
The case of an isolated QMBS, discussed in Sec.~\ref{subsec:isolatedQMBS}, also gives rise to a gapped super-Hamiltonian $\hmP_{\text{iso}}$ corresponding to the algebra $\mA_{\text{iso}}$ of Eq.~(\ref{eq:ACiso}).
This super-Hamiltonian, restricted to the composite spin sector, is shown in Eq.~(\ref{eq:peschelemeryscar}).
An intuition for the gap is that $\hmP_{\text{iso}}$ has exactly two linearly independent ground states shown in Eq.~(\ref{eq:isoscarGS}), and there is no natural ``smooth" low-energy excitation on top of the two ground states. 
While this model is not solvable, it is frustration-free and has been studied in the earlier literature, and has been proven to be gapped with OBC~\cite{katsura2015exact}.
We also numerically find evidence that it is gapped with PBC, see App.~\ref{app:QMBSliouv} for more details.
We expect that similar phenomenology holds for other examples of isolated QMBS, and it would be interesting to prove this in general, perhaps using some of the mathematical physics methods developed for such purposes~\cite{knabe1988energy, nachtergaele1996spectral, gosset2016local, kastoryano2018divide}.
\subsection{Gapless Super-Hamiltonians}\label{subsec:gapless}
We now move on to demonstrate interesting examples of gapless super-Hamiltonians, which lead to slowly relaxing hydrodynamic modes associated with the symmetry. 
\subsubsection{\texorpdfstring{$U(1)$}{} Symmetry}\label{subsec:U1gapless}
We start with the case of $U(1)$ symmetry.
As discussed in Sec.~\ref{subsec:globalsymmetry}, the ground states of the super-Hamiltonian $\hmP_{U(1)}$ of Eq.~(\ref{eq:PU1}) are in the composite-spin sector, obtained by minimizing the energy under the rung terms $\{1 - Z_{j;t} Z_{j;b}\}$, defined on rungs of the ladder in Eq.~(\ref{eq:compspins}).
Since each of the rung terms $\{1 - Z_{j;t} Z_{j;b}\}$ commutes with all the terms in Eq.~(\ref{eq:PU1}), any state not in the composite spin sector has an energy of at least $4$, hence we expect the lowest excited states of $\hmP_{U(1)}$ to be within the composite spin sector.
Within this sector, the effective Hamiltonian maps onto the ferromagnetic Heisenberg model of Eq.~(\ref{eq:PU1comp}), and as discussed in App.~\ref{app:Heiscanon} the lowest energy eigenstates are spin-waves on top of the ferromagnetic multiplet, shown in Eq.~(\ref{eq:Heisenbergspinwavestates}).
The energies of these states are given by $32\sin^2\left(\frac{k}{2}\right)$, where $k$ is quantized as shown in Eqs.~(\ref{eq:HeisenbergPBC}) or (\ref{eq:HeisenbergOBC}), and hence the gap of $\hmP_{U(1)}$ scales as $\sim 1/L^2$, showing that it is gapless.
When mapped to the operator language, using Eq.~(\ref{eq:ladderopcorr}), the spin-wave states of Eq.~(\ref{eq:Heisenbergspinwavestates}) translate to
\begin{equation}
    \oket{\lambda_{m, k}} = \frac{1}{\sqrt{2^L \mM_{m,k}}}\sumal{j_1 < \cdots < j_m}{}{\left(\sumal{\ell = 1}{m}{c_{j_{\ell},k}}\right)\oket{Z_{j_1} \cdots Z_{j_m}}},
\label{eq:Heisenbergspinwavesoperator}
\end{equation}
where $c_{j,k}$ is the form of the orbitals given in Eqs.~(\ref{eq:HeisenbergPBC}) and (\ref{eq:HeisenbergOBC}) for PBC and OBC, and $\mM_{m,k}$ is a normalization factor shown in Eq.~(\ref{eq:spinwavenorms}).
With the exact form of the excited states, we can compute the ensemble-averaged correlation functions of local operators $\overline{C_{\hB, \hA}(t)}$ and $\overline{C_{\hA}(t)}$, shown in Eqs.~(\ref{eq:2ptcorrelationavg}) and (\ref{eq:autocorrlimit}). 
We study correlators of the on-site operators $\{Z_j\}$, which in the composite spin language read
\begin{equation}
\oket{Z_j} = 2^{\frac{L}{2}}\ket{\trt\ \cdots\ \trt\ \tlt_j\ \trt\ \cdots\ \trt}
\label{eq:Zjcomp}
\end{equation}
Using this expression it is clear that all the weight of this operator belongs to the Hilbert space spanned by the one-spin-flip spin-waves $\{\oket{\lambda_{1,k}}\}$ of Eq.~(\ref{eq:Heisenbergspinwavesoperator}), including the $k = 0$ case which belongs to the ground state manifold (see App.~\ref{app:Heiscanon} for more details), and $\oket{Z_j}$ is orthogonal to every other eigenstate of $\hmP_{U(1)}$, both in and outside the composite spin sector.
Hence the full time-evolution of ensemble-averaged correlation functions $\overline{C_{Z_j, Z_{j'}}(t)}$ and $\overline{C_{Z_j}(t)}$ can be computed just using these one-spin-flip spin-waves! 
The overlap of these operators on the single spin-wave eigenstates reads
\begin{equation}
    \obraket{\lambda_{1, k}}{Z_j} = \frac{c_{j,k}^* 2^{L/2}}{\sqrt{\mM_{1, k}}},\;\;\;  \frac{1}{D}|\obraket{\lambda_{1, k}}{Z_j}|^2 = \frac{|c_{j,k}|^2}{\mM_{1,k}} = |c_{j,k}|^2.
\label{eq:MZjU(1)}
\end{equation}
When $k = 0$, the last expression is precisely the Mazur bound, $\overline{C_{Z_j}(\infty)} = 1/L$, which was also computed in \cite{moudgalya2021hilbert}.
From now, we specialize to PBC for simplicity.\footnote{Results in OBC can be obtained similarly and agree with PBC when $j$ is very far from the boundary, but one can in principle study also behavior at any distance from the boundary, which we do not pursue here.}
Using Eqs.~(\ref{eq:correlationavg}) and (\ref{eq:MZjU(1)}), and the PBC parameters discussed in App.~\ref{app:Heiscanon}, the time-dependence of the autocorrelation reads 
\begin{align}
\overline{C_{Z_j}(t)} &= \frac{1}{L}\sumal{k}{}{e^{-32 \kappa \sin^2\left(\frac{k}{2}\right) t}} =  \int_{0}^{2\pi}{\frac{\mathrm{d}k}{2\pi}\ e^{-16 \kappa [1 - \cos(k)] t}} \nn \\
&= e^{-16 \kappa t} \mathcal{I}_0(16 \kappa t) \approx \frac{1}{\sqrt{2\pi \cdot 16\kappa t}}\;\;\text{at large}\;t,
\label{eq:CZjt}
\end{align}
where $k$ in the first sum is quantized as $2\pi n/L$ for $0 \leq n \leq L-1$ and we have taken the $L \rt \infty$ limit to go from the sum to the integral, and $\mathcal{I}_0$ is the modified Bessel function of the first kind.
Note that the late-time dependence can also be easily recovered by simply substituting the ``slow mode" dispersion relation that $\sim k^2$, which directly leads to $\overline{C_{Z_j}(t)} \sim \frac{1}{\sqrt{t}}$, consistent with diffusive systems.
It is also possible to recover the Gaussian spatial spreading nature of the two-point correlation of Eq.~(\ref{eq:2ptcorrelationavg}) which in the $L \rt \infty$ limit leads to the integral
\begin{align}
\overline{C_{Z_j, Z_{j'}}(t)} &= \int_0^{2\pi}{\frac{\mathrm{d}k}{2\pi}\ e^{-16 \kappa [1 - \cos(k)] t} e^{i k (j - j')}} \nn \\
& = e^{16 \kappa t} \mathcal{I}_{j-j'}(16\kappa t) \\ 
& \overset{\kappa t \gg 1}{\approx} \int_{-\infty}^{\infty} \frac{{\mathrm{d}k}}{2\pi}\ e^{-8\kappa k^2 t} e^{i k (j - j')} = \frac{e^{-\frac{(j-j')^2}{32\kappa t}}}{\sqrt{32 \pi \kappa t}},
\label{eq:corrU1}
\end{align}
where in the second line ${\mathcal{I}}_\nu$ is the modified Bessel function of the first kind of index $\nu=j-j'$, while the last line shows behavior for $\kappa t \gg 1$.
The Gaussian nature of the correlation function in Eq.~(\ref{eq:corrU1}), with a variance growing linearly in $t$ is consistent with the prediction of diffusion.
It is easy to see that similar results hold in higher dimensions, and the complete weight of the local spin operator $\oket{Z_j}$ is within the one-spin-flip spin wave band. This allows us to compute the ensemble-averaged correlation functions, recovering the standard results expected from diffusion, e.g., the decay of autocorrelations as $\sim t^{-\frac{d}{2}}$ in $d$ dimensions.
Moreover, as discussed in App.~\ref{app:extraneous_features}, the super-Hamiltonian can be different if one starts with a different set of generators of the bond algebra $\mA_{U(1)}$, and it need not be solvable or $SU(2)$-symmetric.
However, the ground states are always the same by construction,  and on physical grounds we expect the $\sim k^2$ dispersion of the low-energy excitations to be the same as long as the generator set is chosen to be local,  since it still represents a $U(1)$ symmetric Brownian circuit.
Indeed, the gap in longer-range Hamiltonians with this set of ground states was studied numerically in \cite{moudgalya2021spectral} and was shown to be consistent with $\sim 1/L^2$ scaling for a system of size $L$, and the $\sim k^2$ form of low-energy excitations was argued based on mappings to a field theory.
\footnote{Additional evidence for this phenomenon comes from the construction of explicit trial states for low-energy excitations above the ferromagnetic ground states of Eq.~(\ref{eq:HeisenbergGSop}), which exactly map to the states of Eq.~(\ref{eq:HeisenbergGS}). 
Since the ferromagnetic states form towers of QMBS as discussed in Sec.~\ref{subsubsec:towerofQMBS}, the trial states are the so-called asymptotic scars~\cite{gotta2023asymptotic} reviewed in Sec.~\ref{subsec:asymptoticQMBS}, which are states orthogonal to the ferromagnetic multiplet with energy variance $\sim 1/L^2$.
Asymptotic QMBS were explicitly constructed in Ref.~\cite{gotta2023asymptotic} in a particular class of spin-1 XY-like models hosting an exact QMBS tower qualitatively similar to the ferromagnetic multiplet, and the same arguments and trial states hold when the QMBS states are actually the frustration-free ground states rather than in the middle of the spectrum. 
These trial states then provide upper bounds for the nearby excitation energies dispersing as $\sim k^2$.
Alternately, the trial states can also be constructed using standard tools such as the Feynman-Bijl ansatz on top of the ferromagnetic multiplet~\cite{ogunnaike2023unifying}.
}
\subsubsection{Hilbert Space Fragmentation}\label{subsubsec:fragmentation}
We now discuss the low-energy excitations of the super-Hamiltonian $\hmP_{t-J_z}$ of Eq.~(\ref{eq:tJzliouv}) in the case with $t-J_z$ fragmentation, and we show that its low-energy excitations can be used to understand slow modes and late-time behavior of the $t-J_z$ model. 
Due to the conservation of the pattern of spins, the $t-J_z$ model at late-times is expected to exhibit \textit{tracer diffusion} for typical initial states~\cite{barma1994slow, feldmeier2022emergent}, which is the phenomenology exhibited in one-dimension by a single ``tracer" particle that is not allowed to cross its neighbors~\cite{harris1965diffusion, levitt1973dynamics, alexander1978diffusion}. 
This leads to a $\sim t^{-\frac{1}{4}}$ prediction for the nature of decay of spin autocorrelation functions at late-times.\footnote{However, note that in systems exhibiting pattern conservation (or ``irreducible strings"~\cite{dhar1993conservation, menon1997conservation}), the autocorrelation function for different kinds of initial states can show drastically different behaviors~\cite{barma1994slow, menon1995irreducible, barma1997deposition}.} 
To determine the low-energy excitations, it is sufficient to work in the composite spin sector defined by spins of the form of Eq.~(\ref{eq:tJzsuperspins}), since there is necessarily a gap to other sectors due to the first term in Eq.~(\ref{eq:tJzliouv}).
The super-Hamiltonian restricted to composite spins, $\mP_{t-J_z|\text{comp}}$ of Eq.~(\ref{eq:tJzcomposite}), in the ket-bra notation then closely resembles the Heisenberg Hamiltonian $\mP_{U(1)|\text{comp}}$ of Eq.~(\ref{eq:PU1comp}).
In fact, apart from the degeneracies, the spectra of these Hamiltonians are identical. 
To see that, we first note that $\hmP_{t-J_z|\text{comp}}$ acting on states with $n$ ``spinful particles" in OBC preserves the pattern of spins $\tilde{\sigma}_1, \tilde{\sigma}_2, \dots, \tilde{\sigma}_n$ on these particles ordered from left to right.
Working in a sector with such a fixed pattern, the action of $\hmP_{t-J_z|\text{comp}}$ does not differentiate in any way between the spins $\tup$ and $\tdn$ on these particles, and we can simply label the states in this sector by marking each particle location as ``state'' $\tilde{1}$, obtaining a Hilbert space of $L$ qubits $\tilde{1}/\tzero$ with precisely $n$ qubits in state $\tilde{1}$.
The $\tzero$ and $\tone$ can further be mapped onto the Hilbert space of spins $\up$ and $\dn$, and the Hamiltonian $\hmP_{t-J_z|\text{comp}}$ of Eq.~(\ref{eq:tJzcomposite}) after these identifications precisely maps to the ferromagnetic Heisenberg model $H_{\text{Heis}}$ of Eq.~(\ref{eq:heisenbergcanon}), up to an overall factor. 
Once the eigenstates of the Heisenberg model are written in terms of the spins $\up$'s and $\dn$'s, they can first be mapped to $\tzero$'s and $\tone$'s respectively, and then the $\tone$'s can be replaced by the specific pattern of spins in the given sector in the $t-J_z$ model, i.e., $\tup$'s and $\tdn$'s to obtain an eigenstate of $\hmP_{t-J_z|\text{comp}}$, and hence that of $\hmP_{t-J_z}$. 
Hence any eigenstate of the Heisenberg model with $n$ $\up$'s and $(L-n)$ $\dn$'s corresponds to $2^n$ degenerate eigenstates of $\hmP_{t-J_z}$. 
This also maps the $L+1$ ground states of the ferromagnetic Heisenberg model in Eq.~(\ref{eq:HeisenbergGS}) to the total of $\sum_{n=0}^L 2^n = 2^{L+1} - 1$ ground states of $\hmP_{t-J_z|\text{comp}}$ in Eq.~(\ref{eq:PtJzGS}).
Morever, due to the composite spin $SU(2)$ symmetry of $\hmP_{t-J_z|\text{comp}}$ discussed in Sec.~\ref{subsec:HSF}, this mapping also holds in terms of composite spin states $\{\trt, \tlt\}$ defined in Eq.~(\ref{eq:PZ2GS}) instead of $\{\tup, \tdn\}$; this is useful in the discussion in App.~\ref{app:tracerdiffusion}.
The entire list of mappings can be summarized as follows:
\begin{equation}
    (\tup/\tdn, \tzero)/(\trt/\tlt, \tzero) \longleftrightarrow (\tone, \tzero) \longleftrightarrow (\dn, \up)/(\leftarrow, \rightarrow),
\label{eq:tJzHeismapsummary}
\end{equation}
where the leftmost states are in the $t-J_z$ composite spin Hilbert space, and the rightmost ones are in the spin-1/2 Hilbert space.
To understand the behavior of autocorrelation functions, we restrict to a local operator $\hA = S^z_j$, defined in Eq.~(\ref{eq:tJzbond}).
We then need to compute the behavior of $\overline{C_{S^z_j}(t)}$, which according to Eq.~(\ref{eq:correlationavg}) requires the computation of the overlaps between $S^z_j$ and the eigenstates of $\hmP_{t-J_z}$. 
First, since the ground states of $\hmP_{t-J_z}$ are precisely the operators of the commutant $\mC_{t-J_z}$, the total weight of $\oket{S^z_j}$ on all these ground states is the Mazur bound.
This bound was computed exactly for the spin operator $S^z_j$ in \cite{moudgalya2021hilbert}, and it was shown to decay with the system size as $\sim \frac{1}{\sqrt{L}}$ for OBC in the bulk of the chain, and remain $O(1)$ at the boundaries even as $L \rt \infty$. 
Since we are interested in the bulk transport properties, we focus on the behavior of $\overline{C_{S^z_j}(t)}$ at large $t$ for $j$ in the middle of the chain as $L \to \infty$. 

\begin{figure}[t!]
\includegraphics[scale=0.95]{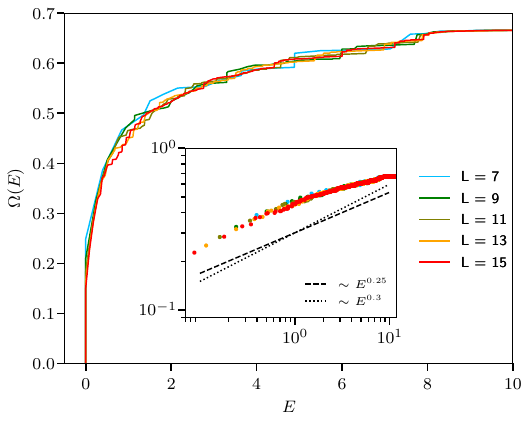}
\caption{(Color online) Cumulative weight $\Omega(E)$ of the operator $\oket{S^z_{j}}$, with $j = \frac{L+1}{2}$ in the middle of an OBC chain, on the eigenstates of the super-Hamiltonian $\hmP_{t-J_z|\text{comp}}$ of Eq.~(\ref{eq:tJzcomposite}) as a function of energy for various system sizes $L$.
Note that $\Omega(E)$ appears to be converging to an $L$-independent function.
At large $E$, it approaches the total weight of the operator, which is $\frac{2}{3}$.
With our choices of overall factors in $\hmP_{t-J_z|\text{comp}}$, the Heisenberg model it maps to has a one-magnon bandwidth of $8$ and a total bandwidth of $\mathcal{O}(2.6L)$, which are the natural energy scales to judge the horizontal axis.
Inset: Same plot in a log-log scale to extract its scaling as $E \rt 0$.
The form appears to be $\Omega(E) \sim E^\gamma$, with $\gamma \in [0.25,0.3]$.}
\label{fig:tJzweight}
\end{figure}

We discuss the computation of the overlap with other low-energy eigenstates of $\hmP_{t-J_z}$ in App.~\ref{app:tracerdiffusion}.
Unlike the case for $U(1)$ symmetry discussed in Sec.~\ref{subsec:U1gapless}, where the weight of the local spin operator was completely within the spin-wave band of excitations of $\hmP_{U(1)}$, the weight distribution of the $S^z_j$ operator seems to be significantly more complicated,  in particular a significant portion of the weight appears to lie in states of higher energy.
Since the operator $\oket{S^z_j}$ corresponds to the composite spin $\ket{\tlt}_j$ on the ladder,  it is easy to see that it has non-zero overlap only on the eigenstates of $\hmP_{t-J_z|\text{comp}}$.
Although the Heisenberg model, hence $\hmP_{t-J_z|\text{comp}}$ is completely integrable,  its eigenstates do not have a simple form, which hinders a fully analytical computation of these overlaps.
Nevertheless, with a combination of analytical and numerical results, we are able to deduce the existence of tracer diffusion from  the spectrum of $\hmP_{t-J_z|\text{comp}}$.
We first express $\overline{C_{\hA}(t)}$ from Eq.~(\ref{eq:correlationavg}) as
\begin{gather}
\overline{C_{\hA}(t)} = \int{\mathrm{d}E\ w_{\hA}(E)\ e^{-\kappa E t}} = \int{\mathrm{d}E\ \frac{\mathrm{d} \Omega_{\hA}}{\mathrm{d} E}\ e^{-\kappa E t}}, \nn \\
w_{\hA}(E) \defn \lim_{\Delta E \to 0} \frac{1}{\Delta E} \sum_{E' \in (E,E+\Delta E)} W_{\hA}(E'), \nn \\
\Omega_{\hA}(E) \defn \int_0^E {\mathrm{d}E'}\ w_{\hA}(E') = \sum_{E' < E} W_{\hA}(E'),
\label{eq:CAweight}    
\end{gather}
where $w_{\hA}(E)$ can be interpreted as the ``density" of the weight at a given energy $E$, and $\Omega_{\hA}(E)$ is the cumulative weight on eigenstates at energies below $E$.
This expression is valid for finite sizes with discrete levels (possibly degenerate but all are included in the formal sum, but is written in anticipation of the thermodynamic limit $L \to \infty$ where for a local observable $\hA$ we expect $\Omega_{\hA}(E)$ to converge to an $L$-independent function that at $E \to \infty$ gives the total weight $\frac{1}{D}\obraket{\hA}{\hA}$ which is a fixed $O(1)$ number.
Note that in fact we expect most of the weight to be spread over a finite range of $E$, since $\oket{\hA}$ for any strictly local operator $\hA$ can be viewed as a result of an action of a super-operator $\oketbra{\hA}{\mathds{1}}$ that is strictly local in the ladder formulation on one of the ground states $\oket{\mathds{1}}$ of the super-Hamiltonian, which can deposit only finite energy of the latter.
As an example of immediate interest to us, a simple calculation gives $\obra{S^z_j} \hmP_{t-J_z} \oket{S^z_j} / \obraket{S^z_j}{S^z_j} = 4/3$.
We can then use the behavior of $\Omega_{\hA}(E)$ at small $E$ to deduce the behavior of $\overline{C_{\hA}(t)}$ at large $t$. 
For example, if we have $\Omega_{\hA}(E) \sim E^\gamma$ as $E \rt 0$, according to Eq.~(\ref{eq:CAweight}) we have 
\begin{equation}
    \overline{C_{\hA}(t)} \sim \gamma \int{\mathrm{d}E\ E^{\gamma-1} e^{- \kappa E t}} \sim t^{-\gamma}\;\;\text{for large }t. 
\label{eq:CAscaling}
\end{equation}
For the operator $\oket{S^z_j}$, the weights on the eigenstates of $\hmP_{t-J_z|\text{comp}}$ can formally be written down in terms of eigenstates of the Heisenberg model; we present the details in App.~\ref{app:tracerdiffusion}.  
The cumulative weight distribution $\Omega_{S^z_j}(E)$ can then be computed numerically, and its form for $j = \frac{L+1}{2}$ and odd system sizes with OBC in shown in Fig.~\ref{fig:tJzweight}.
The nature of this distribution as $E \rt 0$ is consistent with $\gamma \in [0.25, 0.3]$ in Eq.~(\ref{eq:CAscaling}), which is consistent with the scaling expected from the tracer diffusion~\cite{feldmeier2022emergent}.
\subsubsection{Asymptotic QMBS}\label{subsec:asymptoticQMBS}
We now demonstrate that asymptotic QMBS, introduced in \cite{gotta2023asymptotic}, can be understood in terms of low-energy excitations of the super-Hamiltonians corresponding to towers of QMBS, e.g.,  $\hmP_{\scar}$ of Eq.~(\ref{eq:RSigm}) for the bond algebra of Eq.~(\ref{eq:ACtower}).
In models with a tower of exact QMBS, asymptotic QMBS are low-entanglement states orthogonal to exact QMBS that have a vanishing energy variance in the $L \rt \infty$.
As a consequence of their low variance, their relaxation time diverges with system size, a property that does not happen for generic low-entanglement states under local Hamiltonian dynamics~\cite{banuls2020entanglement}.
Simple examples of asymptotic QMBS~\cite{gotta2023asymptotic} in the context of models with the ferromagnetic tower of QMBS, which correspond to the algebras of Eq.~(\ref{eq:ACtower}), are
\begin{equation}
	\ket{\Phi_{n,k}} \defn \frac{1}{\sqrt{\mathcal{N}_{n,k}}} S^-_k \ket{\Phi_{n+1,0}},\;\;\;1 \leq n \leq L-1,
\label{eq:aQMBS}
\end{equation}
where $S^-_k \defn \sum_{j}{}{c_{j, k} S^-_j}$, with $c_{j, k}$ chosen such that $\braket{\Phi_{n,k}}{\Phi_{n',0}} = \delta_{n,n'} \delta_{k,0}$ and the variance of $\ket{\Phi_{n,k}}$ decreases with increasing system size $L$, and $\mathcal{N}_{n,k}$ is a normalization constant that can be tedious to compute.
A simple way to satisfy these conditions is to choose $\ket{\Phi_{n,k}}$ to be a spin-wave on top of the ferromagnet with $k \ll 2\pi$ such that $k \to 0$ as $L \to \infty$, hence the similarity between $\ket{\Phi_{m,k}}$ and $\sket{\lambda^z_{m,k}}$ of Eq.~(\ref{eq:Heisenbergspinwavestates}).
For example, using $(c_{j,k}, k)$ of the form of Eq.~(\ref{eq:HeisenbergPBC}) with $k = \frac{2\pi}{L}$ leads to an energy variance of $\sim 1/L^2$ and fidelity relaxation timescale $\sim L$, similar to the example discussed in \cite{gotta2023asymptotic}.
We now show that the behavior of asymptotic QMBS of the form of Eq.~(\ref{eq:aQMBS}) can be understood from the low-energy excitations of the super-Hamiltonian $\hmP_{\text{tower}}$ for the tower of QMBS.
We start with a subspace of states of the form $\ket{\psi}_t \otimes \ket{\Phi_{m,0}}_b$, where the state on the bottom leg of the ladder is $\ket{\Phi_{m,0}}$, an exact QMBS shown in Eq.~(\ref{eq:FMtower}).
Using the fact that $R_{[j, j+1];b}\ket{\Phi_{m,0}}_b = 0$ and $R_{[j,j+1]; \ell}^2 = R_{[j,j+1]; \ell}$, $\hmP_{\text{tower}}$ of the form of Eq.~(\ref{eq:RSigm}) keeps this subspace invariant, and acts within it as
\begin{equation}
    \hmP_{\text{tower}} = 8\sumal{j}{}{R_{[j,j+1];t}} = 2L - 8\sumal{j}{}{(\vec{S}_{j;t}\cdot\vec{S}_{j+1;t})},
\label{eq:Ptowerrest}
\end{equation}
which, up to an overall factor, is simply the ferromagnetic Heisenberg Hamiltonian of App.~\ref{app:Heiscanon} acting on the top leg.
Hence, excitations within this subspace are spin-waves on the top leg of the ladder of the form
\begin{equation}
\ket{\Phi_{n,k}}_t \otimes \ket{\Phi_{m,0}}_b =
\oket{\ketbra{\Phi_{n,k}}{\Phi_{m,0}}},
\label{eq:aQMBSeigenstate}
\end{equation}
where $\ket{\Phi_{n,k}}$ is a spin-wave on top of the Heisenberg ferromagnet, e.g., as defined in Eq.~(\ref{eq:aQMBS}).
These states on the ladder have energy as shown in Eq.~(\ref{eq:U1energy}), in particular the dispersion scales as $\sim k^2$.
Hence the relaxation of the autocorrelation function of any operator to its Mazur bound is expected to occur on timescales of $\sim L^2$, provided that operator has a non-zero overlap with the slowly relaxing mode.
To apply the general theory of autocorrelation functions to study the asymptotic QMBS, we first note some general properties that hold for an initial state $\ket{\psi}$ evolving under a Brownian circuit corresponding to an algebra $\mA = \lgen \{\hH_\alpha\} \rgen$.
The autocorrelation function of an operator $\hA = \ketbra{\psi}{\Upsilon}$, for any normalized state $\ket{\Upsilon}$, is given by 
\begin{equation}
    \text{Tr}[\hA(0)^\dagger \hA(-t)] = \braket{\psi(0)}{\psi(t)}\braket{\Upsilon(t)}{\Upsilon(0)},
\label{eq:overlapauto}
\end{equation}
where $\hA(-t)$ can be viewed as the time-evolved operator under the Brownian circuit with bond generators of opposite sign; in this context $\hA(-t) = \ketbra{\psi(t)}{\Upsilon(t)}$, with $\ket{\psi(t)}$ and $\ket{\Upsilon(t)}$ the time-evolved states under a single realization of the Brownian circuit couplings $\{J_\alpha^{(t')}\}$, see App.~\ref{app:QMBSbrownian} for precise details.
Given that the algebra $\mA$ admits a \textit{singlet} $\ket{\Upsilon}$, i.e., that is an eigenstate of each of the $\{\hH_\alpha\}$ (so that $\ket{\Upsilon(t)} = e^{-i \varphi t} \ket{\Upsilon(0)}$ for some $\varphi$ that in general depends on the random couplings), we can use Eq.~(\ref{eq:overlapauto}) to lower-bound the ensemble-averaged fidelity as follows:
\begin{align}
    \overline{\mF(t)} &\defn \overline{\left|\braket{\psi(0)}{\psi(t)} \right|^2} \geq \overline{\left|\braket{\psi(0)}{\psi(t)} \right|}^2 \nn \\ &\geq \left|\overline{\braket{\psi(0)}{\psi(t)} e^{i \varphi t}}\right|^2 =  \left|\overline{\text{Tr}[\hA(0)^\dagger \hA(-t)]}\right|^2 ~.
    \label{eq:fidelityauto}
\end{align} 
In the case of asymptotic QMBS, we can choose $\hA = \ketbra{\Phi_{n,k}}{\Phi_{m,0}}$ with normalized $\ket{\Phi_{m,0}}$ and $\ket{\Phi_{n,k}}$, which satisfy the required conditions.
This operator $\oket{\hA}$ is precisely that in Eq.~(\ref{eq:aQMBSeigenstate}), and is hence an eigenstate of the super-Hamiltonian $\hmP_{\text{tower}}$ with eigenvalue $p_k = 8[1-\cos(k)]$.
Using Eq.~(\ref{eq:fidelityauto}) we have
\begin{equation}
\overline{|\braket{\Phi_{n,k}(0)}{\Phi_{n,k}(t)}|^2} \geq  \left|\overline{\obraket{\hA(0)}{\hA(-t)}}\right|^2 = e^{-2 \kappa p_k t} ~.
\label{eq:overlapavg}
\end{equation}
Thus, the average fidelity decays on timescales $\sim L^2$.
This qualitatively recovers that the fidelity of asymptotic QMBS decays on timescales that grow with system size.
However, note that this scaling differs quantitatively from the $\sim L$ scaling of the fidelity decay timescale seen in Hamiltonian systems with asymptotic QMBS~\cite{gotta2023asymptotic}.
We hypothesize on reasons for this difference between the Brownian circuit and Hamiltonian systems in App.~\ref{app:QMBSbrownian}, and it appears to be related to the quantum Zeno effect due to stochasticity in the Brownian circuit. 
The behavior of the overlap in Eq.~(\ref{eq:overlapavg}) can also be understood from a direct analysis of the evolution of states under Brownian circuit dynamics, which we discuss in App.~\ref{app:QMBSbrownian}.
Considerations discussed there also lead us to the following conjecture on the existence of asymptotic QMBS in Hamiltonians with exact QMBS.
\begin{restatable}{con}{aqmbs}\label{con:aqmbs}
Consider a space $\mS = {\text{span}}\{\ket{\Phi_n}\}$ that can be expressed as the exhaustive common kernel of a set of strictly local projectors. 
Any local Hamiltonian that realizes this subspace as the exact QMBS subspace also has asymptotic QMBS if $\mS$ cannot be expressed as the ground state space of a gapped frustration-free Hamiltonian.
Furthermore, the gapless excitations of any such Hamiltonian are the asymptotic QMBS.
\end{restatable}
The same phenomenology generalizes to cases where the QMBS are non-degenerate, and we discuss this in more detail in App.~\ref{app:QMBStower}.
Finally, we remark that even though one can construct low-energy excitations of $\hmP_{\text{tower}}$ with dispersion $\sim k^2$ that is similar to the dispersion of the low-energy excitations of $\hmP_{U(1)}$ in the $U(1)$ symmetry case, there is generally no diffusion of local operators in systems with only QMBS.
This is due to the exponentially small overlap of local operators on these low-energy modes, similar to the result that QMBS have an exponentially small contribution to the Mazur bound of general local operators demonstrated in \cite{moudgalya2022exhaustive}.
\subsubsection{Other Continuous Symmetries}
The strategy of studying the low-energy excitations of the super-Hamiltonians can be applied to more general symmetries, and we briefly discuss two cases here.
First, this can be applied to non-Abelian symmetries such as $SU(q)$ for $q \geq 2$.
As discussed in Sec.~\ref{subsec:SU2symmetry}, the simplest super-Hamiltonians in such cases are Heisenberg-like models with $SU(q^2)$ symmetry, Eq.~(\ref{eq:suqsuperhamiltonian}), and the ground states are $SU(q^2)$ ferromagnets.
We can then straightforwardly also obtain exact low-energy excitations of such Hamiltonians by creating spin-waves on top of these generalized ferromagnetic states, e.g.,
\begin{equation}
    \sum_{j=1}^L e^{i k j} S^{\tilde{m},\tilde{m}'}_j\ket{\tilde{m},\dots,\tilde{m}, \tilde{m}, \tilde{m}, \dots, \tilde{m}}, 
\end{equation}
where $S^{\tilde{m},\tilde{m}'}_j$ is the operator that changes the state on the (rung) site $j$ from $\tilde{m}$ to $\tilde{m}'$. 
This state can be shown to have energy $4(1 - \cos(k)) \sim k^2$ at small $k$, similar to the spin-waves of the spin-1/2 Heisenberg model.
This enables computations of various correlation functions, including autocorrelation functions of local operators similar to the $U(1)$ case discussed in Sec.~\ref{subsec:U1gapless}, and we get similar answers, e.g., diffusion due to the similar nature of the super-Hamiltonians in both cases.
Second, the analysis simplifies in the case of ``classical" symmetries, where the super-Hamiltonians map onto RK-type Hamiltonians, as discussed in App.~\ref{app:RKHamils}. 
Similar RK-type Hamiltonians appear in the study of spectral form factors in Floquet random circuits with symmetries or constraints~\cite{friedman2019spectral, moudgalya2021spectral, singh2021subdiffusion}, and the \textit{Thouless time} is determined by the scaling of the inverse of the gap of the corresponding Hamiltonian with system size.  
For example, RK-type Hamiltonians that appear in the context of dipole and multipole symmetries were studied in \cite{moudgalya2021spectral}.
Their low-energy physics can be understood using Lifshitz-like field theories, which leads to a $\sim k^4$ dispersion of the low-energy mode for dipole moment conserving systems,  and $\sim k^{2(m+1)}$ for systems conserving the $m$-th moment.
With the appropriate choice of bond generators, the same set of RK-type Hamiltonians would appear as super-Hamiltonians in our analysis,\footnote{A caveat is that one cannot write down a bond algebra generated by strictly local terms that has the commutant generated \textit{only} by the charge and dipole symmetries -- there are necessarily exponentially many other independent symmetries due to fragmentation~\cite{moudgalya2021hilbert}.
However, with terms of large enough support, the fragmentation is ``weak"~\cite{sala2020fragmentation} and the extra symmetries are not expected to have a dramatic effect on the correlations of generic operators.} and using heuristic arguments based on the dispersion relation of the low-energy modes, we obtain that autocorrelations should decay as $\sim 1/t^{\frac{d}{2(m+1)}}$ in $d$-dimensional systems with $m$-th multipole moment conservation, indicating subdiffusion.
This Brownian circuit approach to determine transport phenomena was also recently applied to short-range and long-range dipole conserving Hamiltonians~\cite{ogunnaike2023unifying}, where the low-energy excitations of the effective super-Hamiltonians (referred there as Lindbladians) yielded results consistent with those obtained from other methods~\cite{feldmeier2020anomalous, moudgalya2021spectral, morningstar2023hydrodynamics, gliozzi2023hierarchical}. 
We close this discussion with a general remark on the low-energy excited states of general super-Hamiltonians.
Note that the identity operator $\oket{\mathds{1}}$ is always a ground state of any super-Hamiltonian, since it always belongs to the commutant algebra. 
In the ladder language,  this corresponds to a ``homogeneous" product state, e.g., $\ket{\trt \trt \cdots \trt}$ for spin-1/2 systems.  
Given that the super-Hamiltonian is a local superoperator, it is natural to expect that its low-energy spectrum should be well approximated by a ``quasiparticle" trial state that aids in determining the late-time transport.
This happens in all of the cases we have studied, however exploring the validity of this statement or coming up with counterexamples would be an interesting avenue for future work.
\section{Conclusions and Outlook}\label{sec:conclusions}
In this work, we showed that many examples of both conventional and unconventional symmetries can be understood as ground states of local superoperators interpreted as Hamiltonians acting on a doubled ladder Hilbert space, hence we referred to them as ``super-Hamiltonians.''
This originates from the understanding of symmetries as commutants of bond algebras generated by local operators, as illustrated in \cite{moudgalya2021hilbert, moudgalya2022from, moudgalya2022exhaustive, moudgalya2023numerical}.
For conventional symmetries such as $\mbZ_2$, $U(1)$, $SU(2)$, the symmetry algebras can be interpreted as various kinds of ferromagnetic states of appropriate super-Hamiltonians.
Unconventional symmetries such as fragmentation and QMBS also led to frustration-free Hamiltonians with solvable ground states.
We then showed that the low-energy spectra of the super-Hamiltonians can be interpreted as approximate symmetries associated with the exact symmetries.
We did this by showing that super-Hamiltonians obtained this way are effective Hamiltonians that describe noise-averaged dynamics in noisy symmetric Brownian circuits constructed using the bond algebra generators.
This gives a physical interpretation for the super-Hamiltonians, and connects their low-energy excited states to slowly relaxing hydrodynamic modes of the symmetric Brownian circuits.
This also gives a novel interpretation for the Mazur bound~\cite{mazurbound1969, dhar2020revisiting}, usually interpreted as a lower bound for the time-averaged autocorrelation function, as the saturation value of the ensemble-averaged autocorrelation function of Brownian circuits.
The approach to this saturation value is governed by the low-energy spectra of the super-Hamiltonians, hence their low-energy eigenstates beyond the ground states have interpretations as approximate symmetries.
We then explicitly solved for the low-energy spectra of the super-Hamiltonians and discussed the dynamical consequences of the associated slowly-relaxing modes.
Using this framework, we first recovered well-known facts that while conventional discrete symmetries such as $\mbZ_2$ have gapped super-Hamiltonians, and hence no associated slow-modes, conventional continuous symmetries such as $U(1)$ and $SU(2)$ have gapless super-Hamiltonians, and the corresponding slow-modes lead to diffusion.
However, we showed that this framework works much more generally, for understanding slow modes associated with unconventional symmetries such as fragmentation and QMBS as well.
While isolated QMBS have gapped super-Hamiltonians, and hence no associated slow modes, towers of QMBS have \textit{asymptotic scars} of the type discussed in \cite{gotta2023asymptotic} as slow-modes. 
Hilbert space fragmentation in the $t-J_z$ model has slow-modes, which can be used to understand tracer diffusion in such systems, pointed out in earlier works~\cite{barma1994slow, feldmeier2022emergent}.
On a technical note, the quantitative understanding of the slow relaxation of certain observables in some cases such as the $t-J_z$ fragmentation required a careful analysis of the full low-energy spectrum (including appropriate weights for observables), rather than the simple scaling of the gap that has been sufficient for such purposes in earlier works~\cite{moudgalya2021spectral, ogunnaike2023unifying}.
In all, our work connects studies of the commutant algebra focusing on exact conserved quantities (ground states of the corresponding super-Hamiltonians) to studies of hydrodynamic and transport properties controlled by approximately conserved quantities (low-lying excitations of the super-Hamiltonians) in symmetric systems.
While we restricted illustrations to one-dimensional systems, the results and phenomenology generalize straightforwardly to higher dimensional systems.
It would be interesting to explore the applicability of this method to other generalized symmetries being studied in the literature, e.g., subsystem symmetries~\cite{xu2004strong, you2018subsystem, iaconis2019anomalous,  zhou2021fractal, wildeboer2021symmetryprotected}, spatially modulated symmetries~\cite{sala2022dynamics}, categorical or MPO symmetries~\cite{aasen2020topological, lootens2021dualities, borsi2023matrix}, and understand if they can be viewed as ground states of local super-Hamiltonians.
The low-energy spectrum of the corresponding super-Hamiltonians should reveal the late-time dynamical properties of such systems and of the associated hydrodynamic modes, which would also be interesting to explore in other models of Hilbert space fragmentation~\cite{sala2020fragmentation, moudgalya2021hilbert, read2007enlarged, li2023hilbert, stephen2022ergodicity, stahl2023topologically, hart2023exact}, and lattice gauge theories with strictly local symmetries~\cite{smith2017disorder, sala2024disorder}.
It would also be interesting to try and reproduce in this language the sector-dependent hydrodynamic behavior observed in many models of Hilbert space fragmentation based on pattern conservation or ``irreducible strings."~\cite{barma1994slow, menon1995irreducible}
Of course, some symmetries such as dynamical symmetries~\cite{buca2019nonstationary, moudgalya2022from}, strictly speaking, are not ground states of local Hermitian superoperators since they correspond to algebras generated by including also extensive local terms (see~\cite{moudgalya2022from} for more details), but we hope that a generalization of this story might capture many more examples. 
In addition, lattice symmetries or those that appear in the context of integrability~\cite{yuzbashyan2013integrability} have so far not been explored in the commutant framework, which would be an interesting direction to pursue.
The fact that the super-Hamiltonians can be understood as frustration-free Hamiltonians, moreover of the Rokhsar-Kivelson (RK) form in many cases, also opens up many directions of exploration.
First, such Hamiltonians are easy to analyze, and this method might provide a better  systematic approach to prove the exhaustion of commutant algebras, which has turned out to be tedious using brute-force methods~\cite{moudgalya2022from, moudgalya2022exhaustive}.
Second, they are also amenable to standard techniques for proving gaps or their absence~\cite{knabe1988energy, nachtergaele1996spectral, gosset2016local, kastoryano2018divide, lsm1961soluble, oshikawa2000commensurability, hastings2004lieb}, and the understanding of which symmetries have a gap is important for understanding the nature of late-time transport in symmetric systems. 
Third, RK Hamiltonians have connections to several standard concepts in classical Master equations and also to spectral graph theory~\cite{chung1997spectral, castelnovo2005rk, bravyi2015monte, moudgalya2021spectral}, and it would be interesting to exploit this property to study the low-energy excited states using existing methods such as classical stochastic circuits similar to those used in the literature~\cite{ritort2003glassy, morningstar2020kineticallyconstrained, feldmeier2020anomalous, li2023hilbert}, and potentially also Quantum Monte Carlo techniques~\cite{yan2022height}. 
Finally, many of these super-Hamiltonians also have interesting continuum limits,
and their low-energy physics can be understood in terms of field theories.
For example, several types of RK Hamiltonians map onto Lifshitz field theories that are easy to analyze~\cite{henley2004classical, moessner2011dimer, fradkin2013field, moudgalya2021spectral}.
Given that many generalized symmetries are studied in the context of quantum field theories in the continuum~\cite{mcgreevy2022generalized, cordova2022snowmass}, it is natural to wonder if the novel symmetries there, e.g., non-invertible symmetries understood via category theory, can be understood as ``ground states'' in any sense.
Some aspects of the super-Hamiltonian constructions, e.g., working in a doubled Hilbert space and studying the low-energy physics, resemble the Schwinger-Keldysh formalism~\cite{chou1985equilibrium, landsman1987real, kamenev2009keldysh, haehl2017schwinger}, and it would be useful to elucidate these connections further.
More speculatively, connecting symmetry algebras to ground states should also help impose some general constraints on symmetry operators, e.g., perhaps they necessarily have MPO forms, or some restrictions on their operator entanglement.
Moreover, the fact that symmetry, which is a property of the Hilbert space, is connected to ground state properties of a local operator, is consistent with the conjecture that symmetries are related to topological orders --- a ground state property --- in one higher dimension~\cite{kong2020algebraic, chatterjee2022algebra, moradi2022topological}.
The commutant framework along with this ground state mapping might be a framework to make such a correspondence more precise in lattice systems.
The language of super-Hamiltonians also introduces a precise language to discuss approximate symmetries.
While we illustrated this only for approximate symmetries that accompany exact symmetries, it would be very interesting to identify bond algebras without exact symmetries, but with approximate symmetries that appear as low-energy excited states of the super-Hamiltonians, which could lead to slow dynamics and phenomena such as prethermalization.
Furthermore, as we have shown in Sec.~\ref{subsec:approxblock}, approximate symmetries are also potentially related to approximate block-diagonal structures, and hence the language of super-Hamiltonians might help explain the origin of approximate symmetries in certain systems in the literature, e.g., the PXP model~\cite{serbyn2020review} is known to exhibit approximate QMBS and approximate block-diagonal structures~\cite{bull2020quantum}.
On a different note, since algebra-based Brownian circuits played a crucial role in understanding/interpreting the super-Hamiltonian spectrum, it seems like a useful setting to explore more.
For example, it is likely that several results on Haar random circuits can be reproduced using the seemingly more tractable Brownian circuits, and indeed, there have been many interesting works studying the properties of ``generic" Brownian circuits using ``effective Hamiltonians", which are super-Hamiltonians of the kind we study in this work~\cite{lashkari2013towards, sunderhauf2019quantum, xu2019locality}.
On the other hand, Brownian circuits with symmetries have been much less studied, and the large class of ``algebra-based" circuits we introduced in this work, which are defined using the bond algebra corresponding to the symmetry, might prove to be useful toy models that are easier to study than symmetric Haar random circuits for a number of reasons.
First, defining the latter requires the knowledge of the irreducible representations~\cite{rakovszky2018diffusive, khemani2018operator, friedman2019spectral, hearth2023unitary}, i.e., the block diagonal structure of each gate, whereas Brownian circuits only require the generators of the corresponding bond algebra.
Second, in contrast to rigid Haar random circuits, the class of Brownian circuits we study possesses a lot of tunable parameters in the choice of their generators, which might lead to more analytically tractable super-Hamiltonians that provide better physical insights. 
Finally, while computations in Haar random circuits map onto questions in classical statistical mechanics, computations in Brownian circuits map onto the low-energy physics of effective super-Hamiltonians, which, although are equivalent to questions in classical statistical mechanics, are nevertheless more directly accessible using analytical and numerical treatments developed in the context of quantum many-body systems.
For example, the hydrodynamic modes associated with the symmetries, arise more ``naturally" as ``low-energy excitations" on top of simple ground states, which can be studied using a variety of variational methods.
Hence this should be a nice analytical tool to explore the physics of symmetric systems, including those with unconventional symmetries, and this can be contrasted from the physics of systems without any symmetry, by studying bond algebra generators that have a trivial commutant of only the identity operator. 
Finally, it is important to better understand the precise connections between the dynamics of Brownian circuits, and more general Hamiltonian or Floquet systems. 
While the microscopic physics is expected to be different, \textit{universal} properties such as hydrodynamic modes, that arise solely due to the symmetry and locality of the systems, should appear in both kinds of systems, even though they are analytically tractable only in Brownian circuits.
It would be interesting to check if these modes survive under ``relaxing" the structure of Brownian circuits and making it closer to non-Markovian Hamiltonian systems in various ways, e.g., by incorporating temporally correlated noise.
\section*{Acknowledgements}
We thank Deepak Dhar, David Huse, Aditi Mitra, Bruno Nachtergaele, Adam Nahum, Tibor Rakovszky, and Nat Tantivasadakarn for useful discussions.
S.M.\ thanks Lorenzo Gotta and Leonardo Mazza for a previous collaboration~\cite{gotta2023asymptotic}.
This work was supported by the Walter Burke Institute for Theoretical Physics at Caltech; the Institute for Quantum Information and Matter, an NSF Physics Frontiers Center (NSF Grant PHY-1733907); the National Science Foundation through grant DMR-2001186; and the Munich Center for Quantum Science and Technology (S.M.).
A part of this work was done at the Aspen Center for Physics, which is supported by National Science Foundation grants PHY-1607611 and PHY-2210452.
S.M.\ also acknowledges the hospitality of the Physik-Institut at the University of Zurich, where parts of this manuscript were written.
\bibliography{refs, newrefs}

\begin{thebibliography}{155}%
\makeatletter
\providecommand \@ifxundefined [1]{%
 \@ifx{#1\undefined}
}%
\providecommand \@ifnum [1]{%
 \ifnum #1\expandafter \@firstoftwo
 \else \expandafter \@secondoftwo
 \fi
}%
\providecommand \@ifx [1]{%
 \ifx #1\expandafter \@firstoftwo
 \else \expandafter \@secondoftwo
 \fi
}%
\providecommand \natexlab [1]{#1}%
\providecommand \enquote  [1]{``#1''}%
\providecommand \bibnamefont  [1]{#1}%
\providecommand \bibfnamefont [1]{#1}%
\providecommand \citenamefont [1]{#1}%
\providecommand \href@noop [0]{\@secondoftwo}%
\providecommand \href [0]{\begingroup \@sanitize@url \@href}%
\providecommand \@href[1]{\@@startlink{#1}\@@href}%
\providecommand \@@href[1]{\endgroup#1\@@endlink}%
\providecommand \@sanitize@url [0]{\catcode `\\12\catcode `\$12\catcode
  `\&12\catcode `\#12\catcode `\^12\catcode `\_12\catcode `\%12\relax}%
\providecommand \@@startlink[1]{}%
\providecommand \@@endlink[0]{}%
\providecommand \url  [0]{\begingroup\@sanitize@url \@url }%
\providecommand \@url [1]{\endgroup\@href {#1}{\urlprefix }}%
\providecommand \urlprefix  [0]{URL }%
\providecommand \Eprint [0]{\href }%
\providecommand \doibase [0]{https://doi.org/}%
\providecommand \selectlanguage [0]{\@gobble}%
\providecommand \bibinfo  [0]{\@secondoftwo}%
\providecommand \bibfield  [0]{\@secondoftwo}%
\providecommand \translation [1]{[#1]}%
\providecommand \BibitemOpen [0]{}%
\providecommand \bibitemStop [0]{}%
\providecommand \bibitemNoStop [0]{.\EOS\space}%
\providecommand \EOS [0]{\spacefactor3000\relax}%
\providecommand \BibitemShut  [1]{\csname bibitem#1\endcsname}%
\let\auto@bib@innerbib\@empty
\bibitem [{\citenamefont {McGreevy}(2023)}]{mcgreevy2022generalized}%
  \BibitemOpen
  \bibfield  {author} {\bibinfo {author} {\bibfnamefont {J.}~\bibnamefont
  {McGreevy}},\ }\bibfield  {title} {\bibinfo {title} {{Generalized Symmetries
  in Condensed Matter}},\ }\href
  {https://doi.org/10.1146/annurev-conmatphys-040721-021029} {\bibfield
  {journal} {\bibinfo  {journal} {Annual Review of Condensed Matter Physics}\
  }\textbf {\bibinfo {volume} {14}},\ \bibinfo {pages} {57} (\bibinfo {year}
  {2023})}\BibitemShut {NoStop}%
\bibitem [{\citenamefont {{Cordova}}\ \emph {et~al.}(2022)\citenamefont
  {{Cordova}}, \citenamefont {{Dumitrescu}}, \citenamefont {{Intriligator}},\
  and\ \citenamefont {{Shao}}}]{cordova2022snowmass}%
  \BibitemOpen
  \bibfield  {author} {\bibinfo {author} {\bibfnamefont {C.}~\bibnamefont
  {{Cordova}}}, \bibinfo {author} {\bibfnamefont {T.~T.}\ \bibnamefont
  {{Dumitrescu}}}, \bibinfo {author} {\bibfnamefont {K.}~\bibnamefont
  {{Intriligator}}},\ and\ \bibinfo {author} {\bibfnamefont {S.-H.}\
  \bibnamefont {{Shao}}},\ }\bibfield  {title} {\bibinfo {title} {{Snowmass
  White Paper: Generalized Symmetries in Quantum Field Theory and Beyond}},\
  }\href@noop {} {\bibfield  {journal} {\bibinfo  {journal} {arXiv e-prints}\ }
  (\bibinfo {year} {2022})},\ \Eprint {https://arxiv.org/abs/2205.09545}
  {arXiv:2205.09545 [hep-th]} \BibitemShut {NoStop}%
\bibitem [{\citenamefont {Batista}\ and\ \citenamefont
  {Nussinov}(2005)}]{batista2005generalized}%
  \BibitemOpen
  \bibfield  {author} {\bibinfo {author} {\bibfnamefont {C.~D.}\ \bibnamefont
  {Batista}}\ and\ \bibinfo {author} {\bibfnamefont {Z.}~\bibnamefont
  {Nussinov}},\ }\bibfield  {title} {\bibinfo {title} {Generalized elitzur's
  theorem and dimensional reductions},\ }\href
  {https://doi.org/10.1103/PhysRevB.72.045137} {\bibfield  {journal} {\bibinfo
  {journal} {Phys. Rev. B}\ }\textbf {\bibinfo {volume} {72}},\ \bibinfo
  {pages} {045137} (\bibinfo {year} {2005})}\BibitemShut {NoStop}%
\bibitem [{\citenamefont {Nussinov}\ and\ \citenamefont
  {Ortiz}(2009{\natexlab{a}})}]{nussinov2009sufficient}%
  \BibitemOpen
  \bibfield  {author} {\bibinfo {author} {\bibfnamefont {Z.}~\bibnamefont
  {Nussinov}}\ and\ \bibinfo {author} {\bibfnamefont {G.}~\bibnamefont
  {Ortiz}},\ }\bibfield  {title} {\bibinfo {title} {Sufficient symmetry
  conditions for topological quantum order},\ }\href
  {https://doi.org/10.1073/pnas.0803726105} {\bibfield  {journal} {\bibinfo
  {journal} {Proceedings of the National Academy of Sciences}\ }\textbf
  {\bibinfo {volume} {106}},\ \bibinfo {pages} {16944} (\bibinfo {year}
  {2009}{\natexlab{a}})},\ \Eprint
  {https://arxiv.org/abs/https://www.pnas.org/doi/pdf/10.1073/pnas.0803726105}
  {https://www.pnas.org/doi/pdf/10.1073/pnas.0803726105} \BibitemShut {NoStop}%
\bibitem [{\citenamefont {Nussinov}\ and\ \citenamefont
  {Ortiz}(2009{\natexlab{b}})}]{nussinov2009symmetry}%
  \BibitemOpen
  \bibfield  {author} {\bibinfo {author} {\bibfnamefont {Z.}~\bibnamefont
  {Nussinov}}\ and\ \bibinfo {author} {\bibfnamefont {G.}~\bibnamefont
  {Ortiz}},\ }\bibfield  {title} {\bibinfo {title} {A symmetry principle for
  topological quantum order},\ }\href
  {https://doi.org/https://doi.org/10.1016/j.aop.2008.11.002} {\bibfield
  {journal} {\bibinfo  {journal} {Annals of Physics}\ }\textbf {\bibinfo
  {volume} {324}},\ \bibinfo {pages} {977} (\bibinfo {year}
  {2009}{\natexlab{b}})}\BibitemShut {NoStop}%
\bibitem [{\citenamefont {Gaiotto}\ \emph {et~al.}(2015)\citenamefont
  {Gaiotto}, \citenamefont {Kapustin}, \citenamefont {Seiberg},\ and\
  \citenamefont {Willett}}]{gaiotto2015generalized}%
  \BibitemOpen
  \bibfield  {author} {\bibinfo {author} {\bibfnamefont {D.}~\bibnamefont
  {Gaiotto}}, \bibinfo {author} {\bibfnamefont {A.}~\bibnamefont {Kapustin}},
  \bibinfo {author} {\bibfnamefont {N.}~\bibnamefont {Seiberg}},\ and\ \bibinfo
  {author} {\bibfnamefont {B.}~\bibnamefont {Willett}},\ }\bibfield  {title}
  {\bibinfo {title} {{Generalized global symmetries}},\ }\href
  {https://doi.org/10.1007/JHEP02(2015)172} {\bibfield  {journal} {\bibinfo
  {journal} {Journal of High Energy Physics}\ }\textbf {\bibinfo {volume}
  {2015}},\ \bibinfo {pages} {172} (\bibinfo {year} {2015})}\BibitemShut
  {NoStop}%
\bibitem [{\citenamefont {Ji}\ and\ \citenamefont
  {Wen}(2020)}]{ji2020categorical}%
  \BibitemOpen
  \bibfield  {author} {\bibinfo {author} {\bibfnamefont {W.}~\bibnamefont
  {Ji}}\ and\ \bibinfo {author} {\bibfnamefont {X.-G.}\ \bibnamefont {Wen}},\
  }\bibfield  {title} {\bibinfo {title} {{Categorical symmetry and
  noninvertible anomaly in symmetry-breaking and topological phase
  transitions}},\ }\href {https://doi.org/10.1103/PhysRevResearch.2.033417}
  {\bibfield  {journal} {\bibinfo  {journal} {Phys. Rev. Res.}\ }\textbf
  {\bibinfo {volume} {2}},\ \bibinfo {pages} {033417} (\bibinfo {year}
  {2020})}\BibitemShut {NoStop}%
\bibitem [{\citenamefont {Moradi}\ \emph {et~al.}(2023)\citenamefont {Moradi},
  \citenamefont {Moosavian},\ and\ \citenamefont
  {Tiwari}}]{moradi2022topological}%
  \BibitemOpen
  \bibfield  {author} {\bibinfo {author} {\bibfnamefont {H.}~\bibnamefont
  {Moradi}}, \bibinfo {author} {\bibfnamefont {S.~F.}\ \bibnamefont
  {Moosavian}},\ and\ \bibinfo {author} {\bibfnamefont {A.}~\bibnamefont
  {Tiwari}},\ }\bibfield  {title} {\bibinfo {title} {{Topological holography:
  Towards a unification of Landau and beyond-Landau physics}},\ }\href
  {https://doi.org/10.21468/SciPostPhysCore.6.4.066} {\bibfield  {journal}
  {\bibinfo  {journal} {SciPost Phys. Core}\ }\textbf {\bibinfo {volume} {6}},\
  \bibinfo {pages} {066} (\bibinfo {year} {2023})}\BibitemShut {NoStop}%
\bibitem [{\citenamefont {{Cao}}\ \emph {et~al.}(2023)\citenamefont {{Cao}},
  \citenamefont {{Li}}, \citenamefont {{Yamazaki}},\ and\ \citenamefont
  {{Zheng}}}]{cao2023subsystem}%
  \BibitemOpen
  \bibfield  {author} {\bibinfo {author} {\bibfnamefont {W.}~\bibnamefont
  {{Cao}}}, \bibinfo {author} {\bibfnamefont {L.}~\bibnamefont {{Li}}},
  \bibinfo {author} {\bibfnamefont {M.}~\bibnamefont {{Yamazaki}}},\ and\
  \bibinfo {author} {\bibfnamefont {Y.}~\bibnamefont {{Zheng}}},\ }\bibfield
  {title} {\bibinfo {title} {{Subsystem Non-Invertible Symmetry Operators and
  Defects}},\ }\href@noop {} {\bibfield  {journal} {\bibinfo  {journal} {arXiv
  e-prints}\ } (\bibinfo {year} {2023})},\ \Eprint
  {https://arxiv.org/abs/2304.09886} {arXiv:2304.09886 [cond-mat.str-el]}
  \BibitemShut {NoStop}%
\bibitem [{\citenamefont {Aasen}\ \emph {et~al.}(2016)\citenamefont {Aasen},
  \citenamefont {Mong},\ and\ \citenamefont {Fendley}}]{aasen2016topological}%
  \BibitemOpen
  \bibfield  {author} {\bibinfo {author} {\bibfnamefont {D.}~\bibnamefont
  {Aasen}}, \bibinfo {author} {\bibfnamefont {R.~S.~K.}\ \bibnamefont {Mong}},\
  and\ \bibinfo {author} {\bibfnamefont {P.}~\bibnamefont {Fendley}},\
  }\bibfield  {title} {\bibinfo {title} {{Topological defects on the lattice:
  I. The Ising model}},\ }\href
  {https://doi.org/10.1088/1751-8113/49/35/354001} {\bibfield  {journal}
  {\bibinfo  {journal} {Journal of Physics A: Mathematical and Theoretical}\
  }\textbf {\bibinfo {volume} {49}},\ \bibinfo {pages} {354001} (\bibinfo
  {year} {2016})}\BibitemShut {NoStop}%
\bibitem [{\citenamefont {{Aasen}}\ \emph {et~al.}(2020)\citenamefont
  {{Aasen}}, \citenamefont {{Fendley}},\ and\ \citenamefont
  {{Mong}}}]{aasen2020topological}%
  \BibitemOpen
  \bibfield  {author} {\bibinfo {author} {\bibfnamefont {D.}~\bibnamefont
  {{Aasen}}}, \bibinfo {author} {\bibfnamefont {P.}~\bibnamefont {{Fendley}}},\
  and\ \bibinfo {author} {\bibfnamefont {R.~S.~K.}\ \bibnamefont {{Mong}}},\
  }\bibfield  {title} {\bibinfo {title} {{Topological Defects on the Lattice:
  Dualities and Degeneracies}},\ }\href@noop {} {\bibfield  {journal} {\bibinfo
   {journal} {arXiv e-prints}\ } (\bibinfo {year} {2020})},\ \Eprint
  {https://arxiv.org/abs/2008.08598} {arXiv:2008.08598 [cond-mat.stat-mech]}
  \BibitemShut {NoStop}%
\bibitem [{\citenamefont {Lootens}\ \emph {et~al.}(2023)\citenamefont
  {Lootens}, \citenamefont {Delcamp}, \citenamefont {Ortiz},\ and\
  \citenamefont {Verstraete}}]{lootens2021dualities}%
  \BibitemOpen
  \bibfield  {author} {\bibinfo {author} {\bibfnamefont {L.}~\bibnamefont
  {Lootens}}, \bibinfo {author} {\bibfnamefont {C.}~\bibnamefont {Delcamp}},
  \bibinfo {author} {\bibfnamefont {G.}~\bibnamefont {Ortiz}},\ and\ \bibinfo
  {author} {\bibfnamefont {F.}~\bibnamefont {Verstraete}},\ }\bibfield  {title}
  {\bibinfo {title} {{Dualities in One-Dimensional Quantum Lattice Models:
  Symmetric Hamiltonians and Matrix Product Operator Intertwiners}},\ }\href
  {https://doi.org/10.1103/PRXQuantum.4.020357} {\bibfield  {journal} {\bibinfo
   {journal} {PRX Quantum}\ }\textbf {\bibinfo {volume} {4}},\ \bibinfo {pages}
  {020357} (\bibinfo {year} {2023})}\BibitemShut {NoStop}%
\bibitem [{\citenamefont {Nandkishore}\ and\ \citenamefont
  {Hermele}(2019)}]{nandkishore2019fractons}%
  \BibitemOpen
  \bibfield  {author} {\bibinfo {author} {\bibfnamefont {R.~M.}\ \bibnamefont
  {Nandkishore}}\ and\ \bibinfo {author} {\bibfnamefont {M.}~\bibnamefont
  {Hermele}},\ }\bibfield  {title} {\bibinfo {title} {Fractons},\ }\href
  {https://doi.org/10.1146/annurev-conmatphys-031218-013604} {\bibfield
  {journal} {\bibinfo  {journal} {Annual Review of Condensed Matter Physics}\
  }\textbf {\bibinfo {volume} {10}},\ \bibinfo {pages} {295} (\bibinfo {year}
  {2019})}\BibitemShut {NoStop}%
\bibitem [{\citenamefont {Pretko}\ \emph {et~al.}(2020)\citenamefont {Pretko},
  \citenamefont {Chen},\ and\ \citenamefont {You}}]{pretko2020fracton}%
  \BibitemOpen
  \bibfield  {author} {\bibinfo {author} {\bibfnamefont {M.}~\bibnamefont
  {Pretko}}, \bibinfo {author} {\bibfnamefont {X.}~\bibnamefont {Chen}},\ and\
  \bibinfo {author} {\bibfnamefont {Y.}~\bibnamefont {You}},\ }\bibfield
  {title} {\bibinfo {title} {{Fracton phases of matter}},\ }\href
  {https://doi.org/10.1142/S0217751X20300033} {\bibfield  {journal} {\bibinfo
  {journal} {International Journal of Modern Physics A}\ }\textbf {\bibinfo
  {volume} {35}},\ \bibinfo {pages} {2030003} (\bibinfo {year}
  {2020})}\BibitemShut {NoStop}%
\bibitem [{\citenamefont {Seiberg}\ and\ \citenamefont
  {Shao}(2020)}]{seiberg2020exotic3d}%
  \BibitemOpen
  \bibfield  {author} {\bibinfo {author} {\bibfnamefont {N.}~\bibnamefont
  {Seiberg}}\ and\ \bibinfo {author} {\bibfnamefont {S.-H.}\ \bibnamefont
  {Shao}},\ }\bibfield  {title} {\bibinfo {title} {{Exotic $U(1)$ symmetries,
  duality, and fractons in 3+1-dimensional quantum field theory}},\ }\href
  {https://doi.org/10.21468/SciPostPhys.9.4.046} {\bibfield  {journal}
  {\bibinfo  {journal} {SciPost Phys.}\ }\textbf {\bibinfo {volume} {9}},\
  \bibinfo {pages} {046} (\bibinfo {year} {2020})}\BibitemShut {NoStop}%
\bibitem [{\citenamefont {Seiberg}\ and\ \citenamefont
  {Shao}(2021)}]{seiberg2021exoticZn}%
  \BibitemOpen
  \bibfield  {author} {\bibinfo {author} {\bibfnamefont {N.}~\bibnamefont
  {Seiberg}}\ and\ \bibinfo {author} {\bibfnamefont {S.-H.}\ \bibnamefont
  {Shao}},\ }\bibfield  {title} {\bibinfo {title} {{Exotic $\mathbb{Z}_N$
  symmetries, duality, and fractons in 3+1-dimensional quantum field theory}},\
  }\href {https://doi.org/10.21468/SciPostPhys.10.1.003} {\bibfield  {journal}
  {\bibinfo  {journal} {SciPost Phys.}\ }\textbf {\bibinfo {volume} {10}},\
  \bibinfo {pages} {003} (\bibinfo {year} {2021})}\BibitemShut {NoStop}%
\bibitem [{\citenamefont {Serbyn}\ \emph {et~al.}(2021)\citenamefont {Serbyn},
  \citenamefont {Abanin},\ and\ \citenamefont
  {Papi{\'{c}}}}]{serbyn2020review}%
  \BibitemOpen
  \bibfield  {author} {\bibinfo {author} {\bibfnamefont {M.}~\bibnamefont
  {Serbyn}}, \bibinfo {author} {\bibfnamefont {D.~A.}\ \bibnamefont {Abanin}},\
  and\ \bibinfo {author} {\bibfnamefont {Z.}~\bibnamefont {Papi{\'{c}}}},\
  }\bibfield  {title} {\bibinfo {title} {{Quantum many-body scars and weak
  breaking of ergodicity}},\ }\href
  {https://doi.org/10.1038/s41567-021-01230-2} {\bibfield  {journal} {\bibinfo
  {journal} {Nature Physics}\ }\textbf {\bibinfo {volume} {17}},\ \bibinfo
  {pages} {675} (\bibinfo {year} {2021})}\BibitemShut {NoStop}%
\bibitem [{\citenamefont {Papi{\'{c}}}(2022)}]{papic2021review}%
  \BibitemOpen
  \bibfield  {author} {\bibinfo {author} {\bibfnamefont {Z.}~\bibnamefont
  {Papi{\'{c}}}},\ }\bibinfo {title} {{Weak Ergodicity Breaking Through the
  Lens of Quantum Entanglement}},\ in\ \href
  {https://doi.org/10.1007/978-3-031-03998-0_13} {\emph {\bibinfo {booktitle}
  {{Entanglement in Spin Chains: From Theory to Quantum Technology
  Applications}}}},\ \bibinfo {editor} {edited by\ \bibinfo {editor}
  {\bibfnamefont {A.}~\bibnamefont {Bayat}}, \bibinfo {editor} {\bibfnamefont
  {S.}~\bibnamefont {Bose}},\ and\ \bibinfo {editor} {\bibfnamefont
  {H.}~\bibnamefont {Johannesson}}}\ (\bibinfo  {publisher} {Springer
  International Publishing},\ \bibinfo {address} {Cham},\ \bibinfo {year}
  {2022})\ pp.\ \bibinfo {pages} {341--395}\BibitemShut {NoStop}%
\bibitem [{\citenamefont {Moudgalya}\ \emph {et~al.}(2022)\citenamefont
  {Moudgalya}, \citenamefont {Bernevig},\ and\ \citenamefont
  {Regnault}}]{moudgalya2021review}%
  \BibitemOpen
  \bibfield  {author} {\bibinfo {author} {\bibfnamefont {S.}~\bibnamefont
  {Moudgalya}}, \bibinfo {author} {\bibfnamefont {B.~A.}\ \bibnamefont
  {Bernevig}},\ and\ \bibinfo {author} {\bibfnamefont {N.}~\bibnamefont
  {Regnault}},\ }\bibfield  {title} {\bibinfo {title} {{Quantum many-body scars
  and Hilbert space fragmentation: a review of exact results}},\ }\href
  {https://doi.org/10.1088/1361-6633/ac73a0} {\bibfield  {journal} {\bibinfo
  {journal} {Reports on Progress in Physics}\ }\textbf {\bibinfo {volume}
  {85}},\ \bibinfo {pages} {086501} (\bibinfo {year} {2022})}\BibitemShut
  {NoStop}%
\bibitem [{\citenamefont {Chandran}\ \emph {et~al.}(2023)\citenamefont
  {Chandran}, \citenamefont {Iadecola}, \citenamefont {Khemani},\ and\
  \citenamefont {Moessner}}]{chandran2022review}%
  \BibitemOpen
  \bibfield  {author} {\bibinfo {author} {\bibfnamefont {A.}~\bibnamefont
  {Chandran}}, \bibinfo {author} {\bibfnamefont {T.}~\bibnamefont {Iadecola}},
  \bibinfo {author} {\bibfnamefont {V.}~\bibnamefont {Khemani}},\ and\ \bibinfo
  {author} {\bibfnamefont {R.}~\bibnamefont {Moessner}},\ }\bibfield  {title}
  {\bibinfo {title} {{Quantum Many-Body Scars: A Quasiparticle Perspective}},\
  }\href {https://doi.org/10.1146/annurev-conmatphys-031620-101617} {\bibfield
  {journal} {\bibinfo  {journal} {Annual Review of Condensed Matter Physics}\
  }\textbf {\bibinfo {volume} {14}},\ \bibinfo {pages} {443} (\bibinfo {year}
  {2023})}\BibitemShut {NoStop}%
\bibitem [{\citenamefont {Sala}\ \emph {et~al.}(2020)\citenamefont {Sala},
  \citenamefont {Rakovszky}, \citenamefont {Verresen}, \citenamefont {Knap},\
  and\ \citenamefont {Pollmann}}]{sala2020fragmentation}%
  \BibitemOpen
  \bibfield  {author} {\bibinfo {author} {\bibfnamefont {P.}~\bibnamefont
  {Sala}}, \bibinfo {author} {\bibfnamefont {T.}~\bibnamefont {Rakovszky}},
  \bibinfo {author} {\bibfnamefont {R.}~\bibnamefont {Verresen}}, \bibinfo
  {author} {\bibfnamefont {M.}~\bibnamefont {Knap}},\ and\ \bibinfo {author}
  {\bibfnamefont {F.}~\bibnamefont {Pollmann}},\ }\bibfield  {title} {\bibinfo
  {title} {{Ergodicity Breaking Arising from Hilbert Space Fragmentation in
  Dipole-Conserving Hamiltonians}},\ }\href
  {https://doi.org/10.1103/PhysRevX.10.011047} {\bibfield  {journal} {\bibinfo
  {journal} {Phys. Rev. X}\ }\textbf {\bibinfo {volume} {10}},\ \bibinfo
  {pages} {011047} (\bibinfo {year} {2020})}\BibitemShut {NoStop}%
\bibitem [{\citenamefont {Khemani}\ \emph {et~al.}(2020)\citenamefont
  {Khemani}, \citenamefont {Hermele},\ and\ \citenamefont
  {Nandkishore}}]{khemani2020localization}%
  \BibitemOpen
  \bibfield  {author} {\bibinfo {author} {\bibfnamefont {V.}~\bibnamefont
  {Khemani}}, \bibinfo {author} {\bibfnamefont {M.}~\bibnamefont {Hermele}},\
  and\ \bibinfo {author} {\bibfnamefont {R.}~\bibnamefont {Nandkishore}},\
  }\bibfield  {title} {\bibinfo {title} {{Localization from Hilbert space
  shattering: From theory to physical realizations}},\ }\href
  {https://doi.org/10.1103/PhysRevB.101.174204} {\bibfield  {journal} {\bibinfo
   {journal} {Phys. Rev. B}\ }\textbf {\bibinfo {volume} {101}},\ \bibinfo
  {pages} {174204} (\bibinfo {year} {2020})}\BibitemShut {NoStop}%
\bibitem [{\citenamefont {Moudgalya}\ \emph {et~al.}()\citenamefont
  {Moudgalya}, \citenamefont {Prem}, \citenamefont {Nandkishore}, \citenamefont
  {Regnault},\ and\ \citenamefont {Bernevig}}]{moudgalya2019thermalization}%
  \BibitemOpen
  \bibfield  {author} {\bibinfo {author} {\bibfnamefont {S.}~\bibnamefont
  {Moudgalya}}, \bibinfo {author} {\bibfnamefont {A.}~\bibnamefont {Prem}},
  \bibinfo {author} {\bibfnamefont {R.}~\bibnamefont {Nandkishore}}, \bibinfo
  {author} {\bibfnamefont {N.}~\bibnamefont {Regnault}},\ and\ \bibinfo
  {author} {\bibfnamefont {B.~A.}\ \bibnamefont {Bernevig}},\ }\bibinfo {title}
  {{Thermalization and Its Absence within Krylov Subspaces of a Constrained
  Hamiltonian}},\ in\ \href {https://doi.org/10.1142/9789811231711_0009} {\emph
  {\bibinfo {booktitle} {Memorial Volume for Shoucheng Zhang}}},\
  Chap.~\bibinfo {chapter} {7}, pp.\ \bibinfo {pages} {147--209}\BibitemShut
  {NoStop}%
\bibitem [{\citenamefont {Rakovszky}\ \emph {et~al.}(2020)\citenamefont
  {Rakovszky}, \citenamefont {Sala}, \citenamefont {Verresen}, \citenamefont
  {Knap},\ and\ \citenamefont {Pollmann}}]{rakovszky2020statistical}%
  \BibitemOpen
  \bibfield  {author} {\bibinfo {author} {\bibfnamefont {T.}~\bibnamefont
  {Rakovszky}}, \bibinfo {author} {\bibfnamefont {P.}~\bibnamefont {Sala}},
  \bibinfo {author} {\bibfnamefont {R.}~\bibnamefont {Verresen}}, \bibinfo
  {author} {\bibfnamefont {M.}~\bibnamefont {Knap}},\ and\ \bibinfo {author}
  {\bibfnamefont {F.}~\bibnamefont {Pollmann}},\ }\bibfield  {title} {\bibinfo
  {title} {Statistical localization: From strong fragmentation to strong edge
  modes},\ }\href {https://doi.org/10.1103/PhysRevB.101.125126} {\bibfield
  {journal} {\bibinfo  {journal} {Phys. Rev. B}\ }\textbf {\bibinfo {volume}
  {101}},\ \bibinfo {pages} {125126} (\bibinfo {year} {2020})}\BibitemShut
  {NoStop}%
\bibitem [{\citenamefont {Moudgalya}\ and\ \citenamefont
  {Motrunich}(2022)}]{moudgalya2021hilbert}%
  \BibitemOpen
  \bibfield  {author} {\bibinfo {author} {\bibfnamefont {S.}~\bibnamefont
  {Moudgalya}}\ and\ \bibinfo {author} {\bibfnamefont {O.~I.}\ \bibnamefont
  {Motrunich}},\ }\bibfield  {title} {\bibinfo {title} {{Hilbert Space
  Fragmentation and Commutant Algebras}},\ }\href
  {https://doi.org/10.1103/PhysRevX.12.011050} {\bibfield  {journal} {\bibinfo
  {journal} {Phys. Rev. X}\ }\textbf {\bibinfo {volume} {12}},\ \bibinfo
  {pages} {011050} (\bibinfo {year} {2022})}\BibitemShut {NoStop}%
\bibitem [{\citenamefont {Shiraishi}\ and\ \citenamefont
  {Mori}(2017)}]{shiraishi2017systematic}%
  \BibitemOpen
  \bibfield  {author} {\bibinfo {author} {\bibfnamefont {N.}~\bibnamefont
  {Shiraishi}}\ and\ \bibinfo {author} {\bibfnamefont {T.}~\bibnamefont
  {Mori}},\ }\bibfield  {title} {\bibinfo {title} {{Systematic Construction of
  Counterexamples to the Eigenstate Thermalization Hypothesis}},\ }\href
  {https://doi.org/10.1103/PhysRevLett.119.030601} {\bibfield  {journal}
  {\bibinfo  {journal} {Phys. Rev. Lett.}\ }\textbf {\bibinfo {volume} {119}},\
  \bibinfo {pages} {030601} (\bibinfo {year} {2017})}\BibitemShut {NoStop}%
\bibitem [{\citenamefont {Moudgalya}\ \emph
  {et~al.}(2018{\natexlab{a}})\citenamefont {Moudgalya}, \citenamefont
  {Rachel}, \citenamefont {Bernevig},\ and\ \citenamefont
  {Regnault}}]{moudgalya2018exact}%
  \BibitemOpen
  \bibfield  {author} {\bibinfo {author} {\bibfnamefont {S.}~\bibnamefont
  {Moudgalya}}, \bibinfo {author} {\bibfnamefont {S.}~\bibnamefont {Rachel}},
  \bibinfo {author} {\bibfnamefont {B.~A.}\ \bibnamefont {Bernevig}},\ and\
  \bibinfo {author} {\bibfnamefont {N.}~\bibnamefont {Regnault}},\ }\bibfield
  {title} {\bibinfo {title} {Exact excited states of nonintegrable models},\
  }\href {https://doi.org/10.1103/PhysRevB.98.235155} {\bibfield  {journal}
  {\bibinfo  {journal} {Physical Review B}\ }\textbf {\bibinfo {volume} {98}},\
  \bibinfo {pages} {235155} (\bibinfo {year} {2018}{\natexlab{a}})}\BibitemShut
  {NoStop}%
\bibitem [{\citenamefont {Turner}\ \emph
  {et~al.}(2018{\natexlab{a}})\citenamefont {Turner}, \citenamefont
  {Michailidis}, \citenamefont {Abanin}, \citenamefont {Serbyn},\ and\
  \citenamefont {Papic}}]{turner2017quantum}%
  \BibitemOpen
  \bibfield  {author} {\bibinfo {author} {\bibfnamefont {C.}~\bibnamefont
  {Turner}}, \bibinfo {author} {\bibfnamefont {A.}~\bibnamefont {Michailidis}},
  \bibinfo {author} {\bibfnamefont {D.}~\bibnamefont {Abanin}}, \bibinfo
  {author} {\bibfnamefont {M.}~\bibnamefont {Serbyn}},\ and\ \bibinfo {author}
  {\bibfnamefont {Z.}~\bibnamefont {Papic}},\ }\bibfield  {title} {\bibinfo
  {title} {{Weak ergodicity breaking from quantum many-body scars}},\ }\href
  {https://doi.org/10.1038/s41567-018-0137-5} {\bibfield  {journal} {\bibinfo
  {journal} {Nature Physics}\ }\textbf {\bibinfo {volume} {14}},\ \bibinfo
  {pages} {745} (\bibinfo {year} {2018}{\natexlab{a}})}\BibitemShut {NoStop}%
\bibitem [{\citenamefont {Moudgalya}\ \emph
  {et~al.}(2018{\natexlab{b}})\citenamefont {Moudgalya}, \citenamefont
  {Regnault},\ and\ \citenamefont {Bernevig}}]{moudgalya2018entanglement}%
  \BibitemOpen
  \bibfield  {author} {\bibinfo {author} {\bibfnamefont {S.}~\bibnamefont
  {Moudgalya}}, \bibinfo {author} {\bibfnamefont {N.}~\bibnamefont
  {Regnault}},\ and\ \bibinfo {author} {\bibfnamefont {B.~A.}\ \bibnamefont
  {Bernevig}},\ }\bibfield  {title} {\bibinfo {title} {Entanglement of exact
  excited states of affleck-kennedy-lieb-tasaki models: Exact results,
  many-body scars, and violation of the strong eigenstate thermalization
  hypothesis},\ }\href {https://doi.org/10.1103/PhysRevB.98.235156} {\bibfield
  {journal} {\bibinfo  {journal} {Phys. Rev. B}\ }\textbf {\bibinfo {volume}
  {98}},\ \bibinfo {pages} {235156} (\bibinfo {year}
  {2018}{\natexlab{b}})}\BibitemShut {NoStop}%
\bibitem [{\citenamefont {Turner}\ \emph
  {et~al.}(2018{\natexlab{b}})\citenamefont {Turner}, \citenamefont
  {Michailidis}, \citenamefont {Abanin}, \citenamefont {Serbyn},\ and\
  \citenamefont {Papi\ifmmode~\acute{c}\else \'{c}\fi{}}}]{turner2018quantum}%
  \BibitemOpen
  \bibfield  {author} {\bibinfo {author} {\bibfnamefont {C.~J.}\ \bibnamefont
  {Turner}}, \bibinfo {author} {\bibfnamefont {A.~A.}\ \bibnamefont
  {Michailidis}}, \bibinfo {author} {\bibfnamefont {D.~A.}\ \bibnamefont
  {Abanin}}, \bibinfo {author} {\bibfnamefont {M.}~\bibnamefont {Serbyn}},\
  and\ \bibinfo {author} {\bibfnamefont {Z.}~\bibnamefont
  {Papi\ifmmode~\acute{c}\else \'{c}\fi{}}},\ }\bibfield  {title} {\bibinfo
  {title} {Quantum scarred eigenstates in a {Rydberg} atom chain: Entanglement,
  breakdown of thermalization, and stability to perturbations},\ }\href
  {https://doi.org/10.1103/PhysRevB.98.155134} {\bibfield  {journal} {\bibinfo
  {journal} {Phys. Rev. B}\ }\textbf {\bibinfo {volume} {98}},\ \bibinfo
  {pages} {155134} (\bibinfo {year} {2018}{\natexlab{b}})}\BibitemShut
  {NoStop}%
\bibitem [{\citenamefont {{Moudgalya}}\ and\ \citenamefont
  {{Motrunich}}(2022)}]{moudgalya2022exhaustive}%
  \BibitemOpen
  \bibfield  {author} {\bibinfo {author} {\bibfnamefont {S.}~\bibnamefont
  {{Moudgalya}}}\ and\ \bibinfo {author} {\bibfnamefont {O.~I.}\ \bibnamefont
  {{Motrunich}}},\ }\bibfield  {title} {\bibinfo {title} {{Exhaustive
  Characterization of Quantum Many-Body Scars using Commutant Algebras}},\
  }\href@noop {} {\bibfield  {journal} {\bibinfo  {journal} {arXiv e-prints}\ }
  (\bibinfo {year} {2022})},\ \Eprint {https://arxiv.org/abs/2209.03377}
  {arXiv:2209.03377 [cond-mat.str-el]} \BibitemShut {NoStop}%
\bibitem [{\citenamefont {Moudgalya}\ and\ \citenamefont
  {Motrunich}(2023{\natexlab{a}})}]{moudgalya2022from}%
  \BibitemOpen
  \bibfield  {author} {\bibinfo {author} {\bibfnamefont {S.}~\bibnamefont
  {Moudgalya}}\ and\ \bibinfo {author} {\bibfnamefont {O.~I.}\ \bibnamefont
  {Motrunich}},\ }\bibfield  {title} {\bibinfo {title} {{From symmetries to
  commutant algebras in standard Hamiltonians}},\ }\href
  {https://doi.org/https://doi.org/10.1016/j.aop.2023.169384} {\bibfield
  {journal} {\bibinfo  {journal} {Annals of Physics}\ }\textbf {\bibinfo
  {volume} {455}},\ \bibinfo {pages} {169384} (\bibinfo {year}
  {2023}{\natexlab{a}})}\BibitemShut {NoStop}%
\bibitem [{\citenamefont {Moudgalya}\ and\ \citenamefont
  {Motrunich}(2023{\natexlab{b}})}]{moudgalya2023numerical}%
  \BibitemOpen
  \bibfield  {author} {\bibinfo {author} {\bibfnamefont {S.}~\bibnamefont
  {Moudgalya}}\ and\ \bibinfo {author} {\bibfnamefont {O.~I.}\ \bibnamefont
  {Motrunich}},\ }\bibfield  {title} {\bibinfo {title} {Numerical methods for
  detecting symmetries and commutant algebras},\ }\href
  {https://doi.org/10.1103/PhysRevB.107.224312} {\bibfield  {journal} {\bibinfo
   {journal} {Phys. Rev. B}\ }\textbf {\bibinfo {volume} {107}},\ \bibinfo
  {pages} {224312} (\bibinfo {year} {2023}{\natexlab{b}})}\BibitemShut
  {NoStop}%
\bibitem [{\citenamefont {Schollw\"{o}ck}(2011)}]{schollwock2011density}%
  \BibitemOpen
  \bibfield  {author} {\bibinfo {author} {\bibfnamefont {U.}~\bibnamefont
  {Schollw\"{o}ck}},\ }\bibfield  {title} {\bibinfo {title} {The density-matrix
  renormalization group in the age of matrix product states},\ }\href
  {https://doi.org/https://doi.org/10.1016/j.aop.2010.09.012} {\bibfield
  {journal} {\bibinfo  {journal} {Annals of Physics}\ }\textbf {\bibinfo
  {volume} {326}},\ \bibinfo {pages} {96} (\bibinfo {year} {2011})},\ \bibinfo
  {note} {january 2011 Special Issue}\BibitemShut {NoStop}%
\bibitem [{\citenamefont {de~Beaudrap}\ \emph {et~al.}(2010)\citenamefont
  {de~Beaudrap}, \citenamefont {Ohliger}, \citenamefont {Osborne},\ and\
  \citenamefont {Eisert}}]{debeaudrap2010solving}%
  \BibitemOpen
  \bibfield  {author} {\bibinfo {author} {\bibfnamefont {N.}~\bibnamefont
  {de~Beaudrap}}, \bibinfo {author} {\bibfnamefont {M.}~\bibnamefont
  {Ohliger}}, \bibinfo {author} {\bibfnamefont {T.~J.}\ \bibnamefont
  {Osborne}},\ and\ \bibinfo {author} {\bibfnamefont {J.}~\bibnamefont
  {Eisert}},\ }\bibfield  {title} {\bibinfo {title} {{Solving Frustration-Free
  Spin Systems}},\ }\href {https://doi.org/10.1103/PhysRevLett.105.060504}
  {\bibfield  {journal} {\bibinfo  {journal} {Phys. Rev. Lett.}\ }\textbf
  {\bibinfo {volume} {105}},\ \bibinfo {pages} {060504} (\bibinfo {year}
  {2010})}\BibitemShut {NoStop}%
\bibitem [{\citenamefont {Movassagh}\ \emph {et~al.}(2010)\citenamefont
  {Movassagh}, \citenamefont {Farhi}, \citenamefont {Goldstone}, \citenamefont
  {Nagaj}, \citenamefont {Osborne},\ and\ \citenamefont
  {Shor}}]{movassagh2010unfrustrated}%
  \BibitemOpen
  \bibfield  {author} {\bibinfo {author} {\bibfnamefont {R.}~\bibnamefont
  {Movassagh}}, \bibinfo {author} {\bibfnamefont {E.}~\bibnamefont {Farhi}},
  \bibinfo {author} {\bibfnamefont {J.}~\bibnamefont {Goldstone}}, \bibinfo
  {author} {\bibfnamefont {D.}~\bibnamefont {Nagaj}}, \bibinfo {author}
  {\bibfnamefont {T.~J.}\ \bibnamefont {Osborne}},\ and\ \bibinfo {author}
  {\bibfnamefont {P.~W.}\ \bibnamefont {Shor}},\ }\bibfield  {title} {\bibinfo
  {title} {Unfrustrated qudit chains and their ground states},\ }\href
  {https://doi.org/10.1103/PhysRevA.82.012318} {\bibfield  {journal} {\bibinfo
  {journal} {Phys. Rev. A}\ }\textbf {\bibinfo {volume} {82}},\ \bibinfo
  {pages} {012318} (\bibinfo {year} {2010})}\BibitemShut {NoStop}%
\bibitem [{\citenamefont {Yao}\ \emph {et~al.}(2022)\citenamefont {Yao},
  \citenamefont {Pan}, \citenamefont {Liu},\ and\ \citenamefont
  {Zhang}}]{yao2022bounding}%
  \BibitemOpen
  \bibfield  {author} {\bibinfo {author} {\bibfnamefont {Z.}~\bibnamefont
  {Yao}}, \bibinfo {author} {\bibfnamefont {L.}~\bibnamefont {Pan}}, \bibinfo
  {author} {\bibfnamefont {S.}~\bibnamefont {Liu}},\ and\ \bibinfo {author}
  {\bibfnamefont {P.}~\bibnamefont {Zhang}},\ }\bibfield  {title} {\bibinfo
  {title} {Bounding entanglement entropy using zeros of local correlation
  matrices},\ }\href {https://doi.org/10.1103/PhysRevResearch.4.L042037}
  {\bibfield  {journal} {\bibinfo  {journal} {Phys. Rev. Res.}\ }\textbf
  {\bibinfo {volume} {4}},\ \bibinfo {pages} {L042037} (\bibinfo {year}
  {2022})}\BibitemShut {NoStop}%
\bibitem [{\citenamefont {Bauer}\ \emph {et~al.}(2017)\citenamefont {Bauer},
  \citenamefont {Bernard},\ and\ \citenamefont {Jin}}]{bauer2017stochastic}%
  \BibitemOpen
  \bibfield  {author} {\bibinfo {author} {\bibfnamefont {M.}~\bibnamefont
  {Bauer}}, \bibinfo {author} {\bibfnamefont {D.}~\bibnamefont {Bernard}},\
  and\ \bibinfo {author} {\bibfnamefont {T.}~\bibnamefont {Jin}},\ }\bibfield
  {title} {\bibinfo {title} {{Stochastic dissipative quantum spin chains (I) :
  Quantum fluctuating discrete hydrodynamics}},\ }\href
  {https://doi.org/10.21468/SciPostPhys.3.5.033} {\bibfield  {journal}
  {\bibinfo  {journal} {SciPost Phys.}\ }\textbf {\bibinfo {volume} {3}},\
  \bibinfo {pages} {033} (\bibinfo {year} {2017})}\BibitemShut {NoStop}%
\bibitem [{\citenamefont {Bauer}\ \emph {et~al.}(2019)\citenamefont {Bauer},
  \citenamefont {Bernard},\ and\ \citenamefont {Jin}}]{bauer2019equilibrium}%
  \BibitemOpen
  \bibfield  {author} {\bibinfo {author} {\bibfnamefont {M.}~\bibnamefont
  {Bauer}}, \bibinfo {author} {\bibfnamefont {D.}~\bibnamefont {Bernard}},\
  and\ \bibinfo {author} {\bibfnamefont {T.}~\bibnamefont {Jin}},\ }\bibfield
  {title} {\bibinfo {title} {{Equilibrium fluctuations in maximally noisy
  extended quantum systems}},\ }\href
  {https://doi.org/10.21468/SciPostPhys.6.4.045} {\bibfield  {journal}
  {\bibinfo  {journal} {SciPost Phys.}\ }\textbf {\bibinfo {volume} {6}},\
  \bibinfo {pages} {045} (\bibinfo {year} {2019})}\BibitemShut {NoStop}%
\bibitem [{\citenamefont {Bernard}\ and\ \citenamefont
  {Piroli}(2021)}]{bernard2021entanglement}%
  \BibitemOpen
  \bibfield  {author} {\bibinfo {author} {\bibfnamefont {D.}~\bibnamefont
  {Bernard}}\ and\ \bibinfo {author} {\bibfnamefont {L.}~\bibnamefont
  {Piroli}},\ }\bibfield  {title} {\bibinfo {title} {Entanglement distribution
  in the quantum symmetric simple exclusion process},\ }\href
  {https://doi.org/10.1103/PhysRevE.104.014146} {\bibfield  {journal} {\bibinfo
   {journal} {Phys. Rev. E}\ }\textbf {\bibinfo {volume} {104}},\ \bibinfo
  {pages} {014146} (\bibinfo {year} {2021})}\BibitemShut {NoStop}%
\bibitem [{\citenamefont {Bernard}\ \emph {et~al.}(2022)\citenamefont
  {Bernard}, \citenamefont {Essler}, \citenamefont {Hruza},\ and\ \citenamefont
  {Medenjak}}]{bernard2022dynamics}%
  \BibitemOpen
  \bibfield  {author} {\bibinfo {author} {\bibfnamefont {D.}~\bibnamefont
  {Bernard}}, \bibinfo {author} {\bibfnamefont {F.~H.~L.}\ \bibnamefont
  {Essler}}, \bibinfo {author} {\bibfnamefont {L.}~\bibnamefont {Hruza}},\ and\
  \bibinfo {author} {\bibfnamefont {M.}~\bibnamefont {Medenjak}},\ }\bibfield
  {title} {\bibinfo {title} {{Dynamics of fluctuations in quantum simple
  exclusion processes}},\ }\href
  {https://doi.org/10.21468/SciPostPhys.12.1.042} {\bibfield  {journal}
  {\bibinfo  {journal} {SciPost Phys.}\ }\textbf {\bibinfo {volume} {12}},\
  \bibinfo {pages} {042} (\bibinfo {year} {2022})}\BibitemShut {NoStop}%
\bibitem [{\citenamefont {{Swann}}\ \emph {et~al.}(2023)\citenamefont
  {{Swann}}, \citenamefont {{Bernard}},\ and\ \citenamefont
  {{Nahum}}}]{swann2023spacetime}%
  \BibitemOpen
  \bibfield  {author} {\bibinfo {author} {\bibfnamefont {T.}~\bibnamefont
  {{Swann}}}, \bibinfo {author} {\bibfnamefont {D.}~\bibnamefont {{Bernard}}},\
  and\ \bibinfo {author} {\bibfnamefont {A.}~\bibnamefont {{Nahum}}},\
  }\bibfield  {title} {\bibinfo {title} {{Spacetime picture for entanglement
  generation in noisy fermion chains}},\ }\href@noop {} {\bibfield  {journal}
  {\bibinfo  {journal} {arXiv e-prints}\ } (\bibinfo {year} {2023})},\ \Eprint
  {https://arxiv.org/abs/2302.12212} {arXiv:2302.12212 [cond-mat.stat-mech]}
  \BibitemShut {NoStop}%
\bibitem [{\citenamefont {Lashkari}\ \emph {et~al.}(2013)\citenamefont
  {Lashkari}, \citenamefont {Stanford}, \citenamefont {Hastings}, \citenamefont
  {Osborne},\ and\ \citenamefont {Hayden}}]{lashkari2013towards}%
  \BibitemOpen
  \bibfield  {author} {\bibinfo {author} {\bibfnamefont {N.}~\bibnamefont
  {Lashkari}}, \bibinfo {author} {\bibfnamefont {D.}~\bibnamefont {Stanford}},
  \bibinfo {author} {\bibfnamefont {M.}~\bibnamefont {Hastings}}, \bibinfo
  {author} {\bibfnamefont {T.}~\bibnamefont {Osborne}},\ and\ \bibinfo {author}
  {\bibfnamefont {P.}~\bibnamefont {Hayden}},\ }\bibfield  {title} {\bibinfo
  {title} {Towards the fast scrambling conjecture},\ }\href
  {https://doi.org/10.1007/JHEP04(2013)022} {\bibfield  {journal} {\bibinfo
  {journal} {Journal of High Energy Physics}\ }\textbf {\bibinfo {volume}
  {2013}},\ \bibinfo {pages} {22} (\bibinfo {year} {2013})}\BibitemShut
  {NoStop}%
\bibitem [{\citenamefont {Shenker}\ and\ \citenamefont
  {Stanford}(2015)}]{shenker2015stringy}%
  \BibitemOpen
  \bibfield  {author} {\bibinfo {author} {\bibfnamefont {S.~H.}\ \bibnamefont
  {Shenker}}\ and\ \bibinfo {author} {\bibfnamefont {D.}~\bibnamefont
  {Stanford}},\ }\bibfield  {title} {\bibinfo {title} {Stringy effects in
  scrambling},\ }\href {https://doi.org/10.1007/JHEP05(2015)132} {\bibfield
  {journal} {\bibinfo  {journal} {Journal of High Energy Physics}\ }\textbf
  {\bibinfo {volume} {2015}},\ \bibinfo {pages} {132} (\bibinfo {year}
  {2015})}\BibitemShut {NoStop}%
\bibitem [{\citenamefont {Xu}\ and\ \citenamefont
  {Swingle}(2019)}]{xu2019locality}%
  \BibitemOpen
  \bibfield  {author} {\bibinfo {author} {\bibfnamefont {S.}~\bibnamefont
  {Xu}}\ and\ \bibinfo {author} {\bibfnamefont {B.}~\bibnamefont {Swingle}},\
  }\bibfield  {title} {\bibinfo {title} {Locality, quantum fluctuations, and
  scrambling},\ }\href {https://doi.org/10.1103/PhysRevX.9.031048} {\bibfield
  {journal} {\bibinfo  {journal} {Phys. Rev. X}\ }\textbf {\bibinfo {volume}
  {9}},\ \bibinfo {pages} {031048} (\bibinfo {year} {2019})}\BibitemShut
  {NoStop}%
\bibitem [{\citenamefont {Zhou}\ and\ \citenamefont
  {Chen}(2019)}]{zhou2019operator}%
  \BibitemOpen
  \bibfield  {author} {\bibinfo {author} {\bibfnamefont {T.}~\bibnamefont
  {Zhou}}\ and\ \bibinfo {author} {\bibfnamefont {X.}~\bibnamefont {Chen}},\
  }\bibfield  {title} {\bibinfo {title} {{Operator dynamics in a Brownian
  quantum circuit}},\ }\href {https://doi.org/10.1103/PhysRevE.99.052212}
  {\bibfield  {journal} {\bibinfo  {journal} {Phys. Rev. E}\ }\textbf {\bibinfo
  {volume} {99}},\ \bibinfo {pages} {052212} (\bibinfo {year}
  {2019})}\BibitemShut {NoStop}%
\bibitem [{\citenamefont {S{\"u}nderhauf}\ \emph {et~al.}(2019)\citenamefont
  {S{\"u}nderhauf}, \citenamefont {Piroli}, \citenamefont {Qi}, \citenamefont
  {Schuch},\ and\ \citenamefont {Cirac}}]{sunderhauf2019quantum}%
  \BibitemOpen
  \bibfield  {author} {\bibinfo {author} {\bibfnamefont {C.}~\bibnamefont
  {S{\"u}nderhauf}}, \bibinfo {author} {\bibfnamefont {L.}~\bibnamefont
  {Piroli}}, \bibinfo {author} {\bibfnamefont {X.-L.}\ \bibnamefont {Qi}},
  \bibinfo {author} {\bibfnamefont {N.}~\bibnamefont {Schuch}},\ and\ \bibinfo
  {author} {\bibfnamefont {J.~I.}\ \bibnamefont {Cirac}},\ }\bibfield  {title}
  {\bibinfo {title} {{Quantum chaos in the Brownian SYK model with large finite
  N : OTOCs and tripartite information}},\ }\href
  {https://doi.org/10.1007/JHEP11(2019)038} {\bibfield  {journal} {\bibinfo
  {journal} {Journal of High Energy Physics}\ }\textbf {\bibinfo {volume}
  {2019}},\ \bibinfo {pages} {38} (\bibinfo {year} {2019})}\BibitemShut
  {NoStop}%
\bibitem [{\citenamefont {Jian}\ and\ \citenamefont
  {Swingle}(2021)}]{jian2021note}%
  \BibitemOpen
  \bibfield  {author} {\bibinfo {author} {\bibfnamefont {S.-K.}\ \bibnamefont
  {Jian}}\ and\ \bibinfo {author} {\bibfnamefont {B.}~\bibnamefont {Swingle}},\
  }\bibfield  {title} {\bibinfo {title} {{Note on entropy dynamics in the
  Brownian SYK model}},\ }\href {https://doi.org/10.1007/JHEP03(2021)042}
  {\bibfield  {journal} {\bibinfo  {journal} {Journal of High Energy Physics}\
  }\textbf {\bibinfo {volume} {2021}},\ \bibinfo {pages} {42} (\bibinfo {year}
  {2021})}\BibitemShut {NoStop}%
\bibitem [{\citenamefont {{Jian}}\ \emph {et~al.}(2022)\citenamefont {{Jian}},
  \citenamefont {{Bentsen}},\ and\ \citenamefont {{Swingle}}}]{jian2022linear}%
  \BibitemOpen
  \bibfield  {author} {\bibinfo {author} {\bibfnamefont {S.-K.}\ \bibnamefont
  {{Jian}}}, \bibinfo {author} {\bibfnamefont {G.}~\bibnamefont {{Bentsen}}},\
  and\ \bibinfo {author} {\bibfnamefont {B.}~\bibnamefont {{Swingle}}},\
  }\bibfield  {title} {\bibinfo {title} {{Linear Growth of Circuit Complexity
  from Brownian Dynamics}},\ }\href@noop {} {\bibfield  {journal} {\bibinfo
  {journal} {arXiv e-prints}\ } (\bibinfo {year} {2022})},\ \Eprint
  {https://arxiv.org/abs/2206.14205} {arXiv:2206.14205 [quant-ph]} \BibitemShut
  {NoStop}%
\bibitem [{\citenamefont {Agarwal}\ and\ \citenamefont
  {Xu}(2022)}]{agarwal2022emergent}%
  \BibitemOpen
  \bibfield  {author} {\bibinfo {author} {\bibfnamefont {L.}~\bibnamefont
  {Agarwal}}\ and\ \bibinfo {author} {\bibfnamefont {S.}~\bibnamefont {Xu}},\
  }\bibfield  {title} {\bibinfo {title} {{Emergent symmetry in Brownian SYK
  models and charge dependent scrambling}},\ }\href
  {https://doi.org/10.1007/JHEP02(2022)045} {\bibfield  {journal} {\bibinfo
  {journal} {Journal of High Energy Physics}\ }\textbf {\bibinfo {volume}
  {2022}},\ \bibinfo {pages} {45} (\bibinfo {year} {2022})}\BibitemShut
  {NoStop}%
\bibitem [{\citenamefont {Agarwal}\ \emph {et~al.}(2023)\citenamefont
  {Agarwal}, \citenamefont {Sahu},\ and\ \citenamefont
  {Xu}}]{agarwal2023charge}%
  \BibitemOpen
  \bibfield  {author} {\bibinfo {author} {\bibfnamefont {L.}~\bibnamefont
  {Agarwal}}, \bibinfo {author} {\bibfnamefont {S.}~\bibnamefont {Sahu}},\ and\
  \bibinfo {author} {\bibfnamefont {S.}~\bibnamefont {Xu}},\ }\bibfield
  {title} {\bibinfo {title} {{Charge transport, information scrambling and
  quantum operator-coherence in a many-body system with U(1) symmetry}},\
  }\href {https://doi.org/10.1007/JHEP05(2023)037} {\bibfield  {journal}
  {\bibinfo  {journal} {Journal of High Energy Physics}\ }\textbf {\bibinfo
  {volume} {2023}},\ \bibinfo {pages} {37} (\bibinfo {year}
  {2023})}\BibitemShut {NoStop}%
\bibitem [{\citenamefont {Zhang}(2023)}]{zhang2023information}%
  \BibitemOpen
  \bibfield  {author} {\bibinfo {author} {\bibfnamefont {P.}~\bibnamefont
  {Zhang}},\ }\bibfield  {title} {\bibinfo {title} {{Information scrambling and
  entanglement dynamics of complex Brownian Sachdev-Ye-Kitaev models}},\ }\href
  {https://doi.org/10.1007/JHEP04(2023)105} {\bibfield  {journal} {\bibinfo
  {journal} {Journal of High Energy Physics}\ }\textbf {\bibinfo {volume}
  {2023}},\ \bibinfo {pages} {105} (\bibinfo {year} {2023})}\BibitemShut
  {NoStop}%
\bibitem [{\citenamefont {Mazur}(1969)}]{mazurbound1969}%
  \BibitemOpen
  \bibfield  {author} {\bibinfo {author} {\bibfnamefont {P.}~\bibnamefont
  {Mazur}},\ }\bibfield  {title} {\bibinfo {title} {Non-ergodicity of phase
  functions in certain systems},\ }\href
  {https://doi.org/https://doi.org/10.1016/0031-8914(69)90185-2} {\bibfield
  {journal} {\bibinfo  {journal} {Physica}\ }\textbf {\bibinfo {volume} {43}},\
  \bibinfo {pages} {533} (\bibinfo {year} {1969})}\BibitemShut {NoStop}%
\bibitem [{\citenamefont {Dhar}\ \emph {et~al.}(2021)\citenamefont {Dhar},
  \citenamefont {Kundu},\ and\ \citenamefont {Saito}}]{dhar2020revisiting}%
  \BibitemOpen
  \bibfield  {author} {\bibinfo {author} {\bibfnamefont {A.}~\bibnamefont
  {Dhar}}, \bibinfo {author} {\bibfnamefont {A.}~\bibnamefont {Kundu}},\ and\
  \bibinfo {author} {\bibfnamefont {K.}~\bibnamefont {Saito}},\ }\bibfield
  {title} {\bibinfo {title} {{Revisiting the Mazur bound and the Suzuki
  equality}},\ }\href
  {https://doi.org/https://doi.org/10.1016/j.chaos.2020.110618} {\bibfield
  {journal} {\bibinfo  {journal} {Chaos, Solitons \& Fractals}\ }\textbf
  {\bibinfo {volume} {144}},\ \bibinfo {pages} {110618} (\bibinfo {year}
  {2021})}\BibitemShut {NoStop}%
\bibitem [{\citenamefont {Gotta}\ \emph {et~al.}(2023)\citenamefont {Gotta},
  \citenamefont {Moudgalya},\ and\ \citenamefont
  {Mazza}}]{gotta2023asymptotic}%
  \BibitemOpen
  \bibfield  {author} {\bibinfo {author} {\bibfnamefont {L.}~\bibnamefont
  {Gotta}}, \bibinfo {author} {\bibfnamefont {S.}~\bibnamefont {Moudgalya}},\
  and\ \bibinfo {author} {\bibfnamefont {L.}~\bibnamefont {Mazza}},\ }\bibfield
   {title} {\bibinfo {title} {{Asymptotic Quantum Many-Body Scars}},\ }\href
  {https://doi.org/10.1103/PhysRevLett.131.190401} {\bibfield  {journal}
  {\bibinfo  {journal} {Phys. Rev. Lett.}\ }\textbf {\bibinfo {volume} {131}},\
  \bibinfo {pages} {190401} (\bibinfo {year} {2023})}\BibitemShut {NoStop}%
\bibitem [{\citenamefont {Guardado-Sanchez}\ \emph {et~al.}(2020)\citenamefont
  {Guardado-Sanchez}, \citenamefont {Morningstar}, \citenamefont {Spar},
  \citenamefont {Brown}, \citenamefont {Huse},\ and\ \citenamefont
  {Bakr}}]{guardado2020subdiffusion}%
  \BibitemOpen
  \bibfield  {author} {\bibinfo {author} {\bibfnamefont {E.}~\bibnamefont
  {Guardado-Sanchez}}, \bibinfo {author} {\bibfnamefont {A.}~\bibnamefont
  {Morningstar}}, \bibinfo {author} {\bibfnamefont {B.~M.}\ \bibnamefont
  {Spar}}, \bibinfo {author} {\bibfnamefont {P.~T.}\ \bibnamefont {Brown}},
  \bibinfo {author} {\bibfnamefont {D.~A.}\ \bibnamefont {Huse}},\ and\
  \bibinfo {author} {\bibfnamefont {W.~S.}\ \bibnamefont {Bakr}},\ }\bibfield
  {title} {\bibinfo {title} {{Subdiffusion and Heat Transport in a Tilted
  Two-Dimensional {Fermi-Hubbard} System}},\ }\href
  {https://doi.org/10.1103/PhysRevX.10.011042} {\bibfield  {journal} {\bibinfo
  {journal} {Phys. Rev. X}\ }\textbf {\bibinfo {volume} {10}},\ \bibinfo
  {pages} {011042} (\bibinfo {year} {2020})}\BibitemShut {NoStop}%
\bibitem [{\citenamefont {Gromov}\ \emph {et~al.}(2020)\citenamefont {Gromov},
  \citenamefont {Lucas},\ and\ \citenamefont
  {Nandkishore}}]{gromov2020fracton}%
  \BibitemOpen
  \bibfield  {author} {\bibinfo {author} {\bibfnamefont {A.}~\bibnamefont
  {Gromov}}, \bibinfo {author} {\bibfnamefont {A.}~\bibnamefont {Lucas}},\ and\
  \bibinfo {author} {\bibfnamefont {R.~M.}\ \bibnamefont {Nandkishore}},\
  }\bibfield  {title} {\bibinfo {title} {Fracton hydrodynamics},\ }\href
  {https://doi.org/10.1103/PhysRevResearch.2.033124} {\bibfield  {journal}
  {\bibinfo  {journal} {Phys. Rev. Res.}\ }\textbf {\bibinfo {volume} {2}},\
  \bibinfo {pages} {033124} (\bibinfo {year} {2020})}\BibitemShut {NoStop}%
\bibitem [{\citenamefont {Feldmeier}\ \emph {et~al.}(2020)\citenamefont
  {Feldmeier}, \citenamefont {Sala}, \citenamefont {De~Tomasi}, \citenamefont
  {Pollmann},\ and\ \citenamefont {Knap}}]{feldmeier2020anomalous}%
  \BibitemOpen
  \bibfield  {author} {\bibinfo {author} {\bibfnamefont {J.}~\bibnamefont
  {Feldmeier}}, \bibinfo {author} {\bibfnamefont {P.}~\bibnamefont {Sala}},
  \bibinfo {author} {\bibfnamefont {G.}~\bibnamefont {De~Tomasi}}, \bibinfo
  {author} {\bibfnamefont {F.}~\bibnamefont {Pollmann}},\ and\ \bibinfo
  {author} {\bibfnamefont {M.}~\bibnamefont {Knap}},\ }\bibfield  {title}
  {\bibinfo {title} {{Anomalous Diffusion in Dipole- and
  Higher-Moment-Conserving Systems}},\ }\href
  {https://doi.org/10.1103/PhysRevLett.125.245303} {\bibfield  {journal}
  {\bibinfo  {journal} {Phys. Rev. Lett.}\ }\textbf {\bibinfo {volume} {125}},\
  \bibinfo {pages} {245303} (\bibinfo {year} {2020})}\BibitemShut {NoStop}%
\bibitem [{\citenamefont {Moudgalya}\ \emph {et~al.}(2021)\citenamefont
  {Moudgalya}, \citenamefont {Prem}, \citenamefont {Huse},\ and\ \citenamefont
  {Chan}}]{moudgalya2021spectral}%
  \BibitemOpen
  \bibfield  {author} {\bibinfo {author} {\bibfnamefont {S.}~\bibnamefont
  {Moudgalya}}, \bibinfo {author} {\bibfnamefont {A.}~\bibnamefont {Prem}},
  \bibinfo {author} {\bibfnamefont {D.~A.}\ \bibnamefont {Huse}},\ and\
  \bibinfo {author} {\bibfnamefont {A.}~\bibnamefont {Chan}},\ }\bibfield
  {title} {\bibinfo {title} {Spectral statistics in constrained many-body
  quantum chaotic systems},\ }\href
  {https://doi.org/10.1103/PhysRevResearch.3.023176} {\bibfield  {journal}
  {\bibinfo  {journal} {Phys. Rev. Res.}\ }\textbf {\bibinfo {volume} {3}},\
  \bibinfo {pages} {023176} (\bibinfo {year} {2021})}\BibitemShut {NoStop}%
\bibitem [{\citenamefont {Iaconis}\ \emph {et~al.}(2019)\citenamefont
  {Iaconis}, \citenamefont {Vijay},\ and\ \citenamefont
  {Nandkishore}}]{iaconis2019anomalous}%
  \BibitemOpen
  \bibfield  {author} {\bibinfo {author} {\bibfnamefont {J.}~\bibnamefont
  {Iaconis}}, \bibinfo {author} {\bibfnamefont {S.}~\bibnamefont {Vijay}},\
  and\ \bibinfo {author} {\bibfnamefont {R.}~\bibnamefont {Nandkishore}},\
  }\bibfield  {title} {\bibinfo {title} {Anomalous subdiffusion from subsystem
  symmetries},\ }\href {https://doi.org/10.1103/PhysRevB.100.214301} {\bibfield
   {journal} {\bibinfo  {journal} {Phys. Rev. B}\ }\textbf {\bibinfo {volume}
  {100}},\ \bibinfo {pages} {214301} (\bibinfo {year} {2019})}\BibitemShut
  {NoStop}%
\bibitem [{\citenamefont {Sala}\ \emph {et~al.}(2022)\citenamefont {Sala},
  \citenamefont {Lehmann}, \citenamefont {Rakovszky},\ and\ \citenamefont
  {Pollmann}}]{sala2022dynamics}%
  \BibitemOpen
  \bibfield  {author} {\bibinfo {author} {\bibfnamefont {P.}~\bibnamefont
  {Sala}}, \bibinfo {author} {\bibfnamefont {J.}~\bibnamefont {Lehmann}},
  \bibinfo {author} {\bibfnamefont {T.}~\bibnamefont {Rakovszky}},\ and\
  \bibinfo {author} {\bibfnamefont {F.}~\bibnamefont {Pollmann}},\ }\bibfield
  {title} {\bibinfo {title} {{Dynamics in Systems with Modulated Symmetries}},\
  }\href {https://doi.org/10.1103/PhysRevLett.129.170601} {\bibfield  {journal}
  {\bibinfo  {journal} {Phys. Rev. Lett.}\ }\textbf {\bibinfo {volume} {129}},\
  \bibinfo {pages} {170601} (\bibinfo {year} {2022})}\BibitemShut {NoStop}%
\bibitem [{\citenamefont {Ogunnaike}\ \emph {et~al.}(2023)\citenamefont
  {Ogunnaike}, \citenamefont {Feldmeier},\ and\ \citenamefont
  {Lee}}]{ogunnaike2023unifying}%
  \BibitemOpen
  \bibfield  {author} {\bibinfo {author} {\bibfnamefont {O.}~\bibnamefont
  {Ogunnaike}}, \bibinfo {author} {\bibfnamefont {J.}~\bibnamefont
  {Feldmeier}},\ and\ \bibinfo {author} {\bibfnamefont {J.~Y.}\ \bibnamefont
  {Lee}},\ }\bibfield  {title} {\bibinfo {title} {Unifying emergent
  hydrodynamics and lindbladian low-energy spectra across symmetries,
  constraints, and long-range interactions},\ }\href
  {https://doi.org/10.1103/PhysRevLett.131.220403} {\bibfield  {journal}
  {\bibinfo  {journal} {Phys. Rev. Lett.}\ }\textbf {\bibinfo {volume} {131}},\
  \bibinfo {pages} {220403} (\bibinfo {year} {2023})}\BibitemShut {NoStop}%
\bibitem [{\citenamefont {Morningstar}\ \emph {et~al.}(2023)\citenamefont
  {Morningstar}, \citenamefont {O'Dea},\ and\ \citenamefont
  {Richter}}]{morningstar2023hydrodynamics}%
  \BibitemOpen
  \bibfield  {author} {\bibinfo {author} {\bibfnamefont {A.}~\bibnamefont
  {Morningstar}}, \bibinfo {author} {\bibfnamefont {N.}~\bibnamefont {O'Dea}},\
  and\ \bibinfo {author} {\bibfnamefont {J.}~\bibnamefont {Richter}},\
  }\bibfield  {title} {\bibinfo {title} {Hydrodynamics in long-range
  interacting systems with center-of-mass conservation},\ }\href
  {https://doi.org/10.1103/PhysRevB.108.L020304} {\bibfield  {journal}
  {\bibinfo  {journal} {Phys. Rev. B}\ }\textbf {\bibinfo {volume} {108}},\
  \bibinfo {pages} {L020304} (\bibinfo {year} {2023})}\BibitemShut {NoStop}%
\bibitem [{\citenamefont {Gliozzi}\ \emph {et~al.}(2023)\citenamefont
  {Gliozzi}, \citenamefont {May-Mann}, \citenamefont {Hughes},\ and\
  \citenamefont {De~Tomasi}}]{gliozzi2023hierarchical}%
  \BibitemOpen
  \bibfield  {author} {\bibinfo {author} {\bibfnamefont {J.}~\bibnamefont
  {Gliozzi}}, \bibinfo {author} {\bibfnamefont {J.}~\bibnamefont {May-Mann}},
  \bibinfo {author} {\bibfnamefont {T.~L.}\ \bibnamefont {Hughes}},\ and\
  \bibinfo {author} {\bibfnamefont {G.}~\bibnamefont {De~Tomasi}},\ }\bibfield
  {title} {\bibinfo {title} {Hierarchical hydrodynamics in long-range
  multipole-conserving systems},\ }\href
  {https://doi.org/10.1103/PhysRevB.108.195106} {\bibfield  {journal} {\bibinfo
   {journal} {Phys. Rev. B}\ }\textbf {\bibinfo {volume} {108}},\ \bibinfo
  {pages} {195106} (\bibinfo {year} {2023})}\BibitemShut {NoStop}%
\bibitem [{\citenamefont {Nussinov}\ and\ \citenamefont
  {Ortiz}(2009{\natexlab{c}})}]{nussinov2009bond}%
  \BibitemOpen
  \bibfield  {author} {\bibinfo {author} {\bibfnamefont {Z.}~\bibnamefont
  {Nussinov}}\ and\ \bibinfo {author} {\bibfnamefont {G.}~\bibnamefont
  {Ortiz}},\ }\bibfield  {title} {\bibinfo {title} {Bond algebras and exact
  solvability of {H}amiltonians: {S}pin ${S}=\frac{1}{2}$ multilayer systems},\
  }\href {https://doi.org/10.1103/PhysRevB.79.214440} {\bibfield  {journal}
  {\bibinfo  {journal} {Phys. Rev. B}\ }\textbf {\bibinfo {volume} {79}},\
  \bibinfo {pages} {214440} (\bibinfo {year} {2009}{\natexlab{c}})}\BibitemShut
  {NoStop}%
\bibitem [{\citenamefont {Cobanera}\ \emph {et~al.}(2010)\citenamefont
  {Cobanera}, \citenamefont {Ortiz},\ and\ \citenamefont
  {Nussinov}}]{cobanera2010unified}%
  \BibitemOpen
  \bibfield  {author} {\bibinfo {author} {\bibfnamefont {E.}~\bibnamefont
  {Cobanera}}, \bibinfo {author} {\bibfnamefont {G.}~\bibnamefont {Ortiz}},\
  and\ \bibinfo {author} {\bibfnamefont {Z.}~\bibnamefont {Nussinov}},\
  }\bibfield  {title} {\bibinfo {title} {{Unified Approach to Quantum and
  Classical Dualities}},\ }\href
  {https://doi.org/10.1103/PhysRevLett.104.020402} {\bibfield  {journal}
  {\bibinfo  {journal} {Phys. Rev. Lett.}\ }\textbf {\bibinfo {volume} {104}},\
  \bibinfo {pages} {020402} (\bibinfo {year} {2010})}\BibitemShut {NoStop}%
\bibitem [{\citenamefont {Cobanera}\ \emph {et~al.}(2011)\citenamefont
  {Cobanera}, \citenamefont {Ortiz},\ and\ \citenamefont
  {Nussinov}}]{cobenera2011bond}%
  \BibitemOpen
  \bibfield  {author} {\bibinfo {author} {\bibfnamefont {E.}~\bibnamefont
  {Cobanera}}, \bibinfo {author} {\bibfnamefont {G.}~\bibnamefont {Ortiz}},\
  and\ \bibinfo {author} {\bibfnamefont {Z.}~\bibnamefont {Nussinov}},\
  }\bibfield  {title} {\bibinfo {title} {The bond-algebraic approach to
  dualities},\ }\href {https://doi.org/10.1080/00018732.2011.619814} {\bibfield
   {journal} {\bibinfo  {journal} {Advances in Physics}\ }\textbf {\bibinfo
  {volume} {60}},\ \bibinfo {pages} {679} (\bibinfo {year} {2011})}\BibitemShut
  {NoStop}%
\bibitem [{\citenamefont {Medvedyeva}\ \emph {et~al.}(2016)\citenamefont
  {Medvedyeva}, \citenamefont {Essler},\ and\ \citenamefont
  {Prosen}}]{medvedyeva2016exact}%
  \BibitemOpen
  \bibfield  {author} {\bibinfo {author} {\bibfnamefont {M.~V.}\ \bibnamefont
  {Medvedyeva}}, \bibinfo {author} {\bibfnamefont {F.~H.~L.}\ \bibnamefont
  {Essler}},\ and\ \bibinfo {author} {\bibfnamefont {T.}~\bibnamefont
  {Prosen}},\ }\bibfield  {title} {\bibinfo {title} {{Exact Bethe Ansatz
  Spectrum of a Tight-Binding Chain with Dephasing Noise}},\ }\href
  {https://doi.org/10.1103/PhysRevLett.117.137202} {\bibfield  {journal}
  {\bibinfo  {journal} {Phys. Rev. Lett.}\ }\textbf {\bibinfo {volume} {117}},\
  \bibinfo {pages} {137202} (\bibinfo {year} {2016})}\BibitemShut {NoStop}%
\bibitem [{\citenamefont {Ziolkowska}\ and\ \citenamefont
  {Essler}(2020)}]{ziolkowska2020yangbaxter}%
  \BibitemOpen
  \bibfield  {author} {\bibinfo {author} {\bibfnamefont {A.~A.}\ \bibnamefont
  {Ziolkowska}}\ and\ \bibinfo {author} {\bibfnamefont {F.~H.}\ \bibnamefont
  {Essler}},\ }\bibfield  {title} {\bibinfo {title} {{Yang-Baxter integrable
  Lindblad equations}},\ }\href {https://doi.org/10.21468/SciPostPhys.8.3.044}
  {\bibfield  {journal} {\bibinfo  {journal} {SciPost Phys.}\ }\textbf
  {\bibinfo {volume} {8}},\ \bibinfo {pages} {44} (\bibinfo {year}
  {2020})}\BibitemShut {NoStop}%
\bibitem [{\citenamefont {Andreadakis}\ and\ \citenamefont
  {Zanardi}(2023)}]{andreadakis2022coherence}%
  \BibitemOpen
  \bibfield  {author} {\bibinfo {author} {\bibfnamefont {F.}~\bibnamefont
  {Andreadakis}}\ and\ \bibinfo {author} {\bibfnamefont {P.}~\bibnamefont
  {Zanardi}},\ }\bibfield  {title} {\bibinfo {title} {{Coherence generation,
  symmetry algebras, and Hilbert space fragmentation}},\ }\href
  {https://doi.org/10.1103/PhysRevA.107.062402} {\bibfield  {journal} {\bibinfo
   {journal} {Phys. Rev. A}\ }\textbf {\bibinfo {volume} {107}},\ \bibinfo
  {pages} {062402} (\bibinfo {year} {2023})}\BibitemShut {NoStop}%
\bibitem [{\citenamefont {Li}\ \emph {et~al.}(2023)\citenamefont {Li},
  \citenamefont {Sala},\ and\ \citenamefont {Pollmann}}]{li2023hilbert}%
  \BibitemOpen
  \bibfield  {author} {\bibinfo {author} {\bibfnamefont {Y.}~\bibnamefont
  {Li}}, \bibinfo {author} {\bibfnamefont {P.}~\bibnamefont {Sala}},\ and\
  \bibinfo {author} {\bibfnamefont {F.}~\bibnamefont {Pollmann}},\ }\bibfield
  {title} {\bibinfo {title} {Hilbert space fragmentation in open quantum
  systems},\ }\href {https://doi.org/10.1103/PhysRevResearch.5.043239}
  {\bibfield  {journal} {\bibinfo  {journal} {Phys. Rev. Res.}\ }\textbf
  {\bibinfo {volume} {5}},\ \bibinfo {pages} {043239} (\bibinfo {year}
  {2023})}\BibitemShut {NoStop}%
\bibitem [{\citenamefont {Zhang}\ \emph {et~al.}(1997)\citenamefont {Zhang},
  \citenamefont {Karbach}, \citenamefont {M\"uller},\ and\ \citenamefont
  {Stolze}}]{zhang1997tJz}%
  \BibitemOpen
  \bibfield  {author} {\bibinfo {author} {\bibfnamefont {S.}~\bibnamefont
  {Zhang}}, \bibinfo {author} {\bibfnamefont {M.}~\bibnamefont {Karbach}},
  \bibinfo {author} {\bibfnamefont {G.}~\bibnamefont {M\"uller}},\ and\
  \bibinfo {author} {\bibfnamefont {J.}~\bibnamefont {Stolze}},\ }\bibfield
  {title} {\bibinfo {title} {{Charge and spin dynamics in the one-dimensional
  t-${\mathrm{J}}_{\mathrm{z}}$ and t-J models}},\ }\href
  {https://doi.org/10.1103/PhysRevB.55.6491} {\bibfield  {journal} {\bibinfo
  {journal} {Phys. Rev. B}\ }\textbf {\bibinfo {volume} {55}},\ \bibinfo
  {pages} {6491} (\bibinfo {year} {1997})}\BibitemShut {NoStop}%
\bibitem [{\citenamefont {Batista}\ and\ \citenamefont
  {Ortiz}(2000)}]{batista2000tJz}%
  \BibitemOpen
  \bibfield  {author} {\bibinfo {author} {\bibfnamefont {C.~D.}\ \bibnamefont
  {Batista}}\ and\ \bibinfo {author} {\bibfnamefont {G.}~\bibnamefont
  {Ortiz}},\ }\bibfield  {title} {\bibinfo {title} {Quantum phase diagram of
  the $\mathit{t}\ensuremath{-}{J}_{z}$ chain model},\ }\href
  {https://doi.org/10.1103/PhysRevLett.85.4755} {\bibfield  {journal} {\bibinfo
   {journal} {Phys. Rev. Lett.}\ }\textbf {\bibinfo {volume} {85}},\ \bibinfo
  {pages} {4755} (\bibinfo {year} {2000})}\BibitemShut {NoStop}%
\bibitem [{\citenamefont {Batista}\ and\ \citenamefont
  {Ortiz}(2001)}]{batista2001generalizedJW}%
  \BibitemOpen
  \bibfield  {author} {\bibinfo {author} {\bibfnamefont {C.~D.}\ \bibnamefont
  {Batista}}\ and\ \bibinfo {author} {\bibfnamefont {G.}~\bibnamefont
  {Ortiz}},\ }\bibfield  {title} {\bibinfo {title} {{Generalized Jordan-Wigner
  Transformations}},\ }\href {https://doi.org/10.1103/PhysRevLett.86.1082}
  {\bibfield  {journal} {\bibinfo  {journal} {Phys. Rev. Lett.}\ }\textbf
  {\bibinfo {volume} {86}},\ \bibinfo {pages} {1082} (\bibinfo {year}
  {2001})}\BibitemShut {NoStop}%
\bibitem [{\citenamefont {Castelnovo}\ \emph {et~al.}(2005)\citenamefont
  {Castelnovo}, \citenamefont {Chamon}, \citenamefont {Mudry},\ and\
  \citenamefont {Pujol}}]{castelnovo2005rk}%
  \BibitemOpen
  \bibfield  {author} {\bibinfo {author} {\bibfnamefont {C.}~\bibnamefont
  {Castelnovo}}, \bibinfo {author} {\bibfnamefont {C.}~\bibnamefont {Chamon}},
  \bibinfo {author} {\bibfnamefont {C.}~\bibnamefont {Mudry}},\ and\ \bibinfo
  {author} {\bibfnamefont {P.}~\bibnamefont {Pujol}},\ }\bibfield  {title}
  {\bibinfo {title} {{From quantum mechanics to classical statistical physics:
  Generalized Rokhsar-Kivelson Hamiltonians and the Stochastic Matrix Form
  decomposition}},\ }\href
  {https://doi.org/https://doi.org/10.1016/j.aop.2005.01.006} {\bibfield
  {journal} {\bibinfo  {journal} {Annals of Physics}\ }\textbf {\bibinfo
  {volume} {318}},\ \bibinfo {pages} {316 } (\bibinfo {year}
  {2005})}\BibitemShut {NoStop}%
\bibitem [{\citenamefont {Bravyi}(2015)}]{bravyi2015monte}%
  \BibitemOpen
  \bibfield  {author} {\bibinfo {author} {\bibfnamefont {S.}~\bibnamefont
  {Bravyi}},\ }\bibfield  {title} {\bibinfo {title} {{Monte Carlo Simulation of
  Stoquastic Hamiltonians}},\ }\href@noop {} {\bibfield  {journal} {\bibinfo
  {journal} {Quantum Info. Comput.}\ }\textbf {\bibinfo {volume} {15}},\
  \bibinfo {pages} {1122–1140} (\bibinfo {year} {2015})}\BibitemShut
  {NoStop}%
\bibitem [{\citenamefont {Pakrouski}\ \emph {et~al.}(2020)\citenamefont
  {Pakrouski}, \citenamefont {Pallegar}, \citenamefont {Popov},\ and\
  \citenamefont {Klebanov}}]{pakrouski2020many}%
  \BibitemOpen
  \bibfield  {author} {\bibinfo {author} {\bibfnamefont {K.}~\bibnamefont
  {Pakrouski}}, \bibinfo {author} {\bibfnamefont {P.~N.}\ \bibnamefont
  {Pallegar}}, \bibinfo {author} {\bibfnamefont {F.~K.}\ \bibnamefont
  {Popov}},\ and\ \bibinfo {author} {\bibfnamefont {I.~R.}\ \bibnamefont
  {Klebanov}},\ }\bibfield  {title} {\bibinfo {title} {Many-body scars as a
  group invariant sector of hilbert space},\ }\href
  {https://doi.org/10.1103/PhysRevLett.125.230602} {\bibfield  {journal}
  {\bibinfo  {journal} {Phys. Rev. Lett.}\ }\textbf {\bibinfo {volume} {125}},\
  \bibinfo {pages} {230602} (\bibinfo {year} {2020})}\BibitemShut {NoStop}%
\bibitem [{\citenamefont {Pakrouski}\ \emph {et~al.}(2021)\citenamefont
  {Pakrouski}, \citenamefont {Pallegar}, \citenamefont {Popov},\ and\
  \citenamefont {Klebanov}}]{pakrouski2021group}%
  \BibitemOpen
  \bibfield  {author} {\bibinfo {author} {\bibfnamefont {K.}~\bibnamefont
  {Pakrouski}}, \bibinfo {author} {\bibfnamefont {P.~N.}\ \bibnamefont
  {Pallegar}}, \bibinfo {author} {\bibfnamefont {F.~K.}\ \bibnamefont
  {Popov}},\ and\ \bibinfo {author} {\bibfnamefont {I.~R.}\ \bibnamefont
  {Klebanov}},\ }\bibfield  {title} {\bibinfo {title} {Group theoretic approach
  to many-body scar states in fermionic lattice models},\ }\href
  {https://doi.org/10.1103/PhysRevResearch.3.043156} {\bibfield  {journal}
  {\bibinfo  {journal} {Phys. Rev. Research}\ }\textbf {\bibinfo {volume}
  {3}},\ \bibinfo {pages} {043156} (\bibinfo {year} {2021})}\BibitemShut
  {NoStop}%
\bibitem [{\citenamefont {Schecter}\ and\ \citenamefont
  {Iadecola}(2019)}]{schecter2019weak}%
  \BibitemOpen
  \bibfield  {author} {\bibinfo {author} {\bibfnamefont {M.}~\bibnamefont
  {Schecter}}\ and\ \bibinfo {author} {\bibfnamefont {T.}~\bibnamefont
  {Iadecola}},\ }\bibfield  {title} {\bibinfo {title} {{Weak Ergodicity
  Breaking and Quantum Many-Body Scars in Spin-1 $XY$ Magnets}},\ }\href
  {https://doi.org/10.1103/PhysRevLett.123.147201} {\bibfield  {journal}
  {\bibinfo  {journal} {Phys. Rev. Lett.}\ }\textbf {\bibinfo {volume} {123}},\
  \bibinfo {pages} {147201} (\bibinfo {year} {2019})}\BibitemShut {NoStop}%
\bibitem [{\citenamefont {Mark}\ and\ \citenamefont
  {Motrunich}(2020)}]{mark2020eta}%
  \BibitemOpen
  \bibfield  {author} {\bibinfo {author} {\bibfnamefont {D.~K.}\ \bibnamefont
  {Mark}}\ and\ \bibinfo {author} {\bibfnamefont {O.~I.}\ \bibnamefont
  {Motrunich}},\ }\bibfield  {title} {\bibinfo {title}
  {{$\ensuremath{\eta}$-pairing states as true scars in an extended Hubbard
  model}},\ }\href {https://doi.org/10.1103/PhysRevB.102.075132} {\bibfield
  {journal} {\bibinfo  {journal} {Phys. Rev. B}\ }\textbf {\bibinfo {volume}
  {102}},\ \bibinfo {pages} {075132} (\bibinfo {year} {2020})}\BibitemShut
  {NoStop}%
\bibitem [{\citenamefont {Moudgalya}\ \emph {et~al.}(2020)\citenamefont
  {Moudgalya}, \citenamefont {Regnault},\ and\ \citenamefont
  {Bernevig}}]{moudgalya2020eta}%
  \BibitemOpen
  \bibfield  {author} {\bibinfo {author} {\bibfnamefont {S.}~\bibnamefont
  {Moudgalya}}, \bibinfo {author} {\bibfnamefont {N.}~\bibnamefont
  {Regnault}},\ and\ \bibinfo {author} {\bibfnamefont {B.~A.}\ \bibnamefont
  {Bernevig}},\ }\bibfield  {title} {\bibinfo {title}
  {{$\ensuremath{\eta}$-pairing in Hubbard models: From spectrum generating
  algebras to quantum many-body scars}},\ }\href
  {https://doi.org/10.1103/PhysRevB.102.085140} {\bibfield  {journal} {\bibinfo
   {journal} {Phys. Rev. B}\ }\textbf {\bibinfo {volume} {102}},\ \bibinfo
  {pages} {085140} (\bibinfo {year} {2020})}\BibitemShut {NoStop}%
\bibitem [{\citenamefont {O'Dea}\ \emph {et~al.}(2020)\citenamefont {O'Dea},
  \citenamefont {Burnell}, \citenamefont {Chandran},\ and\ \citenamefont
  {Khemani}}]{odea2020from}%
  \BibitemOpen
  \bibfield  {author} {\bibinfo {author} {\bibfnamefont {N.}~\bibnamefont
  {O'Dea}}, \bibinfo {author} {\bibfnamefont {F.}~\bibnamefont {Burnell}},
  \bibinfo {author} {\bibfnamefont {A.}~\bibnamefont {Chandran}},\ and\
  \bibinfo {author} {\bibfnamefont {V.}~\bibnamefont {Khemani}},\ }\bibfield
  {title} {\bibinfo {title} {{From tunnels to towers: Quantum scars from Lie
  algebras and $q$-deformed Lie algebras}},\ }\href
  {https://doi.org/10.1103/PhysRevResearch.2.043305} {\bibfield  {journal}
  {\bibinfo  {journal} {Phys. Rev. Research}\ }\textbf {\bibinfo {volume}
  {2}},\ \bibinfo {pages} {043305} (\bibinfo {year} {2020})}\BibitemShut
  {NoStop}%
\bibitem [{\citenamefont {Rozon}\ and\ \citenamefont
  {Agarwal}(2024)}]{rozon2023broken}%
  \BibitemOpen
  \bibfield  {author} {\bibinfo {author} {\bibfnamefont {P.-G.}\ \bibnamefont
  {Rozon}}\ and\ \bibinfo {author} {\bibfnamefont {K.}~\bibnamefont
  {Agarwal}},\ }\bibfield  {title} {\bibinfo {title} {Broken unitary picture of
  dynamics in quantum many-body scars},\ }\href
  {https://doi.org/10.1103/PhysRevResearch.6.023041} {\bibfield  {journal}
  {\bibinfo  {journal} {Phys. Rev. Res.}\ }\textbf {\bibinfo {volume} {6}},\
  \bibinfo {pages} {023041} (\bibinfo {year} {2024})}\BibitemShut {NoStop}%
\bibitem [{\citenamefont {Peschel}\ and\ \citenamefont
  {Emery}(1981)}]{peschel1981calculation}%
  \BibitemOpen
  \bibfield  {author} {\bibinfo {author} {\bibfnamefont {I.}~\bibnamefont
  {Peschel}}\ and\ \bibinfo {author} {\bibfnamefont {V.~J.}\ \bibnamefont
  {Emery}},\ }\bibfield  {title} {\bibinfo {title} {{Calculation of spin
  correlations in two-dimensional Ising systems from one-dimensional kinetic
  models}},\ }\href {https://doi.org/10.1007/BF01297524} {\bibfield  {journal}
  {\bibinfo  {journal} {Zeitschrift f{\"u}r Physik B Condensed Matter}\
  }\textbf {\bibinfo {volume} {43}},\ \bibinfo {pages} {241} (\bibinfo {year}
  {1981})}\BibitemShut {NoStop}%
\bibitem [{\citenamefont {Katsura}\ \emph {et~al.}(2015)\citenamefont
  {Katsura}, \citenamefont {Schuricht},\ and\ \citenamefont
  {Takahashi}}]{katsura2015exact}%
  \BibitemOpen
  \bibfield  {author} {\bibinfo {author} {\bibfnamefont {H.}~\bibnamefont
  {Katsura}}, \bibinfo {author} {\bibfnamefont {D.}~\bibnamefont {Schuricht}},\
  and\ \bibinfo {author} {\bibfnamefont {M.}~\bibnamefont {Takahashi}},\
  }\bibfield  {title} {\bibinfo {title} {{Exact ground states and topological
  order in interacting Kitaev/Majorana chains}},\ }\href
  {https://doi.org/10.1103/PhysRevB.92.115137} {\bibfield  {journal} {\bibinfo
  {journal} {Phys. Rev. B}\ }\textbf {\bibinfo {volume} {92}},\ \bibinfo
  {pages} {115137} (\bibinfo {year} {2015})}\BibitemShut {NoStop}%
\bibitem [{\citenamefont {Mahyaeh}\ and\ \citenamefont
  {Ardonne}(2018)}]{mahyaeh2018exact}%
  \BibitemOpen
  \bibfield  {author} {\bibinfo {author} {\bibfnamefont {I.}~\bibnamefont
  {Mahyaeh}}\ and\ \bibinfo {author} {\bibfnamefont {E.}~\bibnamefont
  {Ardonne}},\ }\bibfield  {title} {\bibinfo {title} {{Exact results for a
  ${\mathbb{Z}}_{3}$-clock-type model and some close relatives}},\ }\href
  {https://doi.org/10.1103/PhysRevB.98.245104} {\bibfield  {journal} {\bibinfo
  {journal} {Phys. Rev. B}\ }\textbf {\bibinfo {volume} {98}},\ \bibinfo
  {pages} {245104} (\bibinfo {year} {2018})}\BibitemShut {NoStop}%
\bibitem [{\citenamefont {Bernard}\ and\ \citenamefont
  {Doussal}(2020)}]{bernard2020entanglement}%
  \BibitemOpen
  \bibfield  {author} {\bibinfo {author} {\bibfnamefont {D.}~\bibnamefont
  {Bernard}}\ and\ \bibinfo {author} {\bibfnamefont {P.~L.}\ \bibnamefont
  {Doussal}},\ }\bibfield  {title} {\bibinfo {title} {{Entanglement entropy
  growth in stochastic conformal field theory and the KPZ class}},\ }\href
  {https://doi.org/10.1209/0295-5075/131/10007} {\bibfield  {journal} {\bibinfo
   {journal} {Europhysics Letters}\ }\textbf {\bibinfo {volume} {131}},\
  \bibinfo {pages} {10007} (\bibinfo {year} {2020})}\BibitemShut {NoStop}%
\bibitem [{\citenamefont {Majumdar}()}]{majumdar2005brownian}%
  \BibitemOpen
  \bibfield  {author} {\bibinfo {author} {\bibfnamefont {S.~N.}\ \bibnamefont
  {Majumdar}},\ }\bibinfo {title} {{Brownian Functionals in Physics and
  Computer Science}},\ in\ \href {https://doi.org/10.1142/9789812772718_0006}
  {\emph {\bibinfo {booktitle} {The Legacy of Albert Einstein}}},\ pp.\
  \bibinfo {pages} {93--129}\BibitemShut {NoStop}%
\bibitem [{\citenamefont {Grigoriu}(2002)}]{grigoriu2002stochastic}%
  \BibitemOpen
  \bibfield  {author} {\bibinfo {author} {\bibfnamefont {M.}~\bibnamefont
  {Grigoriu}},\ }\href@noop {} {\emph {\bibinfo {title} {Stochastic calculus:
  applications in science and engineering}}}\ (\bibinfo  {publisher}
  {Springer},\ \bibinfo {year} {2002})\BibitemShut {NoStop}%
\bibitem [{\citenamefont {Schilling}\ and\ \citenamefont
  {Partzsch}(2014)}]{schilling2014brownian}%
  \BibitemOpen
  \bibfield  {author} {\bibinfo {author} {\bibfnamefont {R.~L.}\ \bibnamefont
  {Schilling}}\ and\ \bibinfo {author} {\bibfnamefont {L.}~\bibnamefont
  {Partzsch}},\ }\href@noop {} {\emph {\bibinfo {title} {Brownian motion: an
  introduction to stochastic processes}}}\ (\bibinfo  {publisher} {Walter de
  Gruyter GmbH \& Co KG},\ \bibinfo {year} {2014})\BibitemShut {NoStop}%
\bibitem [{\citenamefont {Suzuki}(1971)}]{suzukiequality1971}%
  \BibitemOpen
  \bibfield  {author} {\bibinfo {author} {\bibfnamefont {M.}~\bibnamefont
  {Suzuki}},\ }\bibfield  {title} {\bibinfo {title} {Ergodicity, constants of
  motion, and bounds for susceptibilities},\ }\href
  {https://doi.org/https://doi.org/10.1016/0031-8914(71)90226-6} {\bibfield
  {journal} {\bibinfo  {journal} {Physica}\ }\textbf {\bibinfo {volume} {51}},\
  \bibinfo {pages} {277} (\bibinfo {year} {1971})}\BibitemShut {NoStop}%
\bibitem [{\citenamefont {Ilievski}\ and\ \citenamefont
  {Prosen}(2013)}]{ilievski2013thermodynamic}%
  \BibitemOpen
  \bibfield  {author} {\bibinfo {author} {\bibfnamefont {E.}~\bibnamefont
  {Ilievski}}\ and\ \bibinfo {author} {\bibfnamefont {T.}~\bibnamefont
  {Prosen}},\ }\bibfield  {title} {\bibinfo {title} {{Thermodyamic Bounds on
  Drude Weights in Terms of Almost-conserved Quantities}},\ }\href
  {https://doi.org/10.1007/s00220-012-1599-4} {\bibfield  {journal} {\bibinfo
  {journal} {Communications in Mathematical Physics}\ }\textbf {\bibinfo
  {volume} {318}},\ \bibinfo {pages} {809} (\bibinfo {year}
  {2013})}\BibitemShut {NoStop}%
\bibitem [{\citenamefont {Prosen}(2014)}]{prosen2014quasilocal}%
  \BibitemOpen
  \bibfield  {author} {\bibinfo {author} {\bibfnamefont {T.}~\bibnamefont
  {Prosen}},\ }\bibfield  {title} {\bibinfo {title} {{Quasilocal conservation
  laws in XXZ spin-1/2 chains: Open, periodic and twisted boundary
  conditions}},\ }\href
  {https://doi.org/https://doi.org/10.1016/j.nuclphysb.2014.07.024} {\bibfield
  {journal} {\bibinfo  {journal} {Nuclear Physics B}\ }\textbf {\bibinfo
  {volume} {886}},\ \bibinfo {pages} {1177} (\bibinfo {year}
  {2014})}\BibitemShut {NoStop}%
\bibitem [{\citenamefont {Zadnik}\ \emph {et~al.}(2016)\citenamefont {Zadnik},
  \citenamefont {Medenjak},\ and\ \citenamefont
  {Prosen}}]{zadnik2016quasilocal}%
  \BibitemOpen
  \bibfield  {author} {\bibinfo {author} {\bibfnamefont {L.}~\bibnamefont
  {Zadnik}}, \bibinfo {author} {\bibfnamefont {M.}~\bibnamefont {Medenjak}},\
  and\ \bibinfo {author} {\bibfnamefont {T.}~\bibnamefont {Prosen}},\
  }\bibfield  {title} {\bibinfo {title} {{Quasilocal conservation laws from
  semicyclic irreducible representations of $U_q(sl_2)$ in XXZ spin-1/2
  chains}},\ }\href
  {https://doi.org/https://doi.org/10.1016/j.nuclphysb.2015.11.023} {\bibfield
  {journal} {\bibinfo  {journal} {Nuclear Physics B}\ }\textbf {\bibinfo
  {volume} {902}},\ \bibinfo {pages} {339} (\bibinfo {year}
  {2016})}\BibitemShut {NoStop}%
\bibitem [{\citenamefont {Chatterjee}\ and\ \citenamefont
  {Wen}(2023)}]{chatterjee2022algebra}%
  \BibitemOpen
  \bibfield  {author} {\bibinfo {author} {\bibfnamefont {A.}~\bibnamefont
  {Chatterjee}}\ and\ \bibinfo {author} {\bibfnamefont {X.-G.}\ \bibnamefont
  {Wen}},\ }\bibfield  {title} {\bibinfo {title} {Symmetry as a shadow of
  topological order and a derivation of topological holographic principle},\
  }\href {https://doi.org/10.1103/PhysRevB.107.155136} {\bibfield  {journal}
  {\bibinfo  {journal} {Phys. Rev. B}\ }\textbf {\bibinfo {volume} {107}},\
  \bibinfo {pages} {155136} (\bibinfo {year} {2023})}\BibitemShut {NoStop}%
\bibitem [{\citenamefont {Knabe}(1988)}]{knabe1988energy}%
  \BibitemOpen
  \bibfield  {author} {\bibinfo {author} {\bibfnamefont {S.}~\bibnamefont
  {Knabe}},\ }\bibfield  {title} {\bibinfo {title} {{Energy gaps and elementary
  excitations for certain VBS-quantum antiferromagnets}},\ }\href
  {https://doi.org/10.1007/BF01019721} {\bibfield  {journal} {\bibinfo
  {journal} {Journal of Statistical Physics}\ }\textbf {\bibinfo {volume}
  {52}},\ \bibinfo {pages} {627} (\bibinfo {year} {1988})}\BibitemShut
  {NoStop}%
\bibitem [{\citenamefont {Nachtergaele}(1996)}]{nachtergaele1996spectral}%
  \BibitemOpen
  \bibfield  {author} {\bibinfo {author} {\bibfnamefont {B.}~\bibnamefont
  {Nachtergaele}},\ }\bibfield  {title} {\bibinfo {title} {The spectral gap for
  some spin chains with discrete symmetry breaking},\ }\href
  {https://doi.org/10.1007/BF02099509} {\bibfield  {journal} {\bibinfo
  {journal} {Communications in Mathematical Physics}\ }\textbf {\bibinfo
  {volume} {175}},\ \bibinfo {pages} {565} (\bibinfo {year}
  {1996})}\BibitemShut {NoStop}%
\bibitem [{\citenamefont {Gosset}\ and\ \citenamefont
  {Mozgunov}(2016)}]{gosset2016local}%
  \BibitemOpen
  \bibfield  {author} {\bibinfo {author} {\bibfnamefont {D.}~\bibnamefont
  {Gosset}}\ and\ \bibinfo {author} {\bibfnamefont {E.}~\bibnamefont
  {Mozgunov}},\ }\bibfield  {title} {\bibinfo {title} {{Local gap threshold for
  frustration-free spin systems}},\ }\href {https://doi.org/10.1063/1.4962337}
  {\bibfield  {journal} {\bibinfo  {journal} {Journal of Mathematical Physics}\
  }\textbf {\bibinfo {volume} {57}},\ \bibinfo {pages} {091901} (\bibinfo
  {year} {2016})}\BibitemShut {NoStop}%
\bibitem [{\citenamefont {Kastoryano}\ and\ \citenamefont
  {Lucia}(2018)}]{kastoryano2018divide}%
  \BibitemOpen
  \bibfield  {author} {\bibinfo {author} {\bibfnamefont {M.~J.}\ \bibnamefont
  {Kastoryano}}\ and\ \bibinfo {author} {\bibfnamefont {A.}~\bibnamefont
  {Lucia}},\ }\bibfield  {title} {\bibinfo {title} {Divide and conquer method
  for proving gaps of frustration free hamiltonians},\ }\href
  {https://doi.org/10.1088/1742-5468/aaa793} {\bibfield  {journal} {\bibinfo
  {journal} {Journal of Statistical Mechanics: Theory and Experiment}\ }\textbf
  {\bibinfo {volume} {2018}},\ \bibinfo {pages} {033105} (\bibinfo {year}
  {2018})}\BibitemShut {NoStop}%
\bibitem [{\citenamefont {Barma}\ and\ \citenamefont
  {Dhar}(1994)}]{barma1994slow}%
  \BibitemOpen
  \bibfield  {author} {\bibinfo {author} {\bibfnamefont {M.}~\bibnamefont
  {Barma}}\ and\ \bibinfo {author} {\bibfnamefont {D.}~\bibnamefont {Dhar}},\
  }\bibfield  {title} {\bibinfo {title} {Slow relaxation in a model with many
  conservation laws: Deposition and evaporation of trimers on a line},\ }\href
  {https://doi.org/10.1103/PhysRevLett.73.2135} {\bibfield  {journal} {\bibinfo
   {journal} {Phys. Rev. Lett.}\ }\textbf {\bibinfo {volume} {73}},\ \bibinfo
  {pages} {2135} (\bibinfo {year} {1994})}\BibitemShut {NoStop}%
\bibitem [{\citenamefont {Feldmeier}\ \emph {et~al.}(2022)\citenamefont
  {Feldmeier}, \citenamefont {Witczak-Krempa},\ and\ \citenamefont
  {Knap}}]{feldmeier2022emergent}%
  \BibitemOpen
  \bibfield  {author} {\bibinfo {author} {\bibfnamefont {J.}~\bibnamefont
  {Feldmeier}}, \bibinfo {author} {\bibfnamefont {W.}~\bibnamefont
  {Witczak-Krempa}},\ and\ \bibinfo {author} {\bibfnamefont {M.}~\bibnamefont
  {Knap}},\ }\bibfield  {title} {\bibinfo {title} {Emergent tracer dynamics in
  constrained quantum systems},\ }\href
  {https://doi.org/10.1103/PhysRevB.106.094303} {\bibfield  {journal} {\bibinfo
   {journal} {Phys. Rev. B}\ }\textbf {\bibinfo {volume} {106}},\ \bibinfo
  {pages} {094303} (\bibinfo {year} {2022})}\BibitemShut {NoStop}%
\bibitem [{\citenamefont {Harris}(1965)}]{harris1965diffusion}%
  \BibitemOpen
  \bibfield  {author} {\bibinfo {author} {\bibfnamefont {T.~E.}\ \bibnamefont
  {Harris}},\ }\bibfield  {title} {\bibinfo {title} {Diffusion with
  “collisions” between particles},\ }\href
  {https://doi.org/10.2307/3212197} {\bibfield  {journal} {\bibinfo  {journal}
  {Journal of Applied Probability}\ }\textbf {\bibinfo {volume} {2}},\ \bibinfo
  {pages} {323–338} (\bibinfo {year} {1965})}\BibitemShut {NoStop}%
\bibitem [{\citenamefont {Levitt}(1973)}]{levitt1973dynamics}%
  \BibitemOpen
  \bibfield  {author} {\bibinfo {author} {\bibfnamefont {D.~G.}\ \bibnamefont
  {Levitt}},\ }\bibfield  {title} {\bibinfo {title} {Dynamics of a single-file
  pore: Non-fickian behavior},\ }\href
  {https://doi.org/10.1103/PhysRevA.8.3050} {\bibfield  {journal} {\bibinfo
  {journal} {Phys. Rev. A}\ }\textbf {\bibinfo {volume} {8}},\ \bibinfo {pages}
  {3050} (\bibinfo {year} {1973})}\BibitemShut {NoStop}%
\bibitem [{\citenamefont {Alexander}\ and\ \citenamefont
  {Pincus}(1978)}]{alexander1978diffusion}%
  \BibitemOpen
  \bibfield  {author} {\bibinfo {author} {\bibfnamefont {S.}~\bibnamefont
  {Alexander}}\ and\ \bibinfo {author} {\bibfnamefont {P.}~\bibnamefont
  {Pincus}},\ }\bibfield  {title} {\bibinfo {title} {Diffusion of labeled
  particles on one-dimensional chains},\ }\href
  {https://doi.org/10.1103/PhysRevB.18.2011} {\bibfield  {journal} {\bibinfo
  {journal} {Phys. Rev. B}\ }\textbf {\bibinfo {volume} {18}},\ \bibinfo
  {pages} {2011} (\bibinfo {year} {1978})}\BibitemShut {NoStop}%
\bibitem [{\citenamefont {Dhar}\ and\ \citenamefont
  {Barma}(1993)}]{dhar1993conservation}%
  \BibitemOpen
  \bibfield  {author} {\bibinfo {author} {\bibfnamefont {D.}~\bibnamefont
  {Dhar}}\ and\ \bibinfo {author} {\bibfnamefont {M.}~\bibnamefont {Barma}},\
  }\bibfield  {title} {\bibinfo {title} {Conservation laws in stochastic
  deposition-evaporation models in one dimension},\ }\href
  {https://doi.org/10.1007/BF02847591} {\bibfield  {journal} {\bibinfo
  {journal} {Pramana}\ }\textbf {\bibinfo {volume} {41}},\ \bibinfo {pages}
  {L193} (\bibinfo {year} {1993})}\BibitemShut {NoStop}%
\bibitem [{\citenamefont {Menon}\ \emph {et~al.}(1997)\citenamefont {Menon},
  \citenamefont {Barma},\ and\ \citenamefont {Dhar}}]{menon1997conservation}%
  \BibitemOpen
  \bibfield  {author} {\bibinfo {author} {\bibfnamefont {G.~I.}\ \bibnamefont
  {Menon}}, \bibinfo {author} {\bibfnamefont {M.}~\bibnamefont {Barma}},\ and\
  \bibinfo {author} {\bibfnamefont {D.}~\bibnamefont {Dhar}},\ }\bibfield
  {title} {\bibinfo {title} {Conservation laws and integrability of a
  one-dimensional model of diffusing dimers},\ }\href
  {https://doi.org/10.1007/BF02183622} {\bibfield  {journal} {\bibinfo
  {journal} {Journal of Statistical Physics}\ }\textbf {\bibinfo {volume}
  {86}},\ \bibinfo {pages} {1237} (\bibinfo {year} {1997})}\BibitemShut
  {NoStop}%
\bibitem [{\citenamefont {Menon}\ and\ \citenamefont
  {Dhar}(1995)}]{menon1995irreducible}%
  \BibitemOpen
  \bibfield  {author} {\bibinfo {author} {\bibfnamefont {M.~K.~H.}\
  \bibnamefont {Menon}}\ and\ \bibinfo {author} {\bibfnamefont
  {D.}~\bibnamefont {Dhar}},\ }\bibfield  {title} {\bibinfo {title} {The
  irreducible string and an infinity of additional constants of motion in a
  deposition-evaporation model on a line},\ }\href
  {https://doi.org/10.1088/0305-4470/28/23/008} {\bibfield  {journal} {\bibinfo
   {journal} {Journal of Physics A: Mathematical and General}\ }\textbf
  {\bibinfo {volume} {28}},\ \bibinfo {pages} {6517} (\bibinfo {year}
  {1995})}\BibitemShut {NoStop}%
\bibitem [{\citenamefont {Barma}(1997)}]{barma1997deposition}%
  \BibitemOpen
  \bibfield  {author} {\bibinfo {author} {\bibfnamefont {M.}~\bibnamefont
  {Barma}},\ }\bibfield  {title} {\bibinfo {title} {Deposition and evaporation
  ofk-mers: Dynamics of a many-state system},\ }\href
  {https://doi.org/10.1007/BF02856346} {\bibfield  {journal} {\bibinfo
  {journal} {Pramana}\ }\textbf {\bibinfo {volume} {49}},\ \bibinfo {pages}
  {155} (\bibinfo {year} {1997})}\BibitemShut {NoStop}%
\bibitem [{\citenamefont {Ba\~nuls}\ \emph {et~al.}(2020)\citenamefont
  {Ba\~nuls}, \citenamefont {Huse},\ and\ \citenamefont
  {Cirac}}]{banuls2020entanglement}%
  \BibitemOpen
  \bibfield  {author} {\bibinfo {author} {\bibfnamefont {M.~C.}\ \bibnamefont
  {Ba\~nuls}}, \bibinfo {author} {\bibfnamefont {D.~A.}\ \bibnamefont {Huse}},\
  and\ \bibinfo {author} {\bibfnamefont {J.~I.}\ \bibnamefont {Cirac}},\
  }\bibfield  {title} {\bibinfo {title} {{Entanglement and its relation to
  energy variance for local one-dimensional Hamiltonians}},\ }\href
  {https://doi.org/10.1103/PhysRevB.101.144305} {\bibfield  {journal} {\bibinfo
   {journal} {Phys. Rev. B}\ }\textbf {\bibinfo {volume} {101}},\ \bibinfo
  {pages} {144305} (\bibinfo {year} {2020})}\BibitemShut {NoStop}%
\bibitem [{\citenamefont {Friedman}\ \emph {et~al.}(2019)\citenamefont
  {Friedman}, \citenamefont {Chan}, \citenamefont {De~Luca},\ and\
  \citenamefont {Chalker}}]{friedman2019spectral}%
  \BibitemOpen
  \bibfield  {author} {\bibinfo {author} {\bibfnamefont {A.~J.}\ \bibnamefont
  {Friedman}}, \bibinfo {author} {\bibfnamefont {A.}~\bibnamefont {Chan}},
  \bibinfo {author} {\bibfnamefont {A.}~\bibnamefont {De~Luca}},\ and\ \bibinfo
  {author} {\bibfnamefont {J.~T.}\ \bibnamefont {Chalker}},\ }\bibfield
  {title} {\bibinfo {title} {Spectral statistics and many-body quantum chaos
  with conserved charge},\ }\href
  {https://doi.org/10.1103/PhysRevLett.123.210603} {\bibfield  {journal}
  {\bibinfo  {journal} {Phys. Rev. Lett.}\ }\textbf {\bibinfo {volume} {123}},\
  \bibinfo {pages} {210603} (\bibinfo {year} {2019})}\BibitemShut {NoStop}%
\bibitem [{\citenamefont {Singh}\ \emph {et~al.}(2021)\citenamefont {Singh},
  \citenamefont {Ware}, \citenamefont {Vasseur},\ and\ \citenamefont
  {Friedman}}]{singh2021subdiffusion}%
  \BibitemOpen
  \bibfield  {author} {\bibinfo {author} {\bibfnamefont {H.}~\bibnamefont
  {Singh}}, \bibinfo {author} {\bibfnamefont {B.~A.}\ \bibnamefont {Ware}},
  \bibinfo {author} {\bibfnamefont {R.}~\bibnamefont {Vasseur}},\ and\ \bibinfo
  {author} {\bibfnamefont {A.~J.}\ \bibnamefont {Friedman}},\ }\bibfield
  {title} {\bibinfo {title} {Subdiffusion and many-body quantum chaos with
  kinetic constraints},\ }\href
  {https://doi.org/10.1103/PhysRevLett.127.230602} {\bibfield  {journal}
  {\bibinfo  {journal} {Phys. Rev. Lett.}\ }\textbf {\bibinfo {volume} {127}},\
  \bibinfo {pages} {230602} (\bibinfo {year} {2021})}\BibitemShut {NoStop}%
\bibitem [{\citenamefont {Xu}\ and\ \citenamefont
  {Moore}(2004)}]{xu2004strong}%
  \BibitemOpen
  \bibfield  {author} {\bibinfo {author} {\bibfnamefont {C.}~\bibnamefont
  {Xu}}\ and\ \bibinfo {author} {\bibfnamefont {J.~E.}\ \bibnamefont {Moore}},\
  }\bibfield  {title} {\bibinfo {title} {Strong-weak coupling self-duality in
  the two-dimensional quantum phase transition of $p+ip$ superconducting
  arrays},\ }\href {https://doi.org/10.1103/PhysRevLett.93.047003} {\bibfield
  {journal} {\bibinfo  {journal} {Phys. Rev. Lett.}\ }\textbf {\bibinfo
  {volume} {93}},\ \bibinfo {pages} {047003} (\bibinfo {year}
  {2004})}\BibitemShut {NoStop}%
\bibitem [{\citenamefont {You}\ \emph {et~al.}(2018)\citenamefont {You},
  \citenamefont {Devakul}, \citenamefont {Burnell},\ and\ \citenamefont
  {Sondhi}}]{you2018subsystem}%
  \BibitemOpen
  \bibfield  {author} {\bibinfo {author} {\bibfnamefont {Y.}~\bibnamefont
  {You}}, \bibinfo {author} {\bibfnamefont {T.}~\bibnamefont {Devakul}},
  \bibinfo {author} {\bibfnamefont {F.~J.}\ \bibnamefont {Burnell}},\ and\
  \bibinfo {author} {\bibfnamefont {S.~L.}\ \bibnamefont {Sondhi}},\ }\bibfield
   {title} {\bibinfo {title} {Subsystem symmetry protected topological order},\
  }\href {https://doi.org/10.1103/PhysRevB.98.035112} {\bibfield  {journal}
  {\bibinfo  {journal} {Phys. Rev. B}\ }\textbf {\bibinfo {volume} {98}},\
  \bibinfo {pages} {035112} (\bibinfo {year} {2018})}\BibitemShut {NoStop}%
\bibitem [{\citenamefont {Zhou}\ \emph {et~al.}(2021)\citenamefont {Zhou},
  \citenamefont {Zhang}, \citenamefont {Pollmann},\ and\ \citenamefont
  {You}}]{zhou2021fractal}%
  \BibitemOpen
  \bibfield  {author} {\bibinfo {author} {\bibfnamefont {Z.}~\bibnamefont
  {Zhou}}, \bibinfo {author} {\bibfnamefont {X.-F.}\ \bibnamefont {Zhang}},
  \bibinfo {author} {\bibfnamefont {F.}~\bibnamefont {Pollmann}},\ and\
  \bibinfo {author} {\bibfnamefont {Y.}~\bibnamefont {You}},\ }\href@noop {}
  {\bibinfo {title} {Fractal quantum phase transitions: Critical phenomena
  beyond renormalization}} (\bibinfo {year} {2021}),\ \Eprint
  {https://arxiv.org/abs/2105.05851} {arXiv:2105.05851 [cond-mat.str-el]}
  \BibitemShut {NoStop}%
\bibitem [{\citenamefont {Wildeboer}\ \emph {et~al.}(2022)\citenamefont
  {Wildeboer}, \citenamefont {Iadecola},\ and\ \citenamefont
  {Williamson}}]{wildeboer2021symmetryprotected}%
  \BibitemOpen
  \bibfield  {author} {\bibinfo {author} {\bibfnamefont {J.}~\bibnamefont
  {Wildeboer}}, \bibinfo {author} {\bibfnamefont {T.}~\bibnamefont
  {Iadecola}},\ and\ \bibinfo {author} {\bibfnamefont {D.~J.}\ \bibnamefont
  {Williamson}},\ }\bibfield  {title} {\bibinfo {title} {{Symmetry-Protected
  Infinite-Temperature Quantum Memory from Subsystem Codes}},\ }\href
  {https://doi.org/10.1103/PRXQuantum.3.020330} {\bibfield  {journal} {\bibinfo
   {journal} {PRX Quantum}\ }\textbf {\bibinfo {volume} {3}},\ \bibinfo {pages}
  {020330} (\bibinfo {year} {2022})}\BibitemShut {NoStop}%
\bibitem [{\citenamefont {Borsi}\ \emph {et~al.}(2023)\citenamefont {Borsi},
  \citenamefont {Pristy\'ak},\ and\ \citenamefont {Pozsgay}}]{borsi2023matrix}%
  \BibitemOpen
  \bibfield  {author} {\bibinfo {author} {\bibfnamefont {M.}~\bibnamefont
  {Borsi}}, \bibinfo {author} {\bibfnamefont {L.}~\bibnamefont {Pristy\'ak}},\
  and\ \bibinfo {author} {\bibfnamefont {B.}~\bibnamefont {Pozsgay}},\
  }\bibfield  {title} {\bibinfo {title} {{Matrix Product Symmetries and
  Breakdown of Thermalization from Hard Rod Deformations}},\ }\href
  {https://doi.org/10.1103/PhysRevLett.131.037101} {\bibfield  {journal}
  {\bibinfo  {journal} {Phys. Rev. Lett.}\ }\textbf {\bibinfo {volume} {131}},\
  \bibinfo {pages} {037101} (\bibinfo {year} {2023})}\BibitemShut {NoStop}%
\bibitem [{\citenamefont {Read}\ and\ \citenamefont
  {Saleur}(2007)}]{read2007enlarged}%
  \BibitemOpen
  \bibfield  {author} {\bibinfo {author} {\bibfnamefont {N.}~\bibnamefont
  {Read}}\ and\ \bibinfo {author} {\bibfnamefont {H.}~\bibnamefont {Saleur}},\
  }\bibfield  {title} {\bibinfo {title} {{Enlarged symmetry algebras of spin
  chains, loop models, and S-matrices}},\ }\href
  {https://doi.org/https://doi.org/10.1016/j.nuclphysb.2007.03.007} {\bibfield
  {journal} {\bibinfo  {journal} {Nuclear Physics B}\ }\textbf {\bibinfo
  {volume} {777}},\ \bibinfo {pages} {263} (\bibinfo {year}
  {2007})}\BibitemShut {NoStop}%
\bibitem [{\citenamefont {Stephen}\ \emph {et~al.}(2024)\citenamefont
  {Stephen}, \citenamefont {Hart},\ and\ \citenamefont
  {Nandkishore}}]{stephen2022ergodicity}%
  \BibitemOpen
  \bibfield  {author} {\bibinfo {author} {\bibfnamefont {D.~T.}\ \bibnamefont
  {Stephen}}, \bibinfo {author} {\bibfnamefont {O.}~\bibnamefont {Hart}},\ and\
  \bibinfo {author} {\bibfnamefont {R.~M.}\ \bibnamefont {Nandkishore}},\
  }\bibfield  {title} {\bibinfo {title} {Ergodicity breaking provably robust to
  arbitrary perturbations},\ }\href
  {https://doi.org/10.1103/PhysRevLett.132.040401} {\bibfield  {journal}
  {\bibinfo  {journal} {Phys. Rev. Lett.}\ }\textbf {\bibinfo {volume} {132}},\
  \bibinfo {pages} {040401} (\bibinfo {year} {2024})}\BibitemShut {NoStop}%
\bibitem [{\citenamefont {Stahl}\ \emph {et~al.}(2024)\citenamefont {Stahl},
  \citenamefont {Nandkishore},\ and\ \citenamefont
  {Hart}}]{stahl2023topologically}%
  \BibitemOpen
  \bibfield  {author} {\bibinfo {author} {\bibfnamefont {C.}~\bibnamefont
  {Stahl}}, \bibinfo {author} {\bibfnamefont {R.}~\bibnamefont {Nandkishore}},\
  and\ \bibinfo {author} {\bibfnamefont {O.}~\bibnamefont {Hart}},\ }\bibfield
  {title} {\bibinfo {title} {{Topologically stable ergodicity breaking from
  emergent higher-form symmetries in generalized quantum loop models}},\ }\href
  {https://doi.org/10.21468/SciPostPhys.16.3.068} {\bibfield  {journal}
  {\bibinfo  {journal} {SciPost Phys.}\ }\textbf {\bibinfo {volume} {16}},\
  \bibinfo {pages} {068} (\bibinfo {year} {2024})}\BibitemShut {NoStop}%
\bibitem [{\citenamefont {{Hart}}(2023)}]{hart2023exact}%
  \BibitemOpen
  \bibfield  {author} {\bibinfo {author} {\bibfnamefont {O.}~\bibnamefont
  {{Hart}}},\ }\bibfield  {title} {\bibinfo {title} {{Exact Mazur bounds in the
  pair-flip model and beyond}},\ }\href@noop {} {\bibfield  {journal} {\bibinfo
   {journal} {arXiv e-prints}\ } (\bibinfo {year} {2023})},\ \Eprint
  {https://arxiv.org/abs/2308.00738} {arXiv:2308.00738 [cond-mat.stat-mech]}
  \BibitemShut {NoStop}%
\bibitem [{\citenamefont {Smith}\ \emph {et~al.}(2017)\citenamefont {Smith},
  \citenamefont {Knolle}, \citenamefont {Kovrizhin},\ and\ \citenamefont
  {Moessner}}]{smith2017disorder}%
  \BibitemOpen
  \bibfield  {author} {\bibinfo {author} {\bibfnamefont {A.}~\bibnamefont
  {Smith}}, \bibinfo {author} {\bibfnamefont {J.}~\bibnamefont {Knolle}},
  \bibinfo {author} {\bibfnamefont {D.~L.}\ \bibnamefont {Kovrizhin}},\ and\
  \bibinfo {author} {\bibfnamefont {R.}~\bibnamefont {Moessner}},\ }\bibfield
  {title} {\bibinfo {title} {Disorder-free localization},\ }\href
  {https://doi.org/10.1103/PhysRevLett.118.266601} {\bibfield  {journal}
  {\bibinfo  {journal} {Phys. Rev. Lett.}\ }\textbf {\bibinfo {volume} {118}},\
  \bibinfo {pages} {266601} (\bibinfo {year} {2017})}\BibitemShut {NoStop}%
\bibitem [{\citenamefont {Sala}\ \emph {et~al.}(2024)\citenamefont {Sala},
  \citenamefont {Giudici},\ and\ \citenamefont {Halimeh}}]{sala2024disorder}%
  \BibitemOpen
  \bibfield  {author} {\bibinfo {author} {\bibfnamefont {P.}~\bibnamefont
  {Sala}}, \bibinfo {author} {\bibfnamefont {G.}~\bibnamefont {Giudici}},\ and\
  \bibinfo {author} {\bibfnamefont {J.~C.}\ \bibnamefont {Halimeh}},\
  }\bibfield  {title} {\bibinfo {title} {Disorder-free localization as a purely
  classical effect},\ }\href {https://doi.org/10.1103/PhysRevB.109.L060305}
  {\bibfield  {journal} {\bibinfo  {journal} {Phys. Rev. B}\ }\textbf {\bibinfo
  {volume} {109}},\ \bibinfo {pages} {L060305} (\bibinfo {year}
  {2024})}\BibitemShut {NoStop}%
\bibitem [{\citenamefont {Bu{\v{c}}a}\ \emph {et~al.}(2019)\citenamefont
  {Bu{\v{c}}a}, \citenamefont {Tindall},\ and\ \citenamefont
  {Jaksch}}]{buca2019nonstationary}%
  \BibitemOpen
  \bibfield  {author} {\bibinfo {author} {\bibfnamefont {B.}~\bibnamefont
  {Bu{\v{c}}a}}, \bibinfo {author} {\bibfnamefont {J.}~\bibnamefont
  {Tindall}},\ and\ \bibinfo {author} {\bibfnamefont {D.}~\bibnamefont
  {Jaksch}},\ }\bibfield  {title} {\bibinfo {title} {Non-stationary coherent
  quantum many-body dynamics through dissipation},\ }\href
  {https://doi.org/10.1038/s41467-019-09757-y} {\bibfield  {journal} {\bibinfo
  {journal} {Nature Communications}\ }\textbf {\bibinfo {volume} {10}},\
  \bibinfo {pages} {1730} (\bibinfo {year} {2019})}\BibitemShut {NoStop}%
\bibitem [{\citenamefont {Yuzbashyan}\ and\ \citenamefont
  {Shastry}(2013)}]{yuzbashyan2013integrability}%
  \BibitemOpen
  \bibfield  {author} {\bibinfo {author} {\bibfnamefont {E.~A.}\ \bibnamefont
  {Yuzbashyan}}\ and\ \bibinfo {author} {\bibfnamefont {B.~S.}\ \bibnamefont
  {Shastry}},\ }\bibfield  {title} {\bibinfo {title} {Quantum integrability in
  systems with finite number of levels},\ }\href
  {https://doi.org/10.1007/s10955-013-0689-9} {\bibfield  {journal} {\bibinfo
  {journal} {Journal of Statistical Physics}\ }\textbf {\bibinfo {volume}
  {150}},\ \bibinfo {pages} {704} (\bibinfo {year} {2013})}\BibitemShut
  {NoStop}%
\bibitem [{\citenamefont {Lieb}\ \emph {et~al.}(1961)\citenamefont {Lieb},
  \citenamefont {Schultz},\ and\ \citenamefont {Mattis}}]{lsm1961soluble}%
  \BibitemOpen
  \bibfield  {author} {\bibinfo {author} {\bibfnamefont {E.}~\bibnamefont
  {Lieb}}, \bibinfo {author} {\bibfnamefont {T.}~\bibnamefont {Schultz}},\ and\
  \bibinfo {author} {\bibfnamefont {D.}~\bibnamefont {Mattis}},\ }\bibfield
  {title} {\bibinfo {title} {Two soluble models of an antiferromagnetic
  chain},\ }\href
  {https://doi.org/https://doi.org/10.1016/0003-4916(61)90115-4} {\bibfield
  {journal} {\bibinfo  {journal} {Annals of Physics}\ }\textbf {\bibinfo
  {volume} {16}},\ \bibinfo {pages} {407 } (\bibinfo {year}
  {1961})}\BibitemShut {NoStop}%
\bibitem [{\citenamefont {Oshikawa}(2000)}]{oshikawa2000commensurability}%
  \BibitemOpen
  \bibfield  {author} {\bibinfo {author} {\bibfnamefont {M.}~\bibnamefont
  {Oshikawa}},\ }\bibfield  {title} {\bibinfo {title} {Commensurability,
  excitation gap, and topology in quantum many-particle systems on a periodic
  lattice},\ }\href {https://doi.org/10.1103/PhysRevLett.84.1535} {\bibfield
  {journal} {\bibinfo  {journal} {Phys. Rev. Lett.}\ }\textbf {\bibinfo
  {volume} {84}},\ \bibinfo {pages} {1535} (\bibinfo {year}
  {2000})}\BibitemShut {NoStop}%
\bibitem [{\citenamefont {Hastings}(2004)}]{hastings2004lieb}%
  \BibitemOpen
  \bibfield  {author} {\bibinfo {author} {\bibfnamefont {M.~B.}\ \bibnamefont
  {Hastings}},\ }\bibfield  {title} {\bibinfo {title} {{Lieb-Schultz-Mattis in
  higher dimensions}},\ }\href {https://doi.org/10.1103/PhysRevB.69.104431}
  {\bibfield  {journal} {\bibinfo  {journal} {Phys. Rev. B}\ }\textbf {\bibinfo
  {volume} {69}},\ \bibinfo {pages} {104431} (\bibinfo {year}
  {2004})}\BibitemShut {NoStop}%
\bibitem [{\citenamefont {Chung}\ and\ \citenamefont
  {Graham}(1997)}]{chung1997spectral}%
  \BibitemOpen
  \bibfield  {author} {\bibinfo {author} {\bibfnamefont {F.~R.}\ \bibnamefont
  {Chung}}\ and\ \bibinfo {author} {\bibfnamefont {F.~C.}\ \bibnamefont
  {Graham}},\ }\href@noop {} {\emph {\bibinfo {title} {Spectral graph
  theory}}},\ \bibinfo {number} {92}\ (\bibinfo  {publisher} {American
  Mathematical Soc.},\ \bibinfo {year} {1997})\BibitemShut {NoStop}%
\bibitem [{\citenamefont {Ritort}\ and\ \citenamefont
  {Sollich}(2003)}]{ritort2003glassy}%
  \BibitemOpen
  \bibfield  {author} {\bibinfo {author} {\bibfnamefont {F.}~\bibnamefont
  {Ritort}}\ and\ \bibinfo {author} {\bibfnamefont {P.}~\bibnamefont
  {Sollich}},\ }\bibfield  {title} {\bibinfo {title} {Glassy dynamics of
  kinetically constrained models},\ }\href
  {https://doi.org/10.1080/0001873031000093582} {\bibfield  {journal} {\bibinfo
   {journal} {Advances in Physics}\ }\textbf {\bibinfo {volume} {52}},\
  \bibinfo {pages} {219} (\bibinfo {year} {2003})}\BibitemShut {NoStop}%
\bibitem [{\citenamefont {Morningstar}\ \emph {et~al.}(2020)\citenamefont
  {Morningstar}, \citenamefont {Khemani},\ and\ \citenamefont
  {Huse}}]{morningstar2020kineticallyconstrained}%
  \BibitemOpen
  \bibfield  {author} {\bibinfo {author} {\bibfnamefont {A.}~\bibnamefont
  {Morningstar}}, \bibinfo {author} {\bibfnamefont {V.}~\bibnamefont
  {Khemani}},\ and\ \bibinfo {author} {\bibfnamefont {D.~A.}\ \bibnamefont
  {Huse}},\ }\bibfield  {title} {\bibinfo {title} {Kinetically constrained
  freezing transition in a dipole-conserving system},\ }\href
  {https://doi.org/10.1103/PhysRevB.101.214205} {\bibfield  {journal} {\bibinfo
   {journal} {Phys. Rev. B}\ }\textbf {\bibinfo {volume} {101}},\ \bibinfo
  {pages} {214205} (\bibinfo {year} {2020})}\BibitemShut {NoStop}%
\bibitem [{\citenamefont {Yan}\ \emph {et~al.}(2022)\citenamefont {Yan},
  \citenamefont {Meng}, \citenamefont {Huse},\ and\ \citenamefont
  {Chan}}]{yan2022height}%
  \BibitemOpen
  \bibfield  {author} {\bibinfo {author} {\bibfnamefont {Z.}~\bibnamefont
  {Yan}}, \bibinfo {author} {\bibfnamefont {Z.~Y.}\ \bibnamefont {Meng}},
  \bibinfo {author} {\bibfnamefont {D.~A.}\ \bibnamefont {Huse}},\ and\
  \bibinfo {author} {\bibfnamefont {A.}~\bibnamefont {Chan}},\ }\bibfield
  {title} {\bibinfo {title} {Height-conserving quantum dimer models},\ }\href
  {https://doi.org/10.1103/PhysRevB.106.L041115} {\bibfield  {journal}
  {\bibinfo  {journal} {Phys. Rev. B}\ }\textbf {\bibinfo {volume} {106}},\
  \bibinfo {pages} {L041115} (\bibinfo {year} {2022})}\BibitemShut {NoStop}%
\bibitem [{\citenamefont {Henley}(2004)}]{henley2004classical}%
  \BibitemOpen
  \bibfield  {author} {\bibinfo {author} {\bibfnamefont {C.~L.}\ \bibnamefont
  {Henley}},\ }\bibfield  {title} {\bibinfo {title} {{From classical to quantum
  dynamics at Rokhsar{\textendash}Kivelson points}},\ }\href
  {https://doi.org/10.1088/0953-8984/16/11/045} {\bibfield  {journal} {\bibinfo
   {journal} {Journal of Physics: Condensed Matter}\ }\textbf {\bibinfo
  {volume} {16}},\ \bibinfo {pages} {S891} (\bibinfo {year}
  {2004})}\BibitemShut {NoStop}%
\bibitem [{\citenamefont {Moessner}\ and\ \citenamefont
  {Raman}(2011)}]{moessner2011dimer}%
  \BibitemOpen
  \bibfield  {author} {\bibinfo {author} {\bibfnamefont {R.}~\bibnamefont
  {Moessner}}\ and\ \bibinfo {author} {\bibfnamefont {K.~S.}\ \bibnamefont
  {Raman}},\ }\bibfield  {title} {\bibinfo {title} {Quantum dimer models},\
  }in\ \href@noop {} {\emph {\bibinfo {booktitle} {Introduction to Frustrated
  Magnetism}}}\ (\bibinfo  {publisher} {Springer},\ \bibinfo {year} {2011})\
  pp.\ \bibinfo {pages} {437--479}\BibitemShut {NoStop}%
\bibitem [{\citenamefont {Fradkin}(2013)}]{fradkin2013field}%
  \BibitemOpen
  \bibfield  {author} {\bibinfo {author} {\bibfnamefont {E.}~\bibnamefont
  {Fradkin}},\ }\href@noop {} {\emph {\bibinfo {title} {Field theories of
  condensed matter physics}}}\ (\bibinfo  {publisher} {Cambridge University
  Press},\ \bibinfo {year} {2013})\BibitemShut {NoStop}%
\bibitem [{\citenamefont {Chou}\ \emph {et~al.}(1985)\citenamefont {Chou},
  \citenamefont {Su}, \citenamefont {Hao},\ and\ \citenamefont
  {Yu}}]{chou1985equilibrium}%
  \BibitemOpen
  \bibfield  {author} {\bibinfo {author} {\bibfnamefont {K.-c.}\ \bibnamefont
  {Chou}}, \bibinfo {author} {\bibfnamefont {Z.-b.}\ \bibnamefont {Su}},
  \bibinfo {author} {\bibfnamefont {B.-l.}\ \bibnamefont {Hao}},\ and\ \bibinfo
  {author} {\bibfnamefont {L.}~\bibnamefont {Yu}},\ }\bibfield  {title}
  {\bibinfo {title} {Equilibrium and nonequilibrium formalisms made unified},\
  }\href {https://doi.org/https://doi.org/10.1016/0370-1573(85)90136-X}
  {\bibfield  {journal} {\bibinfo  {journal} {Physics Reports}\ }\textbf
  {\bibinfo {volume} {118}},\ \bibinfo {pages} {1} (\bibinfo {year}
  {1985})}\BibitemShut {NoStop}%
\bibitem [{\citenamefont {Landsman}\ and\ \citenamefont {{van
  Weert}}(1987)}]{landsman1987real}%
  \BibitemOpen
  \bibfield  {author} {\bibinfo {author} {\bibfnamefont {N.}~\bibnamefont
  {Landsman}}\ and\ \bibinfo {author} {\bibfnamefont {C.}~\bibnamefont {{van
  Weert}}},\ }\bibfield  {title} {\bibinfo {title} {Real- and imaginary-time
  field theory at finite temperature and density},\ }\href
  {https://doi.org/https://doi.org/10.1016/0370-1573(87)90121-9} {\bibfield
  {journal} {\bibinfo  {journal} {Physics Reports}\ }\textbf {\bibinfo {volume}
  {145}},\ \bibinfo {pages} {141} (\bibinfo {year} {1987})}\BibitemShut
  {NoStop}%
\bibitem [{\citenamefont {Kamenev}\ and\ \citenamefont
  {Levchenko}(2009)}]{kamenev2009keldysh}%
  \BibitemOpen
  \bibfield  {author} {\bibinfo {author} {\bibfnamefont {A.}~\bibnamefont
  {Kamenev}}\ and\ \bibinfo {author} {\bibfnamefont {A.}~\bibnamefont
  {Levchenko}},\ }\bibfield  {title} {\bibinfo {title} {Keldysh technique and
  non-linear $\sigma$-model: basic principles and applications},\ }\href
  {https://doi.org/10.1080/00018730902850504} {\bibfield  {journal} {\bibinfo
  {journal} {Advances in Physics}\ }\textbf {\bibinfo {volume} {58}},\ \bibinfo
  {pages} {197} (\bibinfo {year} {2009})}\BibitemShut {NoStop}%
\bibitem [{\citenamefont {Haehl}\ \emph {et~al.}(2017)\citenamefont {Haehl},
  \citenamefont {Loganayagam},\ and\ \citenamefont
  {Rangamani}}]{haehl2017schwinger}%
  \BibitemOpen
  \bibfield  {author} {\bibinfo {author} {\bibfnamefont {F.~M.}\ \bibnamefont
  {Haehl}}, \bibinfo {author} {\bibfnamefont {R.}~\bibnamefont {Loganayagam}},\
  and\ \bibinfo {author} {\bibfnamefont {M.}~\bibnamefont {Rangamani}},\
  }\bibfield  {title} {\bibinfo {title} {{Schwinger-Keldysh formalism. Part I:
  BRST symmetries and superspace}},\ }\href
  {https://doi.org/10.1007/JHEP06(2017)069} {\bibfield  {journal} {\bibinfo
  {journal} {Journal of High Energy Physics}\ }\textbf {\bibinfo {volume}
  {2017}},\ \bibinfo {pages} {69} (\bibinfo {year} {2017})}\BibitemShut
  {NoStop}%
\bibitem [{\citenamefont {Kong}\ \emph {et~al.}(2020)\citenamefont {Kong},
  \citenamefont {Lan}, \citenamefont {Wen}, \citenamefont {Zhang},\ and\
  \citenamefont {Zheng}}]{kong2020algebraic}%
  \BibitemOpen
  \bibfield  {author} {\bibinfo {author} {\bibfnamefont {L.}~\bibnamefont
  {Kong}}, \bibinfo {author} {\bibfnamefont {T.}~\bibnamefont {Lan}}, \bibinfo
  {author} {\bibfnamefont {X.-G.}\ \bibnamefont {Wen}}, \bibinfo {author}
  {\bibfnamefont {Z.-H.}\ \bibnamefont {Zhang}},\ and\ \bibinfo {author}
  {\bibfnamefont {H.}~\bibnamefont {Zheng}},\ }\bibfield  {title} {\bibinfo
  {title} {{Algebraic higher symmetry and categorical symmetry: A holographic
  and entanglement view of symmetry}},\ }\href
  {https://doi.org/10.1103/PhysRevResearch.2.043086} {\bibfield  {journal}
  {\bibinfo  {journal} {Phys. Rev. Res.}\ }\textbf {\bibinfo {volume} {2}},\
  \bibinfo {pages} {043086} (\bibinfo {year} {2020})}\BibitemShut {NoStop}%
\bibitem [{\citenamefont {Bull}\ \emph {et~al.}(2020)\citenamefont {Bull},
  \citenamefont {Desaules},\ and\ \citenamefont {Papi\ifmmode~\acute{c}\else
  \'{c}\fi{}}}]{bull2020quantum}%
  \BibitemOpen
  \bibfield  {author} {\bibinfo {author} {\bibfnamefont {K.}~\bibnamefont
  {Bull}}, \bibinfo {author} {\bibfnamefont {J.-Y.}\ \bibnamefont {Desaules}},\
  and\ \bibinfo {author} {\bibfnamefont {Z.}~\bibnamefont
  {Papi\ifmmode~\acute{c}\else \'{c}\fi{}}},\ }\bibfield  {title} {\bibinfo
  {title} {Quantum scars as embeddings of weakly broken {L}ie algebra
  representations},\ }\href {https://doi.org/10.1103/PhysRevB.101.165139}
  {\bibfield  {journal} {\bibinfo  {journal} {Phys. Rev. B}\ }\textbf {\bibinfo
  {volume} {101}},\ \bibinfo {pages} {165139} (\bibinfo {year}
  {2020})}\BibitemShut {NoStop}%
\bibitem [{\citenamefont {Rakovszky}\ \emph {et~al.}(2018)\citenamefont
  {Rakovszky}, \citenamefont {Pollmann},\ and\ \citenamefont {von
  Keyserlingk}}]{rakovszky2018diffusive}%
  \BibitemOpen
  \bibfield  {author} {\bibinfo {author} {\bibfnamefont {T.}~\bibnamefont
  {Rakovszky}}, \bibinfo {author} {\bibfnamefont {F.}~\bibnamefont
  {Pollmann}},\ and\ \bibinfo {author} {\bibfnamefont {C.~W.}\ \bibnamefont
  {von Keyserlingk}},\ }\bibfield  {title} {\bibinfo {title} {Diffusive
  hydrodynamics of out-of-time-ordered correlators with charge conservation},\
  }\href {https://doi.org/10.1103/PhysRevX.8.031058} {\bibfield  {journal}
  {\bibinfo  {journal} {Phys. Rev. X}\ }\textbf {\bibinfo {volume} {8}},\
  \bibinfo {pages} {031058} (\bibinfo {year} {2018})}\BibitemShut {NoStop}%
\bibitem [{\citenamefont {Khemani}\ \emph {et~al.}(2018)\citenamefont
  {Khemani}, \citenamefont {Vishwanath},\ and\ \citenamefont
  {Huse}}]{khemani2018operator}%
  \BibitemOpen
  \bibfield  {author} {\bibinfo {author} {\bibfnamefont {V.}~\bibnamefont
  {Khemani}}, \bibinfo {author} {\bibfnamefont {A.}~\bibnamefont
  {Vishwanath}},\ and\ \bibinfo {author} {\bibfnamefont {D.~A.}\ \bibnamefont
  {Huse}},\ }\bibfield  {title} {\bibinfo {title} {Operator spreading and the
  emergence of dissipative hydrodynamics under unitary evolution with
  conservation laws},\ }\href {https://doi.org/10.1103/PhysRevX.8.031057}
  {\bibfield  {journal} {\bibinfo  {journal} {Phys. Rev. X}\ }\textbf {\bibinfo
  {volume} {8}},\ \bibinfo {pages} {031057} (\bibinfo {year}
  {2018})}\BibitemShut {NoStop}%
\bibitem [{\citenamefont {{Hearth}}\ \emph {et~al.}(2023)\citenamefont
  {{Hearth}}, \citenamefont {{Flynn}}, \citenamefont {{Chandran}},\ and\
  \citenamefont {{Laumann}}}]{hearth2023unitary}%
  \BibitemOpen
  \bibfield  {author} {\bibinfo {author} {\bibfnamefont {S.~N.}\ \bibnamefont
  {{Hearth}}}, \bibinfo {author} {\bibfnamefont {M.~O.}\ \bibnamefont
  {{Flynn}}}, \bibinfo {author} {\bibfnamefont {A.}~\bibnamefont
  {{Chandran}}},\ and\ \bibinfo {author} {\bibfnamefont {C.~R.}\ \bibnamefont
  {{Laumann}}},\ }\bibfield  {title} {\bibinfo {title} {{Unitary k-designs from
  random number-conserving quantum circuits}},\ }\href@noop {} {\bibfield
  {journal} {\bibinfo  {journal} {arXiv e-prints}\ } (\bibinfo {year}
  {2023})},\ \Eprint {https://arxiv.org/abs/2306.01035} {arXiv:2306.01035
  [cond-mat.stat-mech]} \BibitemShut {NoStop}%
\bibitem [{\citenamefont {{Faddeev}}(1996)}]{faddeev1996bethe}%
  \BibitemOpen
  \bibfield  {author} {\bibinfo {author} {\bibfnamefont {L.~D.}\ \bibnamefont
  {{Faddeev}}},\ }\bibfield  {title} {\bibinfo {title} {{How Algebraic Bethe
  Ansatz works for integrable model}},\ }\href@noop {} {\bibfield  {journal}
  {\bibinfo  {journal} {arXiv e-prints}\ } (\bibinfo {year} {1996})},\ \Eprint
  {https://arxiv.org/abs/hep-th/9605187} {arXiv:hep-th/9605187 [hep-th]}
  \BibitemShut {NoStop}%
\bibitem [{\citenamefont {Karabach}\ \emph {et~al.}(1997)\citenamefont
  {Karabach}, \citenamefont {M{\"u}ller}, \citenamefont {Gould},\ and\
  \citenamefont {Tobochnik}}]{karabach1997bethe}%
  \BibitemOpen
  \bibfield  {author} {\bibinfo {author} {\bibfnamefont {M.}~\bibnamefont
  {Karabach}}, \bibinfo {author} {\bibfnamefont {G.}~\bibnamefont
  {M{\"u}ller}}, \bibinfo {author} {\bibfnamefont {H.}~\bibnamefont {Gould}},\
  and\ \bibinfo {author} {\bibfnamefont {J.}~\bibnamefont {Tobochnik}},\
  }\bibfield  {title} {\bibinfo {title} {{Introduction to the Bethe Ansatz
  I}},\ }\href {https://doi.org/10.1063/1.4822511} {\bibfield  {journal}
  {\bibinfo  {journal} {Computer in Physics}\ }\textbf {\bibinfo {volume}
  {11}},\ \bibinfo {pages} {36} (\bibinfo {year} {1997})}\BibitemShut {NoStop}%
\bibitem [{\citenamefont {Levkovich-Maslyuk}(2016)}]{levkovich2016bethe}%
  \BibitemOpen
  \bibfield  {author} {\bibinfo {author} {\bibfnamefont {F.}~\bibnamefont
  {Levkovich-Maslyuk}},\ }\bibfield  {title} {\bibinfo {title} {{The Bethe
  ansatz}},\ }\href {https://doi.org/10.1088/1751-8113/49/32/323004} {\bibfield
   {journal} {\bibinfo  {journal} {Journal of Physics A: Mathematical and
  Theoretical}\ }\textbf {\bibinfo {volume} {49}},\ \bibinfo {pages} {323004}
  (\bibinfo {year} {2016})}\BibitemShut {NoStop}%
\bibitem [{\citenamefont {Rokhsar}\ and\ \citenamefont
  {Kivelson}(1988)}]{rk1988superconductivity}%
  \BibitemOpen
  \bibfield  {author} {\bibinfo {author} {\bibfnamefont {D.~S.}\ \bibnamefont
  {Rokhsar}}\ and\ \bibinfo {author} {\bibfnamefont {S.~A.}\ \bibnamefont
  {Kivelson}},\ }\bibfield  {title} {\bibinfo {title} {{Superconductivity and
  the Quantum Hard-Core Dimer Gas}},\ }\href
  {https://doi.org/10.1103/PhysRevLett.61.2376} {\bibfield  {journal} {\bibinfo
   {journal} {Phys. Rev. Lett.}\ }\textbf {\bibinfo {volume} {61}},\ \bibinfo
  {pages} {2376} (\bibinfo {year} {1988})}\BibitemShut {NoStop}%
\bibitem [{\citenamefont {Moessner}\ and\ \citenamefont
  {Sondhi}(2001)}]{moessner2001resonating}%
  \BibitemOpen
  \bibfield  {author} {\bibinfo {author} {\bibfnamefont {R.}~\bibnamefont
  {Moessner}}\ and\ \bibinfo {author} {\bibfnamefont {S.~L.}\ \bibnamefont
  {Sondhi}},\ }\bibfield  {title} {\bibinfo {title} {{Resonating Valence Bond
  Phase in the Triangular Lattice Quantum Dimer Model}},\ }\href
  {https://doi.org/10.1103/PhysRevLett.86.1881} {\bibfield  {journal} {\bibinfo
   {journal} {Phys. Rev. Lett.}\ }\textbf {\bibinfo {volume} {86}},\ \bibinfo
  {pages} {1881} (\bibinfo {year} {2001})}\BibitemShut {NoStop}%
\bibitem [{\citenamefont {Roy}\ and\ \citenamefont
  {Prosen}(2020)}]{roy2020random}%
  \BibitemOpen
  \bibfield  {author} {\bibinfo {author} {\bibfnamefont {D.}~\bibnamefont
  {Roy}}\ and\ \bibinfo {author} {\bibfnamefont {T.}~\bibnamefont {Prosen}},\
  }\bibfield  {title} {\bibinfo {title} {Random matrix spectral form factor in
  kicked interacting fermionic chains},\ }\href
  {https://doi.org/10.1103/PhysRevE.102.060202} {\bibfield  {journal} {\bibinfo
   {journal} {Phys. Rev. E}\ }\textbf {\bibinfo {volume} {102}},\ \bibinfo
  {pages} {060202} (\bibinfo {year} {2020})}\BibitemShut {NoStop}%
\bibitem [{\citenamefont {Bu{\v{c}}a}\ and\ \citenamefont
  {Prosen}(2012)}]{buca2012note}%
  \BibitemOpen
  \bibfield  {author} {\bibinfo {author} {\bibfnamefont {B.}~\bibnamefont
  {Bu{\v{c}}a}}\ and\ \bibinfo {author} {\bibfnamefont {T.}~\bibnamefont
  {Prosen}},\ }\bibfield  {title} {\bibinfo {title} {{A note on symmetry
  reductions of the Lindblad equation: transport in constrained open spin
  chains}},\ }\href {https://doi.org/10.1088/1367-2630/14/7/073007} {\bibfield
  {journal} {\bibinfo  {journal} {New Journal of Physics}\ }\textbf {\bibinfo
  {volume} {14}},\ \bibinfo {pages} {073007} (\bibinfo {year}
  {2012})}\BibitemShut {NoStop}%
\bibitem [{\citenamefont {Albert}\ and\ \citenamefont
  {Jiang}(2014)}]{albert2014symmetries}%
  \BibitemOpen
  \bibfield  {author} {\bibinfo {author} {\bibfnamefont {V.~V.}\ \bibnamefont
  {Albert}}\ and\ \bibinfo {author} {\bibfnamefont {L.}~\bibnamefont {Jiang}},\
  }\bibfield  {title} {\bibinfo {title} {{Symmetries and conserved quantities
  in Lindblad master equations}},\ }\href
  {https://doi.org/10.1103/PhysRevA.89.022118} {\bibfield  {journal} {\bibinfo
  {journal} {Phys. Rev. A}\ }\textbf {\bibinfo {volume} {89}},\ \bibinfo
  {pages} {022118} (\bibinfo {year} {2014})}\BibitemShut {NoStop}%
\bibitem [{\citenamefont {Dutta}\ and\ \citenamefont
  {Sen}(2003)}]{dutta2003gapless}%
  \BibitemOpen
  \bibfield  {author} {\bibinfo {author} {\bibfnamefont {A.}~\bibnamefont
  {Dutta}}\ and\ \bibinfo {author} {\bibfnamefont {D.}~\bibnamefont {Sen}},\
  }\bibfield  {title} {\bibinfo {title} {{Gapless line for the anisotropic
  Heisenberg
  $\mathrm{s}\mathrm{p}\mathrm{i}\mathrm{n}\ensuremath{-}\frac{1}{2}$ chain in
  a magnetic field and the quantum axial next-nearest-neighbor Ising chain}},\
  }\href {https://doi.org/10.1103/PhysRevB.67.094435} {\bibfield  {journal}
  {\bibinfo  {journal} {Phys. Rev. B}\ }\textbf {\bibinfo {volume} {67}},\
  \bibinfo {pages} {094435} (\bibinfo {year} {2003})}\BibitemShut {NoStop}%
\bibitem [{\citenamefont {Campos~Venuti}\ and\ \citenamefont
  {Zanardi}(2010)}]{camposvenuti2010unitary}%
  \BibitemOpen
  \bibfield  {author} {\bibinfo {author} {\bibfnamefont {L.}~\bibnamefont
  {Campos~Venuti}}\ and\ \bibinfo {author} {\bibfnamefont {P.}~\bibnamefont
  {Zanardi}},\ }\bibfield  {title} {\bibinfo {title} {{Unitary equilibrations:
  Probability distribution of the Loschmidt echo}},\ }\href
  {https://doi.org/10.1103/PhysRevA.81.022113} {\bibfield  {journal} {\bibinfo
  {journal} {Phys. Rev. A}\ }\textbf {\bibinfo {volume} {81}},\ \bibinfo
  {pages} {022113} (\bibinfo {year} {2010})}\BibitemShut {NoStop}%
\bibitem [{\citenamefont {Lin}\ and\ \citenamefont
  {Motrunich}(2019)}]{lin2019exact}%
  \BibitemOpen
  \bibfield  {author} {\bibinfo {author} {\bibfnamefont {C.-J.}\ \bibnamefont
  {Lin}}\ and\ \bibinfo {author} {\bibfnamefont {O.~I.}\ \bibnamefont
  {Motrunich}},\ }\bibfield  {title} {\bibinfo {title} {Exact quantum many-body
  scar states in the {Rydberg}-blockaded atom chain},\ }\href
  {https://doi.org/10.1103/PhysRevLett.122.173401} {\bibfield  {journal}
  {\bibinfo  {journal} {Phys. Rev. Lett.}\ }\textbf {\bibinfo {volume} {122}},\
  \bibinfo {pages} {173401} (\bibinfo {year} {2019})}\BibitemShut {NoStop}%
\bibitem [{\citenamefont {Lin}\ \emph {et~al.}(2020)\citenamefont {Lin},
  \citenamefont {Chandran},\ and\ \citenamefont {Motrunich}}]{lin2020slow}%
  \BibitemOpen
  \bibfield  {author} {\bibinfo {author} {\bibfnamefont {C.-J.}\ \bibnamefont
  {Lin}}, \bibinfo {author} {\bibfnamefont {A.}~\bibnamefont {Chandran}},\ and\
  \bibinfo {author} {\bibfnamefont {O.~I.}\ \bibnamefont {Motrunich}},\
  }\bibfield  {title} {\bibinfo {title} {Slow thermalization of exact quantum
  many-body scar states under perturbations},\ }\href
  {https://doi.org/10.1103/PhysRevResearch.2.033044} {\bibfield  {journal}
  {\bibinfo  {journal} {Phys. Rev. Res.}\ }\textbf {\bibinfo {volume} {2}},\
  \bibinfo {pages} {033044} (\bibinfo {year} {2020})}\BibitemShut {NoStop}%
\end{thebibliography}%
\appendix 
\onecolumngrid
\section{The Ferromagnetic Heisenberg Model}\label{app:Heiscanon}
In this appendix, we define a canonical form for the ferromagnetic Heisenberg model and set the conventions we use to describe it and its eigenstates.
This appears repeatedly in the analysis of various super-Hamiltonians we study in the main text.
This is a spin-$\frac{1}{2}$ Hamiltonian acting on a system of size $L$ with the local degrees of freedom $\ket{\up}_j$ and $\ket{\dn}_j$ in the $\hat{z}$-basis or $\ket{\rt}_j$ and $\ket{\lt}_j$ in the $\hat{x}$-basis.
We use the convention that these are related as
\begin{equation}
    \ket{\rt}_j \defn \frac{\ket{\up}_j + \ket{\dn}_j}{\sqrt{2}},\;\;\;\ket{\lt}_j \defn \frac{\ket{\up}_j - \ket{\dn}_j}{\sqrt{2}}.
\label{eq:xzconvention}
\end{equation}
The standard forms of the Heisenberg Hamiltonian we use in this work are given by
\begin{align}
    H_{\text{Heis}} &= \sumal{j = 1}{L_{\text{max}}}{(\ket{\up \dn} - \ket{\dn \up})(\bra{\up \dn} - \bra{\dn \up})}_{[j, j+1]} = \sumal{j = 1}{L_{\text{max}}}{(\ket{\rt \lt} - \ket{\lt \rt})(\bra{\rt \lt} - \bra{\lt \rt})}_{[j, j+1]}
    = \sumal{j = 1}{L_{\text{max}}} (1 - P_{j,j+1}^{\text{(2)}})
    \nn \\
    &=\frac{1}{2}\sumal{j = 1}{L_{\text{max}}}{[1 - (X_j X_{j+1} + Y_j Y_{j+1} + Z_j Z_{j+1})]} = 2\sumal{j = 1}{L_{\text{max}}}{[\frac{1}{4} - \vec{S}_j\cdot\vec{S}_{j+1}]}
\label{eq:heisenbergcanon}
\end{align}
where $\{X_j, Y_j, Z_j\}$ are the Pauli operators on site $j$, $\{S^x_j, S^y_j, S^z_j\}$ are the spin operators on site $j$, which are the Pauli matrices multiplied by a factor of $\frac{1}{2}$, and $P^{(2)}_{j,j+1}$ is the operator that permutes the states on site $j$ and $j+1$ defined in Eq.~(\ref{eq:P2exch}).
Moreover, $L_{\max} = L - 1$ with OBC, and $L_{\max} = L$ with PBC with the subscripts taken to be modulo $L$.
We also define raising and lowering operators
\begin{equation}
    S^{z-}_{\tot} \defn \sumal{j = 1}{L}{S^{z-}_j} = \sumal{j = 1}{L}{\ketbra{\dn}{\up}_j},\;\;S^{x-}_{\tot} \defn \sumal{j = 1}{L}{S^{x-}_j} = \sumal{j = 1}{L}{\ketbra{\lt}{\rt}_j},\;\;S^{z+}_{\tot} \defn  (S^{z-}_{\tot})^\dagger,\;\;S^{x+}_{\tot} \defn (S^{x-}_{\tot})^\dagger, 
\label{eq:raisinglowering}
\end{equation}
which all commute with $H_{\text{Heis}}$.
This Hamiltonian has $(L+1)$-fold degenerate frustration-free ground states, which we often refer to as the ``ferromagnetic multiplet."
To obtain an orthonormal basis for this multiplet, we can start with the fully polarized state with all spins in either the $+\hat{z}$ or $+\hat{x}$ direction, which read
\begin{equation}
    \ket{F^{z}_0} \defn \ket{\up\up\cdots\up},\;\;\;\ket{F^x_0} \defn \ket{\rt\rt\cdots\rt}, 
\label{eq:polarizedstates}
\end{equation}
and repeatedly act with the corresponding lowering operators $S^{z-}_{\tot}$ or $S^{x-}_{\tot}$ of Eq.~(\ref{eq:raisinglowering}) respectively to obtain $L+1$ linearly independent states of the form
\begin{equation}
   \ket{F^\alpha_{m}} = \frac{1}{\sqrt{\binom{L}{m}}}\sumal{j_1 < \cdots < j_m}{}{S^{\alpha-}_{j_1} S^{\alpha-}_{j_2} \cdots S^{\alpha-}_{j_m} \ket{F^\alpha_0}},\;\; 0 \leq m \leq L,\;\;\alpha \in \{z, x\}.
\label{eq:HeisenbergGS}
\end{equation}
Beyond the ground state space, the low-energy excitations of $H_{\text{Heis}}$ are well-known to be spin-waves.
These spin-waves are $(L-1)$-fold degenerate (corresponding to degeneracy of a multiplet with total spin of $L/2-1$), and a complete orthonormal basis for these degenerate eigenstates can be chosen as
\begin{equation}
   \ket{\lambda^\alpha_{m, k}} = \frac{1}{\sqrt{\mM_{m,k}}}\sumal{j_1 < \cdots < j_m}{}{\left[\left(\sumal{\ell = 1}{m}{c_{j_{\ell},k}}\right)S^{\alpha-}_{j_1} \cdots S^{\alpha-}_{j_{m}}\ket{F^\alpha_{0}}\right]},\;\;\;1 \leq m \leq L-1,\;\;\;\alpha \in \{z, x\},
\label{eq:Heisenbergspinwavestates}
\end{equation}
where $k$ labels orthonormal ``orbitals" $c_{j,k}$ in the single-magnon problem, e.g., $k$ is the plane wave momentum in the PBC case:
\begin{equation}
c_{j, k}^{\text{PBC}} \defn \frac{1}{\sqrt{L}} e^{i k j}, \quad k = \frac{2\pi n}{L}, \;\; 1 \leq n \leq L-1 ~;
\label{eq:HeisenbergPBC}
\end{equation}
or $k$ is the appropriate standing wave ``momentum" in the OBC case:
\begin{equation}
c_{j, k}^{\text{OBC}} \defn \sqrt{\frac{2}{L}} \cos[k(j-1/2)], \quad k = \frac{\pi n}{L}, \;\; 1 \leq n \leq L-1 ~.
\label{eq:HeisenbergOBC}
\end{equation}
Further unpacking Eq.~(\ref{eq:Heisenbergspinwavestates}), integer $1 \leq m \leq L-1$ labels states in the given $SU(2)$ multiplet with fixed $k$, and $\mM_{m,k}$ is a normalization factor chosen so that $\ket{\lambda_{m,k}}$ is normalized, with precise form shown in Eq.~(\ref{eq:spinwavenorms}) below.
However, for much of the description we can keep the spin-wave orbitals general only requiring orthonormality among themselves as well as orthogonality to a completely uniform ``$k = 0$ orbital" obtained for convenience by setting $k = 0$ in Eqs.~(\ref{eq:HeisenbergPBC}) or (\ref{eq:HeisenbergOBC}) for PBC or OBC respectively.
The states in Eq.~(\ref{eq:Heisenbergspinwavestates}) formally corresponding to the latter in fact belong to the ferromagnetic multiplet, i.e., the ground states of the Heisenberg model, and we have $\ket{\lambda^\alpha_{m,0}} = \ket{F^\alpha_m}$ of Eq.~(\ref{eq:HeisenbergGS}).
In all, the normalization factor reads
\begin{equation}
    \mM_{m, k} = \twopartdef{ \binom{L-2}{m-1}}{k \neq 0}{\frac{2 m^2}{L} \binom{L}{m}}{k = 0}.
\label{eq:spinwavenorms}
\end{equation}
The above spin wave excitation solutions are directly obtained by solving the Heisenberg Hamiltonian in the one spin flip Hilbert space spanned by the states of the form $\ket{\up\ \cdots\  \up\ \dn\ \up\ \cdots\ \up}$ or $\ket{\rt\ \cdots\  \rt\ \lt\ \rt\ \cdots\ \rt}$, which gives $\sket{\lambda^z_{1,k}}$ or $\sket{\lambda^x_{1,k}}$ respectively with total spin $L/2-1$ for $k \neq 0$, and then by repeatedly acting with the lowering operator $S^{\alpha-}_{\tot}$ on the state $\sket{\lambda^\alpha_{1, k}}$ for $\alpha \in \{z, x\}$.
The energies of these states are given by
\begin{equation}
    H_{\text{Heis}}\sket{\lambda^\alpha_{m,k}} = 4\sin^2\left(\frac{k}{2}\right)\sket{\lambda^\alpha_{m,k}},\;\;\;1 \leq m \leq L-1,\;\;\alpha \in \{z, x\} 
\label{eq:U1energy}
\end{equation}
hence the gap of $H_{\text{Heis}}$ is given by $4\sin^{2}(\frac{\pi}{2L})$ or $4\sin^{2}(\frac{\pi}{L})$ for OBC or PBC.
This gap scales as $\sim 1/L^2$, showing that $H_{\text{Heis}}$ is gapless in the thermodynamic limit.
Moreover, the $H_{\text{Heis}}$ is known to be completely integrable, and the expressions for the eigenstates can in principle be derived using the Bethe Ansatz.
However, they are in general not simple to use for our purposes, and we refer interested readers to one of the numerous review articles on the subject for more details~\cite{faddeev1996bethe, karabach1997bethe, levkovich2016bethe}. 
\section{Rokhsar-Kivelson-type Super-Hamiltonians from Classical Symmetries}\label{app:RKHamils}
In this appendix, we discuss cases where the superoperator $\hmP$ is a Rokhsar-Kivelson (RK) type Hamiltonian, also known as Stochastic Matrix Form (SMF) decomposible or Stoquastic Hamiltonians. 
Abstractly, these are Hamiltonians that are defined over a Hilbert space spanned by a set of classical configurations, which could be a set of product states or something more complex such as non-intersecting dimer coverings of a lattice. 
Given this configuration space, one can define a set of local transitions that relate two different configurations; this defines a local Hamiltonian.  
With these sets of local configurations and transitions defined, a simple RK-type Hamiltonian is defined as
\begin{equation}
    H_{\text{RK}} = \sumal{\langle C, C'\rangle}{}{\hQ_{C, C'}},\;\;\;\;
    \hQ_{C, C'} \defn \left(\sket{C} - \sket{C'}\right)\left(\sbra{C} - \sbra{C'}\right),
\label{eq:hrkstandardform}
\end{equation}
where ``$\langle C, C'\rangle$" in the sum indicates that the configurations $C$ and $C'$ are connected by some local moves. 
Noting that each term $\hQ_{C, C'}$ in $H_{\text{RK}}$ is positive-semidefinite, it is easy to solve for its ground states $\{\ket{G^{(\mK)}}\}$, which are given by
\begin{equation}
    \hQ_{C, C'}(\sket{C} + \sket{C'}) = 0,~\forall C,C' \;\;\implies\;\;\sket{G^{(\mK)}} = \frac{1}{\sqrt{N_{\mK}}}\sumal{C \in \mK}{}{\ket{C}} \, ,
\label{eq:RKGS}
\end{equation}
where $\mK$ defines a Krylov subspace of $N_{\mK}$ configurations connected by the local moves, and there is one ground state corresponding to each Krylov subspace. 
As a simple example, the ferromagnetic Heisenberg Hamiltonian of Eq.~(\ref{eq:heisenbergcanon}) is a Hamiltonian of the RK form of Eq.~(\ref{eq:hrkstandardform}), where the configuration space is the space of all product states, and the local moves that connect different configurations are given by nearest-neighbor swap $\up \dn \; \leftrightarrow \; \dn \up$ in the $\hat{z}$-basis or $\rt \lt \; \leftrightarrow \; \lt \rt$ in the $\hat{x}$-basis.
The $(L+1)$ ferromagnetic ground states of the Heisenberg model are also ground states of the form of Eq.~(\ref{eq:RKGS}), where each Krylov subspace $\mK$ consists of product states with the same total spin (since they can all be connected via the aforementioned local moves assuming a connected lattice of sites), and there are $(L+1)$ such Krylov subspaces. 
Such RK-type Hamiltonians have been extensively studied in the literature, and they naturally appear in several different physically relevant contexts.
Examples include dimer models~\cite{rk1988superconductivity, moessner2001resonating, moessner2011dimer}, Markov processes satisfying detailed balance~\cite{henley2004classical, castelnovo2005rk}, and in the study of various kinds of random circuits~\cite{friedman2019spectral, roy2020random, moudgalya2021spectral, singh2021subdiffusion}.
There are also generalizations of Hamiltonians of this type, and we refer readers to \cite{castelnovo2005rk} for further discussions.
Turning to the problem of finding commutant algebras, here we consider families of Hamiltonians defined on a $q$-level Hilbert space, which are comprised of terms that relate some set of classical product state configurations $\{\sket{C}\}$ that form a basis of the Hilbert space. 
In particular, we work with strictly local terms defined as
\begin{equation}
    T^{(\vec{\tau}, \vec{\tau}')}_{[j, j+r]} \defn (\sketbra{\vec{\tau}}{\vec{\tau}'} + \sketbra{\vec{\tau}'}{\vec{\tau}})_{[j,j+r]}, \;\;\;
    S^z_j \defn \sumal{\sigma}{}{s_\sigma \ketbra{\sigma}{\sigma}}_{j}, \;\;1 \leq \sigma \leq q, \;\;
    N^{\vec{\tau}}_{[j,j+r]} \defn  \ketbra{\vec{\tau}}_{[j,j+r]}, 
\label{eq:opdefns}
\end{equation}
where $S^z_j$ is spin operator and $s_\sigma$ is the spin of level $\sigma$, $(\vec{\tau}, \vec{\tau}')$ denotes a pair of strictly local $(r+1)$-site orthogonal product configurations that are ``connected" according to some local rules that we leave general in this appendix.
The generators of several standard examples of bond algebras can be cast in this form, e.g., $X_j X_{j+1} + Y_j Y_{j+1} = 2\; T^{(\up\dn, \dn\up)}_{[j, j+1]}$. 
With these definitions, we can write down \textit{classical} bond and commutant algebras as
\begin{equation}
    \mA_{\text{cl}} = \lgen \{S^z_j\}, \{T^{(\vec{\tau}, \vec{\tau}')}_{[j, j+r]}\} \rgen, \;\;\;\;
    \mC_{\text{cl}} = \lgen \{F_\alpha(\{S^z_j\})\} \rgen,\;\;  
\label{eq:classicalalgebras}
\end{equation}
where $\{F_\alpha(\cdots)\}$ denotes some set of polynomials [that depends on the specific set of connections $(\vec{\tau},\vec{\tau}^\prime$) in $\mA_{\text{cl}}$], essentially stating that the operators in the commutant $\mC_{\text{cl}}$ are diagonal in the computational basis. 
Note that the diagonal form of operators in $\mC_{\text{cl}}$ directly follows from the inclusion of $\{S^z_j\}$ in the generators of the bond algebra, as we showed in App.~A of \cite{moudgalya2021hilbert} (here implicitly assuming that powers of $S^z$ on a qubit generate the space of all $q \times q$ diagonal matrices).
We refer to commutants of the form of $\mC_{\text{cl}}$ as classical symmetries, since they lead to block-decompositions of Hamiltonians in $\mA_{\text{cl}}$ that are completely understood in the product state basis. 
The super-Hamiltonian of Eq.~(\ref{eq:Pexp}) corresponding to $\mA_{\text{cl}}$ is then of the form 
\begin{align}
    \hmP_{\text{cl}} &= \sumal{j}{}{(S^z_{j;t} - S^z_{j;b})^2} + \sumal{j, (\vec{\tau}, \vec{\tau}')}{}{\Big[(T^{(\vec{\tau}, \vec{\tau}')}_{[j, j+r];t})^2 + (T^{(\vec{\tau}, \vec{\tau}')}_{[j, j+r];b})^2 - 2 T^{(\vec{\tau}, \vec{\tau}')}_{[j, j+r];t} T^{(\vec{\tau}, \vec{\tau}')}_{[j, j+r];b}\Big]}\nn\\
    &= \sumal{j}{}{(S^z_{j;t} - S^z_{j;b})^2} + \sumal{j, (\vec{\tau}, \vec{\tau}')}{}{\Big[N^{\vec{\tau}}_{[j, j+r];t} + N^{\vec{\tau}'}_{[j, j+r];t} + N^{\vec{\tau}}_{[j, j+r];b} + N^{\vec{\tau}'}_{[j, j+r];b} - 2 T^{(\vec{\tau}, \vec{\tau}')}_{[j, j+r];t} T^{(\vec{\tau}, \vec{\tau}')}_{[j, j+r];b}\Big]}.
\label{eq:Liouvclass}
\end{align}
The minimization of energy under the first term in Eq.~(\ref{eq:Liouvclass}) ensures that the ground state is in the subspace spanned by composite spins defined on the rungs of the ladder (or bilayer) as $\sket{\tilde{\sigma}} \defn \ket{\begin{array}{c} \sigma\\ \sigma\end{array}}$. 
$\mP_{\text{cl}}$ restricted to this composite spin subspace then reads,
\begin{align}
    \hmP_{\text{cl}|\text{comp}} &= 2\sumal{j, (\vec{\tau}, \vec{\tau}')}{}{\left[\sketbra{\widetilde{\vec{\tau}}}{\widetilde{\vec{\tau}}}_{[j,j+r]} + \sketbra{\widetilde{\vec{\tau}'}}{\widetilde{\vec{\tau}'}}_{[j,j+r]} - \sketbra{\widetilde{\vec{\tau}}}{\widetilde{\vec{\tau}'}}_{[j,j+r]} - \sketbra{\widetilde{\vec{\tau}'}}{\widetilde{\vec{\tau}}}_{[j,j+r]}\right]}\nn \\
    &=2\sumal{j, (\vec{\tau}, \vec{\tau}')}{}{(\sket{\widetilde{\vec{\tau}}} - \sket{\widetilde{\vec{\tau}'}})(\sbra{\widetilde{\vec{\tau}}} - \sbra{\widetilde{\vec{\tau}'}})_{[j,j+r]}}, 
\label{eq:Pclassketbra}
\end{align}
where $\sket{\widetilde{\vec{\tau}}}$ is the $(r+1)$-site composite spin configuration that contains  identical $\vec{\tau}$ placed on both the top and bottom legs of the ladder (similar to $\ket{\tilde{\sigma}}$). 
Note that several superoperators studied in the main text can be brought to this form, e.g., Eqs.~(\ref{eq:PU1comp}) and (\ref{eq:tJzcomposite}).
Eq.~(\ref{eq:Pclassketbra}) can alternately be expressed in terms of overall classical product configurations $\{\ket{C}\}$ as
\begin{equation}
    \hmP_{\text{cl}|\text{comp}} = \sumal{\langle C, C' \rangle}{}{(\sket{\tC} - \sket{\tC'})(\sbra{\tC} - \sbra{\tC'})},\;\;\; \sket{\tC} \defn \ket{\begin{array}{c} C \\ C\end{array}},
\label{eq:Pclass}
\end{equation} 
which is precisely a Hamiltonian of the Rokhsar-Kivelson form of Eq.~(\ref{eq:hrkstandardform}), and the corresponding analysis of the ground states can be immediately reused with tildes playing a dummy role (since $\widetilde{C}$'s are in one-to-one correspondence with $C$'s). 
Hence the ground states of $\hmP_{\text{cl}}$ are of the form of Eq.~(\ref{eq:RKGS}), and the number of ground states is the number of Krylov subspaces of classical configurations connected by the moves $\vec{\tau} \leftrightarrow \vec{\tau}'$.
In the operator language, noting that the composite spins $\ket{\tilde{\sigma}}$ map onto projectors $\oket{\ketbra{\sigma}}$, the ground state $\ket{G^{(\mK)}} \sim \sum_{C \in \mK} \sket{\widetilde{C}}$ of $\hmP_{\text{cl}|\text{comp}}$ maps onto $\oket{P_{\mK}}$, where $P_{\mK}$ is the projector onto the Krylov subspace $\mK$. 
In the case of classical conventional symmetries such as $U(1)$, these Krylov subspaces are equivalent to conventional symmetry quantum number sectors.

\section{Formal Symmetries of Super-Hamiltonians}
\label{app:superHsymms}
In this Appendix, we discuss some additional formal properties of the constructed super-Hamiltonians viewed as  ladder/bilayer systems as described in Sec.~\ref{subsec:ladder_bilayer} and Eq.~(\ref{eq:Pexp}).
By construction, such super-Hamiltonians have symmetries that descend directly from the symmetries of the bond algebra terms $\{ \hH_\alpha \}$: Using Eq.~(\ref{eq:Pexp}) it is easy to see that for any $\hC \in \mC$ (i.e., the commutant of the algebra $\mA = \lgen \{ \hH_\alpha \} \rgen$), we have that $\hC_t \otimes \mathbb{1}_b$ and $\mathbb{1}_t \otimes \hC^T_b$ commute with the super-Hamiltonian.
In studies of Lindbladians~\cite{buca2012note,albert2014symmetries} these symmetries associated independently with each leg/layer are often referred to as \textit{strong symmetries}, which commute with all the jump operators.
There are additional symmetries of the super-Hamiltonian that do not play significant roles in our analysis,  e.g., the super-Hamiltonians also commute with an {\it anti-unitary} operator composed of the exchange (swap) operation between the two legs and complex conjugation in the computational basis.
While there seem to be no other obvious symmetries, for each $\hC \in \mC$ the corresponding $\oket{\hC}$ is an exact zero-eigenvalue eigenstate of each $\hmL_\alpha$; hence $\oketbra{\hC}{\hC'}$ with $\hC,\hC' \in \mC$ are additional conserved quantities of the super-Hamiltonians, which can be understood by thinking about symmetries in terms of $\{\hmL_\alpha\}$.
This character of the physical symmetries $\oket{\mC}$ with respect to families of super-Hamiltonians resembles exact scar states~\cite{moudgalya2022exhaustive} but here occurring in the superoperator space.
Furthermore, since the super-Hamiltonians contain only positive-semidefinite terms $\{\hmL^\dagger_\alpha \hmL_\alpha\}$,  they have an additional ``quantitative feature" that all $\oket{\hC}$'s are exact ground states by the very construction.
As we discuss below, in examples with conventional Abelian symmetries this quantitative feature leads to the symmetries being broken in the ground states in a particular way that preserves some combinations of the symmetries; that is, the super-Hamiltonians can be loosely viewed as being in a particular partial symmetry breaking phase.
Below we will illustrate the formal symmetries and their fate in the ground states of the super-Hamiltonians for several conventional symmetries from Sec.~\ref{sec:examples}, and we also discuss the extension of the concepts to the unconventional case of isolated QMBS.
\subsection{Global $\mbZ_2$ Symmetry}
\label{subapp:superHsymms_Z2}
We start with the case of the global $\mbZ_2$ symmetry considered in Sec.~\ref{subsubsec:Z2}.
Here the inherited ``strong" symmetries can be expressed in terms of unitaries $U_t^{\mbZ_2} \defn \prod_j Z_{j;t}$ and $U_b^{\mbZ_2} \defn \prod_j Z_{j;b}$, which are $\mbZ_2$ symmetries associated with each individual leg.
The two exact ground states $\ket{G_\rt}$ and $\ket{G_\lt}$ of Eq.~(\ref{eq:GrtGlt}) break the individual $\mbZ_2$ symmetries while preserving the combined symmetry $U_t^{\mbZ_2} U_b^{\mbZ_2}$:
\begin{equation}
 U_\ell^{\mbZ_2} \ket{G_\rt} = \ket{G_\lt}, \quad
U_\ell^{\mbZ_2} \ket{G_\lt} = \ket{G_\rt}, \quad \ell \in \{t,b\}; \;\;\;\quad\;\;
 U_t^{\mbZ_2} U_b^{\mbZ_2} \ket{G_a} = \ket{G_a}, \quad a \in \{\rt,\lt\}.
\end{equation}
Alternatively, this $\mbZ_2$ symmetry breaking can be detected by checking if the ground state space contains states from both $+1$ and $-1$ quantum numbers of the symmetry,  which is satisfied by $U_t^{\mbZ_2}$ and $U_b^{\mbZ_2}$.
This symmetry breaking is also detected by a local order parameter $X_{j;t} X_{j;b}$ (charged under both $U_t^{\mbZ_2}$ and $U_b^{\mbZ_2}$), which clearly has perfect long range order in $\ket{G_\rt}$ and $\ket{G_\lt}$.
On the other hand, the combined symmetry $U_t^{\mbZ_2} U_b^{\mbZ_2}$ in fact acts as an identity in the entire composite-spin sector, so it is not broken in this sector and hence in the ground states.
While we can loosely say that this pattern of the $(\mbZ_2)_t \times (\mbZ_2)_b$ symmetry breaking down to a single remaining $\mbZ_2$ is responsible for the appearance of the two-dimensional ground state manifold, further structures/energetics in the super-Hamiltonian by construction are responsible for the specific ground state wavefunctions and their exact degeneracy at finite system sizes, landing the system at a ``fine-tuned point" inside of this particular symmetry breaking phase.
\subsection{Global $U(1)$ Symmetry}
\label{subapp:superHsymms_U1}
We next consider the case of the global $U(1)$ symmetry from Sec.~\ref{subsec:U1symmetry}.
Here the inherited strong symmetries of the super-Hamiltonians are $U(1)$ symmetries associated with each individual leg, which can be implemented with unitaries $U^{U(1)}_t(\theta_t) \defn \exp(i\theta_t \sum_j Z_{j;t})$ and $U_b^{U(1)}(\theta_b) \defn \exp(i\theta_b \sum_j Z_{j;b})$ with $\theta_t, \theta_b \in [0,2\pi)$.
The ground states of the super-Hamiltonian form the ferromagnetic manifold of the composite spins,  and these $U(1)$ symmetries act non-trivially in this manifold, e.g., in the case of the basis $\oket{Q_m^z} \sim \sket{\widetilde{F}_m^x}$ of Eq.~(\ref{eq:HeisenbergGSop}), and are hence broken.
Alternatively, consider a different ground state basis $\{\sket{\widetilde{F}_m^z} \}$, i.e., composite-spin version of the ferromagnetic states in Eq.~(\ref{eq:HeisenbergGS}) with polarization axis $\alpha = z$.
The inherited $U(1)$ symmetries act on these as
\begin{equation}
U_\ell^{U(1)}(\theta_\ell) \sket{\widetilde{F}_m^z} = e^{i \theta_\ell (L-2m)} \sket{\widetilde{F}_m^z} ~, \quad \ell\in \{t,b\},
\end{equation}
and the presence of such non-trivial eigenvalues in the ground state manifold signifies $U(1)$ symmetry breaking.
On the other hand, the combined symmetry $U_t^{U(1)}(\theta) U_b^{U(1)}(-\theta)$ acts trivially in the composite spin subspace and hence in the ground state manifold, hence it is not broken, i.e., we have only partial breaking of the formal $U(1)_t \times U(1)_b$ symmetry.
In physical terms, this symmetry breaking represents a quantum phase where the charges from the top and bottom legs are bound and the resulting composite particle is condensed, while they are individually gapped.
To give a more precise description of the character of the condensate, we note
that $\sket{\widetilde{F}_m^z}$ is an equal-weight superposition of all configurations with $m$ bosonic composite particles represented by the $\tdn$ composite spins in the ``vacuum" of $\tup$ composite spins, and such a wavefunction represents a ``perfect superfluid"--- i.e.,  Bose-Einstein Condensate (BEC)---of such bosons.
Indeed, correlations of a local order parameter $\widetilde{S}_j^+ \defn S_{j;t}^+ S_{j;b}^+ = \sket{\tup}\sbra{\tdn}$ [charged with respect to both $U(1)$'s] in terms of the boson density $\rho$ reads:
\begin{equation}
\sbra{\widetilde{F}^z_m} \widetilde{S}_j^+ \widetilde{S}_{j'}^- \sket{\widetilde{F}^z_m} = \frac{\binom{L-2}{m-1}}{\binom{L}{m}} = \frac{m(L-m)}{L(L-1)} \xrightarrow{L \rightarrow \infty} \rho(1-\rho) ~,\;\;\; \rho \defn \frac{m}{L}.
\end{equation}
The key observation is that this is non-zero for any $0 < \rho < 1$ and is independent of the separation between the points $j$ and $j'$, and this is true in any dimension.
This is unlike generic superfluid wavefunctions where the correlations would approach the non-zero limit in a power-law fashion in dimension $d>1$.
Related BEC vs generic superfluid difference shows up also in the excitation spectrum: quadratically dispersing excitations for our super-Hamiltonians as discussed in Sec.~\ref{subsec:U1gapless} vs linearly dispersing Goldstone modes in generic superfluids.
The perfect superfluid order revealed by all these perspectives as well as the exact degeneracy among the ground states are due to the further structures/energetics present in the super-Hamiltonian by construction, as discussed in the introductory part of this Appendix.
\subsection{Isolated QMBS}
\label{subapp:superHsymms_isoQMBS}
It is also curious to examine the fate of the inherited symmetries in the case of the isolated scar of Sec.~\ref{subsubsec:isolatedQMBS}.
Here the inherited symmetries can be viewed as highly non-local $\mbZ_2$ symmetries associated with each leg and specified by unitaries $U_t^{\text{iso}} \defn (\mathbb{1} - 2\ketbra{\Phi})_t \otimes \mathbb{1}_b$ and $U_b^{\text{iso}} \defn \mathbb{1}_t \otimes (\mathbb{1} - 2\ketbra{\Phi})_b$.
Simple analysis shows that in the ground state manifold spanned by the (non-orthonormal) basis of Eq.~(\ref{eq:isoscarGS}), both $U_t^{\text{iso}}$ and $U_b^{\text{iso}}$ take eigenvalues $\pm 1$, while $U_t^{\text{iso}} U_b^{\text{iso}}$ acts trivially (true on the full composite spin sector).
This structure of the eigenvalues of the symmetries is similar to the super-Hamiltonians constructed in the case of the global $\mbZ_2$ symmetry considered in Sec.~\ref{subapp:superHsymms_Z2}, and one may loosely say that $U_t^{\text{iso}}$ and $U_b^{\text{iso}}$ are broken while $U_t^{\text{iso}} U_b^{\text{iso}}$ is preserved.
However, in this case there is no local order parameter that would have non-trivial charge under these symmetries and that could detect this ``symmetry breaking", which is then not a very useful concept here.

\section{Extraneous Features of Specific Super-Hamiltonians}
\label{app:extraneous_features}
Many of the super-Hamiltonians in the main text have some additional extraneous features, sometimes allowing full or partial solvability beyond the expected exact ground states.
These features in fact depend on the specific choice of the bond algebra generators $\{ \hH_\alpha \}$ used to define the super-Hamiltonian $\hmP$ in Eq.~(\ref{eq:psdLiouv}), and in this Appendix we comment on this dependence.
Most importantly, for a fixed bond algebra $\mA$, the formal superoperator-space symmetries of the set of the Liouvillian super-operators $\{ \hmL_\alpha \}$ can depend on the choice of the generators $\{ \hH_\alpha \}$.
That is, for different sets of generators of  $\mA = \lgen \{ \hH_\alpha \} \rgen = \lgen \{ \hH_\beta^\prime \} \rgen$, the corresponding sets of the Liouvillian super-operators $\{ \hmL_\alpha \}$ and $\{ \hmL_\beta^\prime \}$ can generate different super-operator algebras, and their commutants can be different and larger than the set of formal symmetries discussed in App.~\ref{app:superHsymms} that are always present.
Examples of this include instances where the composite-spin subspace is invariant under the action of the super-Hamiltonians (many cases in the main text), the appearance of the $SU(2)$ symmetry in the case of $\hmP_{U(1)|\text{comp}}$ in Eq.~(\ref{eq:PU1comp}), or the appearance of the $SU(q^2)$ symmetry in the case of $\hmP_{SU(q)}$ in Eq.~(\ref{eq:suqsuperhamiltonian}).
Furthermore, the couplings with which $\{ \hmL_\alpha^2 \}$ enter in $\hmP$ also matter for some extraneous features as well as for lattice symmetries of the super-Hamiltonians.
Nevertheless, for the problem of finding the commutant of $\mA$, we are guaranteed that the commutant is the exact ground state manifold of any super-Hamiltonian constructed from any set of generators of $\mA$, so there is no issue here.
On the other hand, one may worry whether the low-energy spectra of such specific ``more symmetric'' super-Hamiltonians correspond to slow dynamical modes in more generic systems.
We expect this is true, namely, that possible additional features in the super-Hamiltonians do not change the qualitative character of the low-energy excitations, which we think is tied to the structure of the exact ground state space, and we demonstrate this in some cases in the main text, while here we provide more general comments.
First, in all cases, we can progressively suppress the additional features by adding more terms from the bond algebra (e.g., combining different sets of generators), and this would add more positive-semidefinite terms to the super-Hamiltonian.
By construction, the exact ground state manifold would remain unchanged, while all excitation energies would only increase.
In particular, the gapped cases would remain gapped (which we can then consider as a proof of generic gapfullness), while in the gapless cases the presented low excitation energies of specific super-Hamiltonians would at least provide exact lower bounds on the excitation energies of the modified super-Hamiltonians.
We further expect that the corresponding specific eigenstates could be used as trial states and would also provide variational upper bounds on the excitation energies of the modified super-Hamiltonians that would retain the same qualitative character as before, e.g., would give similar dispersion laws.
In the main text, we show evidence for this in the case of $\hmP_{U(1)}$.
In some cases the additional features allowing solvability beyond the exact ground states are like the integrability of the Heisenberg chain.
It is well established that low energy excitations of such integrable models capture qualitatively the physics of more generic models in the same phase.
Finally, our main confidence that the additional features are not qualitatively important comes from the fact that the specific super-Hamiltonians arise naturally as descriptions of properties of concrete Brownian circuits that by themselves do not look fine-tuned, e.g., each random instance is not solvable or special in any way.
The additional features in the super-Hamiltonians can be loosely thought of as coming from some choices of taking the simplest generators as well as convenient distributions of the random couplings, and such choices should not affect qualitative long-time hydrodynamic properties in the Brownian circuits.
\section{Details on Quantum Many-Body Scar Super-Hamiltonians}\label{app:QMBSliouv}
In this appendix, we provide some details on the ground states of the super-Hamiltonians $\hmP_{\text{scar}}$ that appear in the study of QMBS.
\subsection{Isolated QMBS}\label{app:QMBSliouviso}
In the case of a single isolated QMBS given by $\ket{\Phi} = \ket{\up\ \up\ \dots\ \up}$, we can choose $R_{[j]} \defn R_j = (1 - \sigma_j^z)/2 = \ketbra{\dn}_j$.
It is easy to see that the common kernel of the $R_j$'s contains a single state $\ket{\Phi}$.
We start with the bond algebra
\begin{equation}
    \mA_{\text{iso}} \defn \lgen \{ R_j \sigma^\alpha_{j+1}, \sigma^\alpha_j R_{j+1}\} \rgen,
\label{eq:Aiso}
\end{equation}
where for simplicity we choose PBC for the bond algebra generators, although similar results carry forward for the OBC case.
The full super-Hamiltonian of Eq.~(\ref{eq:RSigm}) reads
\begin{align}
    \hmP_{\text{iso}} &= 8\sumal{j}{}{(R_{j;t} - R_{j;b})^2} + 8\sumal{j}{}{\left(R_{j;t} R_{j;b} [1 - \ketbra{\iota}{\iota}]_{j+1} + [1 - \ketbra{\iota}{\iota}]_{j} R_{j+1;t} R_{j+1;b}\right)}\nn \\
    &=2\sumal{j}{}{(\sigma^z_{j;t} - \sigma^z_{j;b})^2} 
+ 8 \sumal{j}{}{\left(\ketbra{\dn}_{j;t} \ketbra{\dn}_{j;b} [1 - \ketbra{\iota}{\iota}]_{j+1} + [1 - \ketbra{\iota}{\iota}]_{j} \ketbra{\dn}_{j+1;t} \ketbra{\dn}_{j+1;b}\right)}.  
\label{eq:Liouvfullexp}
\end{align}
Note that the $(\sigma^z_{j;t} - \sigma^z_{j;b})^2$ terms enforce that the ground state is in the sector of composite spins defined in Eq.~(\ref{eq:compspins}), similar to the $\mbZ_2$ and $U(1)$ cases.
\subsubsection{Ground States}
We do not need to use the full structure of the super-Hamiltonian to obtain the ground states. 
According to Eq.~(\ref{eq:Rcond2}), the ground state space satisfies 
\begin{gather}
\ketbra{\dn}_{j; \ell}[1 - \ketbra{\iota}{\iota}]_{j+1} \ket{\Psi} = 0,\;\;\; [1 - \ketbra{\iota}{\iota}]_{j} \ketbra{\dn}_{j+1; \ell}\ket{\Psi} = 0,\;\;\;\ell \in \{t, b\}\nn\\
\implies \sket{\tdn,\tlt} \! \sbra{\tdn,\tlt}_{j,j+1} \ket{\Psi} = 0~, \quad
\sket{\tlt,\tdn} \! \sbra{\tlt,\tdn}_{j,j+1} \ket{\Psi} = 0~,
\label{eq:psiconst}
\end{gather}
where in the second line we have expressed the conditions in terms of composite spins defined in Eqs.~(\ref{eq:compspins}) and (\ref{eq:ladderopcorr}), and in replacing $[1 - \ketbra{\iota}]$ by $\ketbra{\tlt}$ we have used that the ground states $\ket{\Psi}$ are in the composite spin sector.
These conditions highly constrain the structure of $\ket{\Psi}$. 
In particular, suppose $\ket{\Psi}$ is decomposed as
\begin{equation}
    \ket{\Psi} = \sumal{\alpha}{}{\ket{u^{\alpha}}_{[j,j+1]} \otimes \ket{v^\alpha}_{\text{rest}}}, 
\label{eq:Psidecomp}
\end{equation}
where the supports of each part of the wavefunction along the ladder are indicated in the subscript, with ``rest" denoting the complement to $[j,j+1]$, and $\{\ket{v^\alpha}_{\text{rest}} \}$ form a linearly independent set.
Such a decomposition always exists, Schmidt decomposition being one example, but we will only require the linear independence of $\{\ket{v^\alpha}_{\text{rest}} \}$ and not orthonormality.
The conditions of Eq.~(\ref{eq:psiconst}) and the linear independence of $\{\sket{v^\alpha_{\text{rest}}}\}$ imply that both of the following should hold
\begin{equation}
\sket{\tdn,\tlt} \! \sbra{\tdn,\tlt}_{j,j+1} \ket{u^\alpha}_{j,j+1} = 0~, \quad
\sket{\tlt,\tdn} \! \sbra{\tlt,\tdn}_{j,j+1} \ket{u^\alpha}_{j,j+1} = 0 \quad
\implies 
\ket{u^\alpha}_{j,j+1} \in \text{span}\{\sket{\tup,\tup}_{j,j+1}, \sket{\trt,\trt}_{j,j+1} \} ~.
\label{eq:schmidtconditions}
\end{equation}
Hence, $\ket{\Psi}$ can be written as
\begin{equation}
\ket{\Psi} = \sket{\tup,\tup}_{j,j+1} \otimes \sket{\Upsilon}_{\text{rest}} + \sket{\trt,\trt}_{j,j+1} \otimes \sket{\Theta}_{\text{rest}} 
\end{equation}
with some states $\sket{\Upsilon}_{\text{rest}}$ and $\sket{\Theta}_{\text{rest}}$ on the complement to $[j,j+1]$.
Moving on to requiring Eq.~(\ref{eq:schmidtconditions}) on the next pair of sites $[j+1,j+2]$, since $\sket{\tup}_j$ and $\sket{\trt}_j$ are linearly independent, we can apply similar argument to independently $\sket{\tup}_{j+1} \otimes \sket{\Upsilon}_{\text{rest}}$ and $\sket{\trt}_{j+1} \otimes \sket{\Theta}_{\text{rest}}$.
For example, we get that
\begin{equation}
\sket{\tup}_{j+1} \otimes \sket{\Upsilon}_{\text{rest}} = \sket{\tup,\tup}_{j+1,j+2} \otimes \sket{\Upsilon'}_{\text{rest}'} + \sket{\trt,\trt}_{j+1,j+2} \otimes \sket{\Theta'}_{\text{rest}'} 
\implies \sket{\tup}_{j+1} \otimes \sket{\Upsilon}_{\text{rest}} = \sket{\tup,\tup}_{j+1,j+2} \otimes \sket{\Upsilon'}_{\text{rest}'} ~,
\end{equation}
where we have used linear independence of $\sket{\tup}_{j+1}$ and $\sket{\trt}_{j+1}$ and ``$\text{rest}'$" denotes complement to $[j,j+1,j+2]$.
In all, requiring Eq.~(\ref{eq:schmidtconditions}) on all pairs of neighboring sites, we can conclude that $\ket{\Psi}$ is spanned by $\sket{\tup,\tup, \dots, \tup}$ and $\sket{\trt,\trt, \dots, \trt}$, which correspond to operators $\oket{\mathds{1}}$ and $\oket{\ketbra{\Phi}}$.
In the original language, this means that the commutant of Eq.~(\ref{eq:Aiso}) is given by
\begin{equation}
    \mC_{\text{iso}} = \lgen \ketbra{\Phi} \rgen,
\label{eq:Ciso}
\end{equation}
where the $\mathds{1}$ is implicit in the notation $\lgen \cdots \rgen$.
Note that while we have included two types of bond algebra generators in Eq.~(\ref{eq:Aiso}), it is usually possible to choose a subset of them and still recover the same commutant of Eq.~(\ref{eq:Ciso}), although the analytical analysis might not be so straightforward.
\subsubsection{Gap and Low-Energy Excitations}
We then study the low-energy spectrum of $\hmP_{\text{iso}}$ of Eq.~(\ref{eq:Liouvfullexp}).
Since $(\sigma^z_{j;t} - \sigma^z_{j;b})^2$ commutes with all other terms in the Hamiltonian, configurations of composite spins form a closed subspace for $\hmP_{\text{iso}}$.
The violation of such a composite spin costs a constant amount of energy, hence we can work in the space of composite spins to determine if $\hmP_{\text{iso}}$ may have a smaller gap. 
Restricted to the space of composite spins, the action of $\hmP_{\text{iso}}$ reads
\begin{equation}
\hmP_{\text{iso}|\text{comp}} = 2 \sumal{j}{}{[(1 - \tZ_j) (1 - \tX_{j+1}) + (1 - \tX_j)(1 - \tZ_{j+1})]},\;\;\;
\tZ \defn \sketbra{\tup}{\tup} - \sketbra{\tdn}{\tdn},\;\;\;
\tX \defn \sketbra{\tup}{\tdn} + \sketbra{\tdn}{\tup}.
\end{equation}
This can be simplified to
\begin{align}
\hmP_{\text{iso}|\text{comp}} &= 2 \sumal{j}{}{[2 - (\tX_j + \tZ_j + \tX_{j+1} + \tZ_{j+1}) + \tX_j \tZ_{j+1} + \tZ_j \tX_{j+1}]} \nn \\
&= 2 \sumal{j}{}{[2 - \sqrt{2}(\tX'_j + \tX'_{j+1}) + \tX'_j \tX'_{j+1} - \tY'_j \tY'_{j+1}]},\;\;\;
\tX'_j \defn \frac{\tX_j + \tZ_j}{\sqrt{2}},\;\;\;
\tY'_j \defn \frac{\tX_j - \tZ_j}{\sqrt{2}},
\label{eq:Pisocomp}
\end{align}
where in the second step we have performed a basis transformation for the composite spins. 
For even system size and PBC, this further maps to 
\begin{equation}
    \hmP_{\text{iso}|\text{comp}} = 4\left[L  - \sqrt{2}\sumal{j}{}{\tX'_j} + \frac{1}{2}\sumal{j}{}{(\tX'_j \tX'_{j+1} + \tY'_j \tY'_{j+1})}\right],
\label{eq:isoeffective}
\end{equation}
where we have used the bipartiteness of the lattice to transform spins on even sites such that $\tY'_j \rt (-1)^j \tY'_j$ by rotating around the $\tX'$ axis. 
Note that this is simply the antiferromagnetic XX model with a longitudinal field with the specific value,  or equivalently the transverse field Ising model with nearest neighbor interactions with the specific field and interaction values, such that the product states $\ket{\trt\ \trt\ \cdots\ \trt}$ and $\sket{\tup\ \tup\ \cdots\ \tup}$ (in the original $\tilde{X},\tilde{Z}$ axes) are exact ground states. 
This model has in fact been studied in the earlier literature.
For example, Eq.~(\ref{eq:isoeffective}) is known to be dual to the well-studied ANNNI models, and in the phase diagram obtained by \cite{dutta2003gapless}, this appears to be in the gapped ferromagnetic phase.
Our own exact diagonalization study of the specific model Eq.~(\ref{eq:isoeffective}) in PBC indeed finds that there are two exactly degenerate ground states that spontaneously break the $\mbZ_2$ symmetry generated by $\prod_{j}{\tX'_j}$, separated by a gap between $0.5$ and $1.$ from the rest of the spectrum (thus, the lowest-energy excitation of $\hmP_{\text{iso}}$ indeed lies in the composite spin sector).
Moreover, the Hamiltonian of Eq.~(\ref{eq:Pisocomp}) exactly maps onto a frustration-free model that lies on the so-called Peschel-Emery line~\cite{peschel1981calculation, mahyaeh2018exact}. 
A relatively recent work \cite{katsura2015exact} proves that the OBC version of the Hamiltonian Eq.~(\ref{eq:Pisocomp}) is gapped, by performing a Jordan-Wigner mapping to an interacting Majorana chain in OBC and exhibiting a deformation path to a gapped free-fermion Hamiltonian without closing a gap.
The same argument can be carried out also directly in the spin model in the PBC system as well.
\subsection{Tower of QMBS}\label{app:QMBSliouvtower}
We now illustrate the example of the ferromagnetic tower of QMBS.
Consider $R_{[j]} \defn R_{j, j+1} = \frac{1}{4} - \vec{S}_j \cdot \vec{S}_{j+1} = \frac{1}{2} (\ket{\up\dn} - \ket{\dn\up})(\bra{\up\dn} - \bra{\dn\up})_{j,j+1} = \frac{1}{2}(1 - P_{j,j+1}^{(2)})$, where the target scar manifold contains the entire ferromagnetic tower of QMBS $\ket{\Phi_n} \sim (S^-_{\tot})^n\ket{\up\ \up\ \dots\ \up}$, $0 \leq n \leq L$.
We start with the bond algebra similar to Eq.~(\ref{eq:Aiso}), 
\begin{equation}
    \mA_{\text{tower}} \defn \lgen \{ R_{j, j+1} \sigma^\alpha_{j+2}, \sigma^\alpha_{j-1} R_{j,j+1}\} \rgen,
\label{eq:Atower}
\end{equation}
and for simplicity choose PBC for the bond algebra generators.
Any operator constructed out of the generators of $\mA_{\text{tower}}$ contains the ferromagnetic multiplet as degenerate QMBS~\cite{moudgalya2022exhaustive}.
The full super-Hamiltonian in this case reads
\begin{equation}
    \hmP_{\text{tower}} = 8\sumal{j}{}{(R_{[j, j+1]; t} - R_{[j,j+1]; b})^2} + 8\sumal{j}{}{\left(R_{[j,j+1]; t} R_{[j,j+1]; b} [1 - \oketbra{\iota}{\iota}]_{j+1} + [1 - \oketbra{\iota}{\iota}]_{j} R_{[j+1,j+2]; t} R_{[j+1,j+2]; b}\right)}.  
\label{eq:Liouvfulltower}
\end{equation}
Since $(R_{[j, j+1]; t} - R_{[j,j+1]; b})^2 = \frac{1}{2}(1 - P_{[j,j+1],t}^{(2)}P_{[j,j+1],b}^{(2)})$, the ground states must be symmetric under exchange of the states on the nearby rungs,
\begin{equation}
P_{[j,j+1]}^{\text{rung}} \ket{\Psi} = \ket{\Psi} \iff [1 - P^{\text{rung}}_{[j,j+1]}]\ket{\Psi} = 0 ~, \quad P_{[j,j+1]}^{\text{rung}} \defn P_{[j,j+1];t}^{(2)} P_{[j,j+1];b}^{(2)}
\label{eq:exchcondition}
\end{equation}
Furthermore, according to Eq.~(\ref{eq:Rcond2}), the ground state space satisfies
\begin{equation}
R_{[j, j+1]; \ell}[1 - \ketbra{\iota}{\iota}]_{j+2} \ket{\Psi} = 0,\;\;\; [1 - \ketbra{\iota}{\iota}]_{j} R_{[j+1, j+2]; \ell}\ket{\Psi} = 0,\;\;\;\ell \in \{t, b\}.
\label{eq:psiconsttower}
\end{equation}
As in the isolated QMBS, these conditions highly constrain the structure of the wavefunction $\ket{\Psi}$. 
Generalizing Eq.~(\ref{eq:Psidecomp}), given a region $A$ and its complement $A^c$, suppose we decompose $\ket{\Psi}$ as
\begin{equation}
    \ket{\Psi} = \sumal{\alpha}{}{\ket{u^\alpha}_A \otimes \ket{v^\alpha}_{A^c}}, 
\end{equation}
where the supports of each part of the wavefunction are indicated in the subscript, and $\{\ket{v^\alpha}_{A^c}\}$ are linearly independent.
Then, if $\ket{\Psi}$ is annihilated by some operators acting entirely within the region $A$, it follows that each $\ket{u^\alpha}_A$ is annihilated by the same operators, and we can write
\begin{equation}
    \ket{\Psi} = \sumal{\gamma}{}{\ket{e^\gamma}_A \otimes \ket{w^\gamma}_{A^c}},
\end{equation}
where $\{ \ket{e^\gamma}_A \}$ is a complete basis in the common kernel of the annihilators acting within $A$, and $\{ \ket{w^\gamma}_{A^c} \}$ are some new states on $A^c$ that are not required to be linearly independent.
The above are precise statements about the constraints on local ``parts" of such a $\ket{\Psi}$, and are used repeatedly (often implicitly) below.
Before proceeding with proofs, we introduce some shorthand notation.
On a segment $[j,j+r]$, we denote the common kernel of all $\{R_{[k,k+1];\ell}\}$ with support completely within the segment as $K_{[j,j+r]}$.
It is easy to see that it is spanned by $\{ \ket{\Phi_m}_{[j,j+r];t} \otimes \ket{\Phi_n}_{[j,j+r];b} \}$, where $\{ \ket{\Phi_m}_{[j,j+r]} \}$ is the full QMBS set in the original problem on the segment $[j,j+r]$.
We further denote a complete basis of $K_{[j,j+r]}$ as $\{ \ket{\Gamma^a}_{[j,j+r]} \}$. 
On the same segment, the common kernel of all $\{[1 - \ketbra{\iota}]_{k}\}$ acting within the segment is spanned by a single state for which we introduce a shorthand $\ket{\iota}_{[j,j+r]} \defn \otimes_{k=j}^{j+r} \ket{\iota}_{k}$.
\begin{lemma}
The common kernel of $R_{[j,j+1];\ell}$ and $[1 - \ketbra{\iota}]_k$ is trivial for any $\ell \in \{t, b\}$ if $k \in \{j,j+1\}$.
As corrollaries, $R_{[j,j+1]} \ket{\iota}_{[j,j+1]} \neq 0$ and $\ket{\iota}_{[j,j+r]} \notin K_{[j,j+r]}$ for any $r$.
\label{lem:comker}
\end{lemma}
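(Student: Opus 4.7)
The plan is to parametrize any candidate common-kernel state explicitly, then apply the singlet projector and see that the overlap with $\ket{\iota}$ on the rung forces the remainder to vanish. I will establish the case $k=j$, $\ell=t$ in detail; the other three cases are identical by the obvious top/bottom and $j \leftrightarrow j+1$ symmetries.

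First I would note that the kernel condition $[1-\ketbra{\iota}]_j \ket{\Psi} = 0$ forces $\ket{\Psi}$ to factor as $\ket{\iota}_j \otimes \ket{\chi}$, with $\ket{\chi}$ a state on the complement of rung $j$. Since $R_{[j,j+1];t}$ acts only on sites $(j,t)$ and $(j+1,t)$, I would next expand $\ket{\chi}$ along the basis of $(j+1,t)$, writing
\begin{equation}
\ket{\chi} = \ket{\up}_{j+1,t}\otimes\ket{A} + \ket{\dn}_{j+1,t}\otimes\ket{B},
\end{equation}
where $\ket{A}$, $\ket{B}$ live on all remaining sites (including rung $j+1$ on the bottom leg and every site off the two-rung region).

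Next I would evaluate $R_{[j,j+1];t}\bigl(\ket{\iota}_j \otimes \ket{\chi}\bigr)$. Using $R_{[j,j+1]}\ket{\up\up} = R_{[j,j+1]}\ket{\dn\dn} = 0$ and $R_{[j,j+1]}\ket{\up\dn} = -R_{[j,j+1]}\ket{\dn\up} = \tfrac{1}{2}(\ket{\up\dn}-\ket{\dn\up})$, only the ``antiparallel'' $(j,t),(j+1,t)$ pieces survive. A short expansion of $\ket{\iota}_j = \tfrac{1}{\sqrt{2}}(\ket{\up}_{j,t}\ket{\up}_{j,b}+\ket{\dn}_{j,t}\ket{\dn}_{j,b})$ then yields
\begin{equation}
R_{[j,j+1];t}\bigl(\ket{\iota}_j\otimes\ket{\chi}\bigr)
= \tfrac{1}{2\sqrt{2}}\,\bigl(\ket{\up\dn}-\ket{\dn\up}\bigr)_{[j,j+1];t}\otimes\bigl(\ket{\up}_{j,b}\ket{B}-\ket{\dn}_{j,b}\ket{A}\bigr).
\end{equation}
The singlet factor on the top leg is nonzero, and $\ket{\up}_{j,b}$, $\ket{\dn}_{j,b}$ are linearly independent, so demanding that this vanish forces $\ket{A}=\ket{B}=0$, i.e.\ $\ket{\chi}=0$ and hence $\ket{\Psi}=0$. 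This proves the lemma.

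For the two corollaries, I would invoke the lemma directly. The state $\ket{\iota}_{[j,j+1]} = \ket{\iota}_j \otimes \ket{\iota}_{j+1}$ is nonzero and lies in the kernel of $[1-\ketbra{\iota}]_j$ (and of $[1-\ketbra{\iota}]_{j+1}$), so by the lemma it cannot simultaneously be annihilated by $R_{[j,j+1];\ell}$ for either $\ell$; thus $R_{[j,j+1];\ell}\ket{\iota}_{[j,j+1]}\neq 0$. Tensoring with $\otimes_{k\notin\{j,j+1\}}\ket{\iota}_k$ preserves non-vanishing, so $R_{[j,j+1];\ell}\ket{\iota}_{[j,j+r]}\neq 0$, whence $\ket{\iota}_{[j,j+r]}\notin K_{[j,j+r]}$. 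The only potential subtlety is keeping track of which leg and site the singlet projector and the $\iota$-constraint act on, but the structure is so rigid (singlet on one leg vs.\ maximally entangled across the rung) that these cases all reduce to the single computation above.
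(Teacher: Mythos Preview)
Your proof is correct and follows essentially the same approach as the paper: both parametrize the state using the $[1-\ketbra{\iota}]_j$ constraint to factor out $\ket{\iota}_j$, expand the remainder in the $\{\ket{\up},\ket{\dn}\}$ basis on site $(j+1,\ell)$, apply the singlet projector, and read off that the resulting tensor factor on the complementary leg forces the coefficients to vanish. The paper restricts the computation to the two-rung Hilbert space while you carry along an arbitrary complement, but this is an immaterial difference; the corollaries are also argued identically.
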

\begin{proof}
We illustrate the proof for $k = j$, the proof for $k = j+1$ follows in the same way. 
Any state $\ket{v}_{[j,j+1]}$ on rungs $j$ and $j+1$ in the kernel of $[1 - \ketbra{\iota}]_j$ can without loss of generality be expressed as
\begin{equation}
    \ket{v}_{[j,j+1]} = \ket{\iota}_j \otimes \ket{\up}_{j+1; \ell} \otimes \sket{w^\up}_{j+1;\overline{\ell}} 
    + \ket{\iota}_j \otimes \ket{\dn}_{j+1; \ell} \otimes \sket{w^\dn}_{j+1;\overline{\ell}} ~,
\end{equation}
where $\sket{w^\up}_{j+1;\overline{\ell}}$ and $\sket{w^\dn}_{j+1;\overline{\ell}}$ are some states on the site $(j+1;\overline{\ell})$, where $\overline{\ell}$ is the compliment of $\ell$ in $\{t,b\}$.
The action of $R_{[j,j+1]; \ell}$ on $\ket{v}_{[j,j+1]}$ of this form reads
\begin{equation}
    R_{[j,j+1];\ell}\ket{v}_{[j,j+1]} = \frac{1}{2\sqrt{2}}(\ket{\up\dn} - \ket{\dn\up})_{[j,j+1];\ell}\otimes (-\ket{\dn}_{j;\overline{\ell}} \otimes \ket{w^\up}_{j+1, \overline{\ell}} + \ket{\up}_{j;\overline{\ell}} \otimes \ket{w^\dn}_{j+1, \overline{\ell}}).
\end{equation}
It is then easy to see that $R_{[j,j+1]; \ell}$ vanishes on $\ket{v}_{[j,j+1]}$ only if 
$\sket{w^\up}_{j+1;\overline{\ell}} = \sket{w^\dn}_{j+1;\overline{\ell}} = 0$, which in turn means that $\ket{v}_{[j,j+1]} = 0$.
It then directly follows that $R_{[j,j+1]}\ket{\iota}_{[j,j+1]} \neq 0$, since $\ket{\iota}_{[j,j+1]}$ clearly is in the kernel of $[1 - \ketbra{\iota}_j]$.  
This can also be directly verified, a simple calculation gives $R_{[j,j+1];t} \ket{\iota}_{[j,j+1]} = \frac{1}{4} \ket{S}_{[j,j+1];t} \otimes \ket{S}_{[j,j+1];b} \neq 0$, where $\ket{S}$ is a singlet state for two spins involved.
This also means $\ket{\iota}_{[j,j+r]} \notin K_{[j,j+r]}$, and $\ket{\iota}_{[j,j+r]}$ is linearly independent from $\{ \ket{\Gamma^a}_{[j,j+r]} \}$.
\end{proof}
\begin{lemma}
All the states satisfying Eqs.~(\ref{eq:exchcondition}) and (\ref{eq:psiconsttower}) in region $[j,j+r]$ are in $\text{span}\{\sket{\Gamma^a}_{[j,j+r]}, \sket{\iota}_{[j,j+r]}\}$. 
\end{lemma}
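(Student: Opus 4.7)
The plan is to establish the lemma by induction on the region size $r+1$, mirroring the Schmidt-decomposition argument used for the isolated QMBS in App.~\ref{app:QMBSliouviso}, but now exploiting both relations in Eq.~(\ref{eq:psiconsttower}), the rung-swap symmetry (\ref{eq:exchcondition}), and Lemma~\ref{lem:comker}. The base cases (small $r$ where Eq.~(\ref{eq:psiconsttower}) barely applies) are handled by direct inspection of the fully rung-symmetric subspace. For the inductive step, I would decompose $\ket{\Psi}$ along the rightmost rung,
\[
\ket{\Psi} = \ket{\iota}_{j+r}\otimes \ket{A}_{[j,j+r-1]} + \sum_\mu \ket{\eta^\mu}_{j+r} \otimes \ket{B_\mu}_{[j,j+r-1]},
\]
with $\{\ket{\eta^\mu}\}$ an orthonormal basis of the three-dimensional complement of $\ket{\iota}_{j+r}$ and the $\ket{B_\mu}$ chosen linearly independent. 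Evaluating the relations in Eq.~(\ref{eq:psiconsttower}) at the bond $[j+r-2,j+r-1]$ separates the $\ket{\iota}_{j+r}$ and $\ket{\eta^\mu}_{j+r}$ branches by linear independence on site $j+r$, yielding $R_{[j+r-2,j+r-1];\ell}\ket{B_\mu}=0$ for both $\ell\in\{t,b\}$; the remaining conditions on bonds lying entirely inside $[j,j+r-1]$ descend cleanly to both $\ket{A}$ and each $\ket{B_\mu}$, so the induction hypothesis applies to them in the smaller region.

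By induction, $\ket{A}$ lies in $\text{span}\{\ket{\Gamma^a}_{[j,j+r-1]}, \ket{\iota}_{[j,j+r-1]}\}$, and each $\ket{B_\mu}$ lies in the same span. However, the additional kernel condition at $[j+r-2,j+r-1]$ on $\ket{B_\mu}$, combined with the corollary of Lemma~\ref{lem:comker} that $R_{[k,k+1]}\ket{\iota}_{[j,j+r-1]}\neq 0$, kills the $\ket{\iota}_{[j,j+r-1]}$ contribution and forces $\ket{B_\mu}\in K_{[j,j+r-1]}$. The $\ket{\iota}_{j+r}\otimes\ket{\iota}_{[j,j+r-1]}$ piece then assembles into $\ket{\iota}_{[j,j+r]}$, while the remaining contributions are built from tensor products of the ferromagnetic multiplets on the two legs, truncated by a single rung at $j+r$; the rung-swap symmetry (\ref{eq:exchcondition}) applied at the bond $[j+r-1,j+r]$ is then what promotes such truncated pieces back into the symmetric-on-each-leg states $\ket{\Phi_m}_t\otimes\ket{\Phi_n}_b$ of $K_{[j,j+r]}$.

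The main technical obstacle is precisely this last assembly step, because $\ket{\iota}_{j+r}\otimes\ket{\Gamma^a}_{[j,j+r-1]}$ is \emph{not} individually in $K_{[j,j+r]}$ (tensoring a single-rung $\ket{\iota}$ with a symmetric leg-state breaks the leg-wise symmetrization at site $j+r$). The $\ket{\eta^\mu}_{j+r}\otimes\ket{\Gamma^b}_{[j,j+r-1]}$ contributions must combine with it in just the right Clebsch--Gordan-like proportions, dictated by the rung-swap symmetry, to reconstitute an element of the extended ferromagnetic multiplet. I expect the cleanest route is to rewrite the rung basis on site $j+r$ in terms of top- and bottom-leg spin doublets, use the rung-swap relation to lock the coefficients in the ``$\iota$-at-$j+r$'' and ``$\eta^\mu$-at-$j+r$'' branches together, and verify that exactly one free parameter remains in the $\ket{\iota}_{[j,j+r]}$ direction while all others fall inside $K_{[j,j+r]}$. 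Corner behavior at the ends of the interval $[j,j+r]$, where only one of the two relations in Eq.~(\ref{eq:psiconsttower}) is available, can be absorbed into slightly enlarged base cases of the induction.
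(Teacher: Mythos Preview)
Your inductive scheme, the Schmidt-type splitting at the rightmost rung, and the use of Lemma~\ref{lem:comker} to force $\ket{B_\mu}\in K_{[j,j+r-1]}$ all match the paper's approach and are correct. The discrepancy is entirely in the assembly step you flag as the main obstacle: the Clebsch--Gordan coefficient-matching you propose is unnecessary, and the paper resolves this step in one line using rung-permutation symmetry.

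Concretely, after your Steps you know $\ket{A}=\sum_a c_a\ket{\Gamma^a}_{[j,j+r-1]}+d\,\ket{\iota}_{[j,j+r-1]}$ and each $\ket{B_\mu}\in K_{[j,j+r-1]}$. Set $\ket{\chi}\defn\ket{\Psi}-d\,\ket{\iota}_{[j,j+r]}$. Then $\ket{\chi}$ lies in $K_{[j,j+r-1]}\otimes(\text{rung }j{+}r)$, so it is annihilated by every $R_{[k,k+1];\ell}$ with $j\le k\le j{+}r{-}2$. Crucially, $\ket{\chi}$ also inherits the \emph{full} rung-exchange symmetry of Eq.~(\ref{eq:exchcondition}) on $[j,j+r]$, since both $\ket{\Psi}$ and $\ket{\iota}_{[j,j+r]}$ have it. Conjugating $R_{[j+r-2,j+r-1];\ell}$ by the rung permutation that swaps rungs $j{+}r{-}2\leftrightarrow j{+}r$ (a composition of adjacent rung swaps, each of which fixes $\ket{\chi}$) yields $R_{[j+r-1,j+r];\ell}$, so $R_{[j+r-1,j+r];\ell}\ket{\chi}=0$ as well, and hence $\ket{\chi}\in K_{[j,j+r]}$. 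This is exactly the paper's argument; it replaces your proposed rewriting of the rung basis and coefficient locking with a single symmetry observation.
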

\begin{proof}
Clearly, $\{\ket{\Gamma^a_{[j,j+r]}}\}$ and $\ket{\iota}_{[j,j+r]}$ are annihilated by all $\{1 - P_{k,k+1}^{\text{rung}}\}$ and $\{R_{[k,k+1];\ell}(1-\ketbra{\iota})_{k+2}\}$ acting within the segment, hence they satisfy Eqs.~(\ref{eq:exchcondition}) and (\ref{eq:psiconsttower}).
We now show that any other state $\ket{\Psi}$ satisfying these conditions is spanned by the above states.
Starting with $r=1$, i.e., two rungs of the ladder at $j,j+1$, we show that just requiring annihilation by $[1 - P_{[j,j+1]}^{\text{rung}}]$ enforces $\ket{\Psi}$ to lie in the span of $K_{[j,j+1]}$ and $\ket{\iota}_{[j,j+1]}$. 
The kernel of $[1 - P_{[j,j+1]}^{\text{rung}}]$ is 10-dimensional consisting of all states that are symmetric under the rung exchanges, and its complete basis is $\{ \ket{T_m}_{[j,j+1];t} \otimes \ket{T_n}_{[j,j+1];b}, m,n \in \{0,\pm 1\} \} \cup \ket{S}_{[j,j+1];t} \otimes \ket{S}_{[j,j+1];b}$, where $\ket{T_m}$ and $\ket{S}$ are respectively triplet and singlet states for two spins involved.
For two spins, the triplet space $\{ \ket{T_m}_{[j,j+1]} \}$ is the same as $\{ \ket{\Phi_m}_{[j,j+1]}\}$, so the 9 states $\{ \ket{T_m}_{[j,j+1];t} \otimes \ket{T_n}_{[j,j+1];b} \}$ can be replaced by $\{ \ket{\Gamma^a}_{[j,j+1]} \}$.
Since $\ket{\iota}_{[j,j+1]}$ is also annihilated by $[1 - P_{[j,j+1]}^{\text{rung}}]$, and, being linearly independent of $\{ \ket{\Gamma_a}_{[j,j+1]} \}$, it can replace $\ket{S}_{j,j+1;t} \otimes \ket{S}_{j,j+1;b}$ in the constructed complete basis of the kernel of $1 - P_{j,j+1}^{\text{rung}}$.
Moving to $r=2$, for the appropriate parts of $\ket{\Psi}$ we can write
\begin{equation}
\ket{\Psi}_{[j,j+2]} = \sum_a \ket{\Gamma^a}_{[j,j+1]} \otimes \ket{v^a}_{j+2} + \ket{\iota}_{[j,j+1]} \otimes \ket{w}_{j+2} ~,
\end{equation}
where $\{ \ket{v^a}_{j+2} \}$ and $\{ \ket{w}_{j+2} \}$ are some states on the rung at $j+2$.
Now consider annihilation by $R_{[j,j+1];\ell} (1-\ketbra{\iota})_{j+2}$.
Since $R_{[j,j+1];\ell}$ annihilates all $\{ \ket{\Gamma^a}_{[j,j+1]} \}$, we deduce that
\begin{equation}
R_{[j,j+1];\ell} (1 - \ketbra{\iota})_{j+2} \; \ket{\iota}_{[j,j+1]} \otimes \ket{w}_{j+2} = 0 ~.    
\end{equation}
However, since $R_{[j,j+1];\ell} \ket{\iota}_{[j,j+1]} \neq 0$ following Lem.~\ref{lem:comker}, we have $\ket{w}_{j+2} = c \ket{\iota}_{j+2}$, hence
\begin{equation}
\sum_a \ket{\Gamma^a}_{[j,j+1]} \otimes \ket{v^a}_{j+2} = 
\ket{\psi}_{[j,j+2]} - c \ket{\iota}_{[j,j+2]} ~.
\end{equation}
Since $\ket{\psi}_{[j,j+2]}$ must be symmetric under the exchange of rungs $j+1$ and $j+2$ according to Eq.~(\ref{eq:exchcondition}), and $\ket{\iota}_{[j,j+2]}$ clearly is, the L.H.S.\ should be symmetric under this and all other rung exchanges in $[j,j+2]$.
Since the L.H.S. is annihilated by $R_{[j,j+1];\ell}$ it must then be also annihilated by $R_{j+1,j+2;\ell}$, i.e., it belongs to the span of $\{ \ket{\Gamma_{a'}}_{[j,j+2]} \}$.
This completes the proof for the segment $[j,j+2]$.
Essentially the same steps can then be used for an inductive proof going from $[j,j+r]$ to $[j,j+r+1]$, considering annihilation by $R_{[j+r-1,j+r];\ell} (1 - \ketbra{\iota})_{j+r+1}$ to peel off the $\ket{\iota}_{[j,j+r+1]}$ contribution, and then for the remainder deducing the symmetry under the rung exchange at $j+r$ and $j+r+1$ and proving that it belongs to $K_{[j,j+r+1]}$.
\end{proof}
Hence the only states on the full chain satisfying Eqs.~(\ref{eq:exchcondition}) and (\ref{eq:psiconsttower}) are
\begin{equation}
    \ket{\Psi} = \ket{\Phi_m}_t \otimes \ket{\Phi_n}_b = \oket{\ketbra{\Phi_m}{\Phi_n}} \;\;\text{or}\;\;\ket{\Psi} = \otimes_{j = 1}^L{\ket{\iota}}_j = \frac{1}{2^{\frac{L}{2}}}\oket{\mathds{1}}.  
\label{eq:schmidttower}
\end{equation}
In the original language, this means that the commutant of Eq.~(\ref{eq:Aiso}) is given by
\begin{equation}
    \mC_{\text{tower}} = \lgen \{\ketbra{\Phi_m}{\Phi_n}\} \rgen,
\end{equation}
where the $\mathds{1}$ is implicit in the notation $\lgen \cdots \rgen$.
\section{Eigenstates of the \texorpdfstring{$t-J_z$}{} Super-Hamiltonian}\label{app:tracerdiffusion}
In this Appendix, we provide details on the eigenstates of the super-Hamiltonian $\hmP_{t-J_z}$, and on the computation of the weights of the operator $\oket{S^z_j}$ on these eigenstates.
As discussed in Sec.~\ref{subsubsec:fragmentation}, understanding the cumulative weight function  $\Omega_{S^z_j}(E)$ is key to understanding the late-time behavior of the ensemble-averaged autocorrelation function $\overline{C_{S^z_j}(t)}$.
However, we are only able to analytically determine the weights corresponding to the ground states and the ``spin-wave" excited states, both of which ultimately vanish in the thermodynamic limit.
It is then crucial to include contributions from the higher excited states to analytically reproduce the results of Fig.~\ref{fig:tJzweight},  hence we also discuss the general setup for this weight calculation although we are only able to implement this numerically in general.
\subsection{General Setup}
To determine the low-energy excitations of $\hmP_{t-J_z}$ that have a non-zero overlap on $\oket{S^z_j}$, it is sufficient to work in the composite spin sector on the ladder (see Sec.~\ref{subsubsec:fragmentation} for discussions on this).
A convenient computational basis in terms of the composite spins $\{\tup, \tdn, \tzero\}$ defined in Eq.~(\ref{eq:tJzsuperspins}) is of the form
\begin{equation}
    \{\sket{\ttau^{(1)}_{j_1} \ttau^{(2)}_{j_2} \cdots \ttau^{(m)}_{j_m}}\},\;\;\; \ttau^{(q)} \in \{\trt, \tlt\},\;\;\sket{\trt} \defn \frac{\sket{\tup} + \sket{\tdn}}{\sqrt{2}},\;\;\;\sket{\tlt} \defn \frac{\sket{\tup} - \sket{\tdn}}{\sqrt{2}},
\label{eq:worddefn}
\end{equation}
where the subscripts indicate the positions of the $\ttau^{(q)}$'s and the rest of the sites are assumed to be occupied by the $\tzero$'s.
Note that we are working in the basis with the spins $\trt$ and $\tlt$ instead of $\tup$ and $\tdn$ since we are ultimately interested in the overlap with the operator $\oket{S^z_j}$, which maps onto the composite spin $\ket{\tlt}_j$ on the ladder.
The ground state space of $\hmP_{t-J_z|\text{comp}}$, shown in Eq.~(\ref{eq:PtJzGS}), in this basis is spanned by the equal weight superpositions of the states with fixed pattern $\tau^{(1)},\tau^{(2)},\dots,\tau^{(m)}$, i.e.,
\begin{equation}
    \sket{G^{\tau^{(1)} \cdots \tau^{(m)}}} = \sumal{j_1 < \cdots < j_m}{}{\sket{\ttau^{(1)}_{j_1} \cdots \ttau^{(m)}_{j_m}}}.
\label{eq:tJzcompGS}
\end{equation}
In the operator language, these ground states are the ``words" defined in App.~B in \cite{moudgalya2021hilbert}, which were shown to form an orthogonal basis for the commutant algebra $\mC_{t-J_z}$ for OBC, and were used to compute exact Mazur bounds.
We then study the excited states of $\hmP_{t-J_z|\text{comp}}$ in the computational basis of Eq.~(\ref{eq:worddefn}).  
We start with an eigenstate of the Heisenberg model $H_{\text{Heis}}$ of App.~\ref{app:Heiscanon} with $m$ $\dn$'s that has the form
\begin{equation}
    \ket{\lambda; m} = \sumal{j_1 < j_2 < \cdots < j_m}{}{C^\lambda_{j_1,  j_2,  \cdots, j_m}\ket{\dn_{j_1}\ \dn_{j_2}\  \cdots \dn_{j_m}}}, 
\label{eq:Heisexc}
\end{equation}
where $\lambda$ denotes the energy and $m$ denotes the number of $\dn$'s in the eigenstate, the subscripts of the $\dn$'s denote their positions, and the rest of the sites are assumed to be $\up$'s.
Utilizing the mapping between $H_{\text{Heis}}$ and $\hmP_{t-J_z|\text{comp}}$ and summarized in Eq.~(\ref{eq:tJzHeismapsummary}), we can write down $2^m$ degenerate eigenstates of the latter corresponding to each eigenstate $\ket{\lambda; m}$ of the former; these are of the form
\begin{equation}
    \sket{\lambda; \tau^{(1)} \cdots \tau^{(m)}} = \sumal{j_1 < \cdots < j_m}{}{C^\lambda_{j_1, \cdots, j_m}\sket{\tilde{\tau}^{(1)}_{j_1} \cdots \tilde{\tau}^{(m)}_{j_m}}},\;\;\;\tau^{(\ell)} \in \{\rt, \lt\}.
\label{eq:tJzexc}
\end{equation}
labelled by the fixed pattern $\tau^{(1)},\tau^{(2)},\dots,\tau^{(m)}$.
In all, given that there are $\binom{L}{m}$ Heisenberg eigenstates with $m$ $\dn$'s, we get a total of $\sum_{m = 0}^{L}{2^m \binom{L}{m}} = 3^L$ eigenstates of $\hmP_{t-J_z|\text{comp}}$, which as expected covers the entire Hilbert space of the composite spins.
To compute the overlap between these eigenstates and the operator $\oket{S^z_j}$, we note that this operator maps onto the following state on the ladder:
\begin{equation}
    \oket{S^z_j} \defn \ket{\zeta_j} = \prod_{k = 1}^{j-1}(\sqrt{2}\ket{\trt}_k + \ket{\tzero}_k) \otimes \sqrt{2}\ket{\tlt}_j \otimes \prod_{k = j+1}^L{(\sqrt{2}\ket{\trt}_k + \ket{\tzero}_k)}.
\label{eq:Zjexpression}
\end{equation}
The only configurations $\sket{\tilde{\tau}^{(1)}_{j_1} \cdots \tilde{\tau}^{(m)}_{j_m}}$ that $\ket{\zeta_j}$ has non-zero overlap with are the ones where $j_{\ell} = j$ for some $1 \leq \ell \leq m$, with $\tau^{(\ell)} = \lt$ and $\tau^{(k)} = \rt$ for $k \neq \ell$. 
Hence the only eigenstates of the form of Eq.~(\ref{eq:tJzexc}) that have a non-zero overlap with $\ket{\zeta_j}$ are those of the form
\begin{equation}
    \sket{\lambda; \rt^\alpha \lt \rt^\beta} = \sumal{j_1 < \cdots < j_{\alpha+\beta+1}}{}{C^\lambda_{j_1, \cdots, j_{\alpha+\beta+1}}\sket{\trt_{j_1} \cdots \trt_{j_\alpha} \tlt_{j_{\alpha+1}} \trt_{j_{\alpha+2}}\cdots \trt_{j_{\alpha+\beta+1}}}}, 
\label{eq:nonzerocont}
\end{equation}
where the sum is over $\{j_{\ell}\}$ for $1 \leq \ell \leq \alpha+\beta+1$, and ``$\tau^k$" in the pattern label for $\tau \in \{\rt, \lt\}$ denotes that $\tau$ is repeated $k$ times.
That is, in an $m$-spin pattern there is precisely one $\leftarrow$ whose position is parametrized by $\alpha,\beta$ with $\alpha+1+\beta=m$.
This is a total of $\sum_{m = 1}^{L}{m \binom{L}{m}} = L \times 2^{L-1}$ eigenstates that can have a non-zero overlap with the $\ket{\zeta_j}$.
We find that the overlap is generically non-zero for all such states, and such eigenstates are not necessarily only in the very low-energy part of the spectrum, e.g., in the one-magnon band.
Using Eqs.~(\ref{eq:Zjexpression}) and (\ref{eq:nonzerocont}), the weight of $\ket{\zeta_j}$ on the eigenstate $\ket{\lambda; {\rt^\alpha \lt \rt^\beta}}$ is given by [see Eq.~(\ref{eq:correlationavg}) for the definition]
\begin{equation}
    \frac{1}{3^L}|\sbraket{\zeta_j}{\lambda; {\rt^\alpha \lt \rt^\beta}}|^2 = \frac{2^{\alpha+\beta+1}}{3^L}\left|\sumal{j_1 < \cdots < j_{\alpha+\beta+1}}{}{C^\lambda_{j_1, \cdots, j_{\alpha+\beta+1}} \delta_{j_{\alpha+1}, j}}\right|^2.
\label{eq:overlapexpression}
\end{equation}
We are not able to use Eq.~(\ref{eq:overlapexpression}) to proceed analytically without any approximations. 
However, given the eigenstates of the Heisenberg model numerically, it is easy to use this expression to numerically compute the weights.
We employed this method to compute the cumulative weight $\Omega_{S^z_j}(E)$ shown in Fig.~\ref{fig:tJzweight} in the main text. 
\subsection{Spin-Wave Contribution}
We apply the results of the previous section to compute the overlap of $\oket{S^z_j}$ on the spin-wave excited states of $\hmP_{t-J_z|\text{comp}}$.
In the notation of Eq.~(\ref{eq:Heisexc}),  the spin-wave excited states for the OBC Heisenberg model, shown in Eq.~(\ref{eq:Heisenbergspinwavestates}), read
\begin{equation}
\ket{\lambda_k; m} \defn \frac{1}{\sqrt{\mM_{m,k}}}\sumal{j_1 < \cdots < j_m}{}{\sumal{\ell = 1}{m}{c_{j_{\ell},k}}}\ket{\dn_{j_1}\  \dn_{j_2}\ \cdots\  \dn_{j_m}}\;\;\implies\;\;
C^{\lambda_k}_{j_1, \cdots, j_m} = \frac{1}{\sqrt{\mM_{m,k}}}\times \sumal{\ell = 1}{m}{c_{j_{\ell},k}},
\label{eq:spinwavetranslation}
\end{equation}
where $k \in \frac{\pi n}{L}$ for $1 \leq n \leq L-1$, and $c_{j_\ell, k}$ and $\mM_{m,k}$ are shown in Eqs.~(\ref{eq:HeisenbergOBC}) and (\ref{eq:spinwavenorms}) respectively.
The corresponding eigenstates of $\hmP_{t-J_z|\text{comp}}$ are of the form $\ket{\lambda_k; \tau^{(1)} \cdots \tau^{(m)}}$, which can be explicitly written down using Eq.~(\ref{eq:tJzexc}).
To compute the weight in Eq.~(\ref{eq:overlapexpression}), we first compute
\begin{align}
	G^{\alpha, \beta}_{j, k} &\defn \sumal{j_1 < \cdots < j_{\alpha+\beta+1}}{}{c^{\lambda_k}_{j_1, \cdots, j_{\alpha+\beta+1}} \delta_{j_{\alpha+1},  j}} =\frac{1}{\sqrt{\mM_{\alpha+\beta+1,k}}} \sumal{1\leq j_1 < \cdots < j_{\alpha}\leq j-1}{}{\left[\sumal{j+1\leq j_{\alpha+2} < \cdots < j_{\alpha+\beta+1}\leq L}{}{ \left(\sumal{\ell = 1}{\alpha}{c_{j_\ell, k}} + c_{j,k} + \sumal{\ell = \alpha + 2}{\alpha + \beta + 1}{c_{j_\ell,  k}}\right)}\right]}\nn \\
	&= \frac{1}{\sqrt{\mM_{\alpha+\beta+1,k}}}\left[F^{1, j-1}_{\alpha, k} \binom{L -j}{\beta} + c_{j,k} \binom{j - 1}{\alpha} \binom{L - j}{\beta} + \binom{j-1}{\alpha} F^{j+1,L}_{\beta, k}\right], 
\label{eq:Cexpr}
\end{align}
where we have defined
\begin{equation}
F^{l, r}_{m, k} \defn \sumal{l \leq j_1 < \cdots < j_m \leq r}{}{\left(\sumal{\ell = 1}{m}{c_{j_{\ell}, k}}\right)} = \binom{r - l}{m-1} \sumal{j_\ell = l}{r}{c_{j_{\ell}, k}}.
\label{eq:Fexpr}
\end{equation}
Using Eqs.~(\ref{eq:Cexpr}), (\ref{eq:Fexpr}), and the OBC expression for $c_{j,k}$ in Eq.~(\ref{eq:HeisenbergOBC}), we obtain for $k \neq 0$
\begin{align}
    F^{l, r}_{m,k} &= \sqrt{\frac{2}{L}} \times \frac{\sin(k r) - \sin(k (l-1))}{2 \sin(\frac{k}{2})} \times \binom{r - l}{m-1},\nn \\
	G^{\alpha, \beta}_{j, k} &= \frac{\sqrt{2} \binom{j-1}{\alpha}\binom{L-j}{\beta}}{\sqrt{L \times \mM_{\alpha+\beta+1,k}}}\times \left[\frac{\alpha}{2(j-1)}\frac{\sin[k(j-1)]}{\sin(\frac{k}{2})} + \cos\left[k(j-\frac{1}{2})\right] +\frac{\beta}{2(L-j)}\frac{\sin(kL) - \sin(kj)}{\sin(\frac{k}{2})}\right], 
\label{eq:FGexpr}
\end{align}
while for $k=0$ we have
\begin{equation}
    F^{l, r}_{m,k=0} = \frac{\sqrt{2}}{\sqrt{L}} (r-l+1) \times \binom{r - l}{m-1},\;\;\;G^{\alpha, \beta}_{j, k=0} = \frac{\sqrt{2} \binom{j-1}{\alpha} \binom{L-j}{\beta}}{\sqrt{L \times \mM_{\alpha+\beta+1,k=0}}} \times \left[\alpha + 1 + \beta \right]. 
\label{eq:FGexprk0}
\end{equation}
For simplicity, we henceforth assume $L$ is odd and $j = \frac{L+1}{2}$.
We then set $k = \frac{n\pi}{L}$, and compute the total weight of the state $\sket{\zeta_{\frac{L+1}{2}}}$ on all eigenstates of the form $\ket{\lambda_k; \rt^\alpha \lt \rt^\beta}$ for all values of $\alpha$ and $\beta$.
Using Eq.~(\ref{eq:overlapexpression}), this weight is
\begin{align}
    W_{S^z_{\frac{L+1}{2}}}(k = \frac{n \pi}{L}) &\defn \frac{1}{3^L}\sumal{\alpha = 0}{\frac{L-1}{2}}{\sumal{\beta = 0}{\frac{L-1}{2}}{2^{\alpha+\beta+1}|G^{\alpha, \beta}_{\frac{L+1}{2}, k}|^2}} \nn \\
    &= \sumal{\alpha = 0}{\frac{L-1}{2}}{\sumal{\beta = 0}{\frac{L-1}{2}}{\frac{2^{\alpha + \beta + 1} \binom{\frac{L-1}{2}}{\alpha}^2\binom{\frac{L-1}{2}}{\beta}^2}{3^L \times \binom{L}{\alpha+\beta+1}}\times   \threepartdef{1}{n = 0}{\frac{2(L - 1 - \alpha-\beta)}{(L-1)(\alpha+\beta+ 1)}}{n\neq 0\ \text{even}}{\frac{2(\alpha-\beta)^2}{(L-1)(L - \alpha-\beta - 1)(\alpha+\beta +1)}\cot^2(\frac{k}{2})}{n\ \text{odd}}}}, 
\label{eq:MZjtJz}
\end{align}
where we have used Eq.~(\ref{eq:FGexpr}) and the normalization factors of Eq.~(\ref{eq:spinwavenorms}). 
Note that $W_{S^z_{\frac{L+1}{2}}}(k = 0)$ is the same as the Mazur bound computed in \cite{moudgalya2021hilbert}, done here in the composite spin language.
The expression Eq.~(\ref{eq:MZjtJz}) for general $k$ can be analyzed in detail using a saddle point analysis for large $L$, similar to the calculation for the Mazur bound demonstrated in \cite{moudgalya2021hilbert}, but for our purposes it is sufficient to schematically extract the $L$-dependence.
To obtain this, we substitute $\alpha = L p$ and $\beta = L q$ to convert the sums into integrals over $p$ and $q$. 
In App.~G of Ref.~\cite{moudgalya2021hilbert}, the expression for the Mazur bound $W_{S^z_{\frac{L+1}{2}}}(k = 0)$ was shown to be of the form (the computation was done there for a general $x = j/L$, whereas here we will set $x = 1/2$ and remove it from the arguments of the functions involved) 
\begin{equation}
    W_{S^z_{\frac{L+1}{2}}}(k = 0) = \sqrt{L}\int_0^{\frac{1}{2}}\int_0^{\frac{1}{2}}{dp\ dq\ C(p,q)\exp\left(L F(p,q)\right)} \approx \frac{2 \pi\ C(p_0, q_0)}{\sqrt{L\ \det H(p_0, q_0)}}\exp\left(L F(p_0,q_0)\right),
\label{eq:mazurfullintegral}
\end{equation}
where $C(p, q)$ and $F(p, q)$ are some $L$-independent functions, and a saddle point approximation has been performed in the second step, which we unpack below.
$H(p,q)$ is the Hessian of $F(p, q)$, and the ``saddle" is given by the point where $(\frac{\partial F}{\partial p}, \frac{\partial F}{\partial q})|_{(p, q) = (p_0, q_0)} = (0, 0)$, which turns out to be at $(p_0, q_0) = (\frac{1}{3}, \frac{1}{3})$.
It also turns out that $F(p_0, q_0) = 0$, hence we get the Mazur bound scaling of $\sim L^{-\frac{1}{2}}$~\cite{moudgalya2021hilbert}.
Note that both $H(p, q)$ and $(p_0, q_0)$ are completely determined by $F(p, q)$.   
Since the $n \neq 0$ expressions in Eq.~(\ref{eq:MZjtJz}) differ from the $n = 0$ case only by factors polynomial in $L$, we can express them in terms of $p$ and $q$, and analogously write down the form of the leading order terms in the saddle point approximation:
\begin{equation}
    W_{S^z_{\frac{L+1}{2}}}(k = \frac{n\pi}{L}) \approx \frac{2 \pi\ C(p_0, q_0)}{\sqrt{L\ \det H(p_0, q_0)}}\exp\left(L F(p_0,q_0)\right) \times \twopartdef{\frac{2 (1 - p_0 - q_0)}{L(p_0+q_0)}}{n \neq 0\ \text{even}}{\frac{2(p_0 - q_0)^2}{L (1 - p_0 - q_0)(p_0+q_0)}\cot^2(\frac{k}{2})}{n\ \text{odd}},
\label{eq:saddleapprox}
\end{equation}
where the saddle $(p_0, q_0)$ is unchanged since the function $F(p, q)$ in the exponent is unchanged from the $n = 0$ case. 
We can then use Eq.~(\ref{eq:saddleapprox}) with the fact that $F(p_0, q_0) = 0$ to determine the scaling of $W_{S^z_{\frac{L+1}{2}}}(k)$. 
For $n\neq 0$ even, we get a scaling of $\sim L^{-\frac{3}{2}}$, and adding the contributions over all even $n$, we get a total scaling of $\sim L^{-\frac{1}{2}}$.
For $n$ odd, since $p_0 = q_0$, the leading order term shown in Eq.~(\ref{eq:saddleapprox}) vanishes.
Since the subleading terms in the saddle point approximation are suppressed by a factor of $L$, we obtain a scaling of $\sim L^{-\frac{5}{2}} \cot^2(\frac{k}{2})$.
Adding these contributions over all odd $n = 2l+1$, we get $\sim L^{-\frac{5}{2}}\sum_{l = 0}^{\frac{L-1}{2}}{\cot^2[(l + \frac{1}{2})\frac{\pi}{L}]} \sim L^{-\frac{1}{2}}$, since it is dominated by a few values of $k$ close to $0$, where $\cot^2({\frac{k}{2}}) \sim k^{-2}$.
In all, the total weight of the operator $S^z_{\frac{L+1}{2}}$ on the ground states and the ``single-magnon'' spin-wave states scales as $\sim L^{-\frac{1}{2}}$, which is vanishing in the thermodynamic limit. 
This completes the demonstration that the contribution from these excitations to $\Omega_{S_j^z}(E)$ of Eq.~(\ref{eq:CAweight}) vanishes when $L \to \infty$, and more complicated excitations need to be considered to understand its form, which at present we have only done numerically.
\section{Asymptotic QMBS in Brownian Circuits}\label{app:QMBSbrownian}
In this appendix, we discuss asymptotic QMBS in Brownian circuits with exact QMBS.
We start with the assumption that there is a set of local projectors $\{R_{[j]}\}$ such that the common kernel of these projectors is spanned by the exact QMBS $\{\ket{\Phi_n}\}$; this is sometimes referred to as the Shiraishi-Mori condition~\cite{shiraishi2017systematic, moudgalya2022exhaustive}. 
In other words, the subspace spanned by these QMBS states can be expressed as the exhaustive ground state space of a frustration-free Hamiltonian, i.e., 
\begin{equation}
     \sum_j{R_{[j]}}\ket{\psi} = 0 \;\;\iff\;\;\ket{\psi} \in \mS = \text{span}\{\ket{\Phi_n}\}. 
\label{eq:exhaustive}
\end{equation}
Several examples of QMBS, including those in the spin-1 XY model~\cite{schecter2019weak}, Hubbard model and its deformations~\cite{mark2020eta, moudgalya2020eta}, and also those in the spin-1 AKLT model~\cite{moudgalya2018exact} can be understood this way~\cite{moudgalya2022exhaustive, moudgalya2021review, chandran2022review}. 
With this, we can generically write down a bond algebra $\mA_{\scar}$ of the form of Eq.~(\ref{eq:ACtower}), which has generators of the form $\{R_{[j]}\sigma^\alpha_k\}$, such that its centralizer is $\mC_{\scar}$ spanned by $\{\ketbra{\Phi_m}{\Phi_n}\}$ (see footnote \ref{ftn:scarbond} for precise statement).
This structure guarantees that the QMBS are degenerate eigenstates of all operators constructed out of the generators of $\mA_{\scar}$.
Here we consider Brownian circuits built out of the generators of $\mA_{\scar}$, and directly work with the evolution of states under this circuit, as opposed to operators discussed in Sec.~\ref{subsec:algebrabrownian}. 
Exact QMBS are stationary states under such circuits, since they are by definition eigenstates of each ``gate" of the circuit.
Here we show that working with states directly also shows the existence of asymptotic QMBS that relax slowly in such circuits. 
Denoting the generators of the bond algebra to be $\{\hH_\alpha\}$, the time-evolution of a state $\ket{\psi(t)}$ by a time-step $\dt$ can be written in direct analogy to Eq.~(\ref{eq:opevolutionliouv}) as
\begin{equation}
\ket{\psi(t + \dt)} = e^{-i \sum_{\alpha}{J_\alpha^{(t)} \hH_\alpha \dt}} \ket{\psi(t)} = \ket{\psi(t)} - i \dt \sumal{\alpha}{}{J_\alpha^{(t)} \hH_\alpha \ket{\psi(t)}} - \frac{(\dt)^2}{2} \sumal{\alpha,\beta}{}{J_\alpha^{(t)} J_\beta^{(t)} \hH_\alpha \hH_\beta \ket{\psi(t)}} + \mO((\dt)^3).
\label{eq:stateevolutionbrown}
\end{equation}
After ensemble-averaging with the distributions of Eq.~(\ref{eq:Jalphapdf}), the expression for the state in the continuum time limit reads, in analogy to Eq.~(\ref{eq:Oavgcontinuum}),  
\begin{equation}
    \overline{\ket{\psi(t)}} = e^{-\kappa \hP t}\ket{\psi(0)},\;\;\;\implies\;\;\; \overline{\braket{\psi(0)}{\psi(t)}} = \bra{\psi(0)}e^{-\kappa\hP t}\ket{\psi(0)},\;\;\;\hP \defn \sum_\alpha{\hH_\alpha^2}.
\label{eq:overlapdecay}
\end{equation}
As a consequence, the decay of the ensemble-averaged overlap is governed by the spectrum of $\hP$.
While we are not able to directly compute the ensemble-averaged fidelity using this approach, it can be lower-bounded in terms of the overlap as
\begin{equation}
\overline{\mF(t)} = \overline{|\braket{\psi(0)}{\psi(t)}|^2} \geq \left|\overline{\braket{\psi(0)}{\psi(t)}}\right|^2.
\label{eq:fidelitybound}
\end{equation}
We now restrict to specific examples of QMBS, where bond generators are of the form of Eq.~(\ref{eq:QMBSalgebra}). 
Then we have 
\begin{equation}
    \hP = \sum_{j, k,\alpha}{(R_{[j]}\sigma^\alpha_k)^2} = C \sum_j{R_{[j]}},
\label{eq:hPexpression}
\end{equation}
where $C$ is an overall constant that depends on the number of $\alpha$'s and $k$'s in the generators (assumed to be the same for each $j$ for simplicity), and we have used the fact that $R_{[j]}$ is a projector.
Strikingly, one can see that this is precisely the frustration-free Hamiltonian that appeared in Eq.~(\ref{eq:exhaustive}).
This already shows that exact QMBS never decay, since they are ground states of $\hP$. 
The slowly-relaxing states, or the asymptotic QMBS, are then the low-energy excitations of $\hP$, provided it is gapless.
In the case of the ferromagnetic tower of QMBS discussed in Sec.~\ref{subsubsec:towerofQMBS}, we have $R_{[j]} = \frac{1}{4} - \vec{S}_j\cdot\vec{S}_{j+1}$, hence $\hP$ is just the ferromagnetic Heisenberg model of Eq.~(\ref{eq:heisenbergcanon}) up to an overall factor.
The asymptotic QMBS $\{\ket{\Phi_{n,k}}\}$ are then simply spin-waves on top of the ferromagnet shown in Eq.~(\ref{eq:Heisenbergspinwavestates}); this explains their form in Eq.~(\ref{eq:aQMBS}). 
Since these states with small $k$ have energy $\sim p_k \sim k^2$ under $\hP$, their ensemble-averaged overlap $\overline{\braket{\Phi_{n,k}(0)}{\Phi_{n,k}(t)}}$ decays on timescales $\sim L^2$, which due to Eq.~(\ref{eq:fidelitybound}) is also a lower-bound for the timescale for the fidelity, consistent with Eq.~(\ref{eq:overlapavg}).
This method hence more directly reproduces the asymptotic QMBS found from the super-Hamiltonian perspective in Sec.~\ref{subsec:asymptoticQMBS}, and explains the significance of the corresponding super-Hamiltonian eigenstates, i.e., those associated with spin-waves on only one leg of the ladder that appeared there.
The appearance of the frustration-free Hamiltonian of Eq.~(\ref{eq:exhaustive}) in Eq.~(\ref{eq:hPexpression}), and the physical interpretation of its eigenstates as the decay modes of the overlap in Eq.~(\ref{eq:overlapdecay}), leads us to the conjecture \ref{con:aqmbs} on the conditions for the existence of asymptotic QMBS, restated here for clarity.
\aqmbs*
One might also notice that the form of the decay of the overlap of the asymptotic QMBS obtained using  Eq.~(\ref{eq:overlapdecay}) is necessarily a simple exponential of the form $\exp(-c t/L^2)$, where the asymptotic QMBS is an eigenstate of $\hP$ with eigenvalue $\sim k^2 \sim c/L^2$. 
Since the ensemble-averaged fidelity is lower-bounded by this [see Eq.~(\ref{eq:fidelityauto})], this predicts a fidelity decay timescale that scales as $\sim L^2$. 
This is different from the fidelity decay timescale of asymptotic QMBS predicted and observed in Hamiltonian systems in~\cite{gotta2023asymptotic}.
The fidelity of an initial state under Hamiltonian evolution at short times is of form $\exp(-\Delta H^2 t^2)$~\cite{camposvenuti2010unitary}, where $\Delta H^2$ is the variance of the energy in the initial state, $\Delta H^2 \equiv \expval{H^2}{\psi_0} - (\expval{H}{\psi_0})^2$.
Given that the variance of asymptotic QMBS scales as $\Delta H^2\sim 1/L^2$~\cite{gotta2023asymptotic}, the fidelity decay is of the form $\exp(-c' t^2/L^2)$, which predicts a decay timescale $\sim L$. 
The fidelity in the Brownian circuits hence decays parametrically slower than in the Hamiltonian evolution. 
This is reminiscent of the quantum Zeno effect, where unitary evolution is suppressed by external factors such as repeated measurements or fast-fluctuating stochasticity. 
It would be interesting to make this connection precise in future work, while here we can give a rough argument showing reconcilability of these results, which also sheds some light on quantitative relations between the Brownian circuit and Hamiltonian dynamics.
Note that the derivation of the ensemble-averaged state dynamics from Eq.~(\ref{eq:stateevolutionbrown}) to Eq.~(\ref{eq:overlapdecay}) formally required taking $\Delta t \to 0$ limit while taking the variance of the couplings $J_\alpha^{(t)}$ to diverge as $\sigma_J^2 = 2\kappa/\Delta t$ at fixed $\kappa$.
We can in fact use the obtained results also in the circuit setups where $\sigma_J^2$ is kept fixed while we take $\Delta t$ sufficiently small, which, however, then enters the characteristic rate $\kappa$ in all results: $\kappa = \sigma_J^2 \Delta t/2$.
This already shows that if the applied Hamiltonians have typical couplings of a given strength $\sim \sigma_J$ and hence typical dynamics rates $\sim \sigma_J$, changing the Hamiltonian randomly after every small time interval $\dt$ suppresses the dynamics rates to $\sim \kappa \sim \sigma_J \cdot \sigma_J \Delta t$, assuming $\sigma_J \dt \ll 1$.
For an initial state $\ket{\psi(0)}$ that is an eigenstate of $\hP$ of Eq.~(\ref{eq:overlapdecay}) with a bounded eigenvalue $p$: $\hP \ket{\psi(0)} = p \ket{\psi(0)}$, the ensemble-averaged Eq.~(\ref{eq:stateevolutionbrown}) gives
\begin{equation}
\overline{\ket{\psi(\Delta t)}} \approx \left(1 - \frac{(\Delta t)^2}{2} \sigma_J^2 p \right) \ket{\psi(0)} \quad \implies \quad \overline{\ket{\psi(t)}} \approx \left(1 - \frac{(\Delta t)^2}{2} \sigma_J^2 p \right)^{\frac{t}{\Delta t}} \ket{\psi(0)} \approx e^{-\frac{1}{2} \sigma_J^2 (\Delta t) p t} \ket{\psi(0)}, 
\label{eq:argument4finiteDt}
\end{equation}
which matches the result Eq.~(\ref{eq:overlapdecay}) with $\kappa = \frac{1}{2} \sigma_J^2 \Delta t$ as claimed, and the approximations used are controlled as long as $\frac{(\Delta t)^2}{2} \sigma_J^2 p \ll 1$.
This is also roughly consistent with the expected fidelity decay under a fixed Hamiltonian, controlled by its variance $\Delta H^2$ in the state~\cite{camposvenuti2010unitary}, which we expect to apply for individual evolution steps over time $\Delta t$:
\begin{equation}
|\braket{\psi(0)}{\psi(\Delta t)}|^2 = e^{-\Delta H^2 (\Delta t)^2} ~, \quad \Delta H^2 \defn \bra{\psi(0)} H^2 \ket{\psi(0)} - \bra{\psi(0)} H \ket{\psi(0)}^2 ~. 
\end{equation}
With $H = \sum_\alpha J_\alpha^{(t)} H_\alpha$, an elementary calculation averaging over the independent Gaussian $J_\alpha^{(t)}$ with variance $\sigma_J^2$ gives $\overline{\expval{H^2}{\psi_0}} = \sigma_J^2 \expval{\hP}{\psi_0} = \sigma_J^2 p$ if $\hP\ket{\psi_0} = p\ket{\psi_0}$.
This upper-bounds the ensemble-averaged variance of the energy, $\overline{\Delta H^2}$, and can be viewed as a reasonable estimate of a typical value of the variance of the energy in $\ket{\psi_0}$ for Hamiltonians drawn from this distribution.
We can then recover the qualitative form of Eq.~(\ref{eq:argument4finiteDt}) if we conjecture that the result of the multiple time steps of the Brownian circuit is to have roughly the same fidelity suppression factor for each step $\Delta t$: in this case the total suppression factor is $e^{-\Delta H_\text{typ}^2 (\Delta t)^2 \cdot \frac{t}{\Delta t}}$ qualitatively matching Eq.~(\ref{eq:argument4finiteDt}).
While this is not a precise quantitative argument, we can justify using the same suppression factor after each time step $\Delta t$ as follows.
At any given instance of the Brownian circuit, we have
\begin{equation}
\ket{\psi(\Delta t)} = \ket{\psi(0)} \braket{\psi(0)}{\psi(\Delta t)} + \sket{\delta\psi^\perp} ~,
\end{equation}
where the first term is the projection onto $\ket{\psi(0)}$ while $\sket{\delta\psi^\perp}$ is the deviation.
%
While the asymptotic QMBS property of $\ket{\psi(0)}$ is common for all $H^{(t)}$ (i.e., is essentially non-random across them), the deviation $\sket{\delta\psi^\perp}$ is particular to the Hamiltonian applied at that step, which is hence random across different steps, and it is plausible that the chances of the $\ket{\delta\psi^\perp}$ part to return close to $\ket{\psi(0)}$ under the subsequent steps are small.
For the purposes of calculating the fidelity we are only interested in keeping track of the $\ket{\psi(0)}$ projection, and we get a similar suppression factor at each step.
Finally, note that the presentation here focused on the asymptotic QMBS appearing due to exact towers of QMBS, where the $p \sim 1/L^2$ scaling of the super-Hamiltonian energy of an initial state implies the divergence of its fidelity decay time.
Mathematically the same arguments also go through in the case of exact isolated QMBS such as those in  Sec.~\ref{subsubsec:gappedisolatedQMBS}, where the super-Hamiltonian energy any initial state $\ket{\psi(0)}$ orthogonal to the exact QMBS is at least a constant, implying a constant fidelity decay time.
Nevertheless, this still signifies some hidden ``non-thermalness'' in the ``thermal" sector that occurs even due to a single exact scar state~\cite{lin2019exact,lin2020slow}, although it is less dramatic than in the asymptotic QMBS.
It would be interesting to understand whether this framework can be used to quantify such non-thermalness in more detail.
\section{Super-Hamiltonians for Equally Spaced Tower of QMBS}\label{app:QMBStower}
We now discuss different choices of super-Hamiltonians suitable for analyzing cases with a tower of QMBS,  and we show that the physics of asymptotic QMBS is captured in this setup. 
As discussed in \cite{moudgalya2022exhaustive}, we can describe such a tower of exact ferromagnetic scar states split in energy by the Zeeman field term using the algebras
\begin{equation}
    \mA_{\text{tower-lift}} \defn \lgen \{R_{j,j+1}\sigma^\alpha_{j+2}, \sigma^\alpha_{j-1} R_{j,j+1}\}, \sum_j{\sigma^z_j}\rgen,\;\;\;\mC_{\text{tower-lift}} = \lgen \{\ketbra{\Phi_{n,0}}\}\rgen.
\label{eq:ACtowerlift}
\end{equation}
This seems to necessarily require the addition of an extensive-local term $Z_{\tot} \defn \sum_j{\sigma^z_j}$ to the generators of $\mA_{\text{tower}}$.
In order to capture the commutant $\mC_{\text{tower-lift}}$ as the ground states of a super-Hamiltonian, we need to sacrifice either locality or Hermiticity of the super-Hamiltonian, and we discuss both options below.
\subsection{Non-Local Super-Hamiltonian}
We can naively follow the discussion in Sec.~\ref{subsec:liouvsuper} and construct a super-Hamiltonian using Eq.~(\ref{eq:psdLiouv}).
While the resulting super-Hamiltonian $\hmP_{\text{tower-lift}}$ is Hermitian, it becomes non-local, with the addition of the term $[\sum_j{(\sigma^z_{j;t} - \sigma^z_{j;b})}]^2$.
This term changes the energy of the ground states $\ket{\Phi_{m,0}} \otimes \ket{\Phi_{n,0}}$ of $\hmP_{\text{tower}}$ from $0$ to $(m - n)^{2}$.
Hence the ground states of the new super-Hamiltonian $\hmP_{\text{tower-lift}}$ with this non-local term now require $m = n$ and are hence $\{\ket{\Phi_{n,0}}_t \otimes \ket{\Phi_{n,0}}_b\}$, which shows that the above algebras are indeed centralizers of each other.
The operators $\ketbra{\Phi_{n,k}}{\Phi_{n,0}}$ are still exact low-energy eigenstates of $\hmP_{\text{tower-lift}}$ with eigenvalue $p_k = 8[1 - \cos(k)]$, which allows us to understand the asymptotic QMBS.
Equation~(\ref{eq:overlapavg}) holds for $\hA = \ketbra{\Phi_{n,k}}{\Phi_{n,0}}$ since $\ket{\Phi_{n,0}}$ is still a singlet of $\mA_{\text{tower-lift}}$,
which rigorously lower-bounds the fidelity decay time in the case of non-degenerate towers as well. 
Note that this quantity is not straightforward to bound from the direct consideration of the dynamics of states discussed in App.~\ref{app:QMBSbrownian}.
The effective Hamiltonian $\hP$ shown in Eq.~(\ref{eq:overlapdecay}) acquires an additional term $(\sum_j{\sigma^z_j})^2$, which then shows that
\begin{equation}
    \overline{\braket{\Phi_{n,k}(0)}{\Phi_{n,k}(t)}} = e^{-\kappa [p_k + (L - 2 n)^2] t},  
\end{equation}
which decays fast when $n \neq \frac{L}{2}$, i.e., when the eigenvalue $Z_\text{tot} = L - 2n \neq 0$. 
This is also the case for the overlap $\overline{\braket{\Phi_{n,0}(0)}{\Phi_{n,0}(t)}}$, even though $\ket{\Phi_{n,0}}$ is an exact QMBS, and it is an effect of averaging over the random phases acquired by the action of $Z_{\tot}$.
While these are mathematically correct properties of the Brownian circuit with random fluctuating Zeeman field, they are not useful for understanding the fidelity properties of the asymptotic QMBS.
\subsection{Non-Hermitian Super-Hamiltonian}
We now discuss an alternate super-Hamiltonian for the algebra of Eq.~(\ref{eq:ACtowerlift}) that preserves locality but sacrifices Hermiticity.
This naturally appears in a Brownian circuit where the coefficient of the magnetic field is constant and \textit{not} random.
We can then redo the analysis of Eqs.~(\ref{eq:opevolutionliouv})-(\ref{eq:Oavgcontinuum}) to derive an effective Hamiltonian that describes the ensemble-averaged operator dynamics.
Given a Brownian circuit evolving under a set of operators $\{\hH_\alpha\}$ with random coefficients $\{J^{(t)}_\alpha\}$ chosen from the distribution of Eq.~(\ref{eq:Jalphapdf}), and an operator $\hG$ with a constant $\mathcal{O}(1)$ coefficient $K$, analogous to Eq.~(\ref{eq:Oavgfinal}) we obtain
\begin{equation}
	 \oket{\overline{\hO(t + \dt)}} =  \oket{\overline{\hO(t)}} + i \dt K \hmL_{\hG}\oket{\overline{\hO(t)}} - \dt \sumal{\alpha}{}{\kappa_\alpha \hmL^\dagger_\alpha \hmL_\alpha \oket{\overline{\hO(t)}}} + \mO((\dt)^2),
\label{eq:Oavgconst}
\end{equation}
where $\hmL_{\hG} \defn \hG_{t}\otimes \mathbb{1}_b - \mathbb{1}_t \otimes \hG_b$ is the Liouvillian corresponding to $\hG$ (assumed to be Hermitian and real-valued in the working basis for simplicity).
In the continuous time limit we then obtain
\begin{equation}
	\frac{d}{dt} \oket{\overline{\hO(t)}} = -\Big[\sumal{\alpha}{}{\kappa_\alpha \hmL^\dagger_\alpha\hmL_\alpha} - i K \hmL_{\hG} \Big] \oket{\overline{\hO(t)}},\nn \\
 \implies\;\oket{\overline{\hO(t)}} = e^{- (\kappa \hmP - i K \hmL_{\hG}) t}\oket{\hO(0)}.
\label{eq:Oavgcontinuumconst}
\end{equation}
Hence the physics of this system can be understood using the non-Hermitian super-Hamiltonian $\hmP_{\text{n-h}} = \kappa \hmP - i K \hmL_{\hG}$.
Operators in the commutant $\mC_{\text{ext-loc}}$ of the algebra $\mA_{\text{ext-loc}} = \lgen \{\hH_\alpha\}, \hG \rgen$ are guaranteed to have zero eigenvalue under $\hmP_{\text{n-h}}$. 
Moreover, any zero eigenvalued operator of $\hmP_{\text{n-h}}$ is in the commutant $\mC_{\text{ext-loc}}$, a simple proof is as follows.
Starting from $\hmP_{\text{n-h}}\oket{\Psi} = 0$, we have $\obra{\Psi} \hmP_{\text{n-h}} \oket{\Psi} = \kappa \obra{\Psi} \hmP \oket{\Psi} - i K \obra{\Psi} \hmL_{\hG} \oket{\Psi} = 0$, which, since $\hmP$ and $\hmL_G$ are Hermitian, means $\obra{\Psi} \hmP \oket{\Psi} = 0$ and $\obra{\Psi} \hmL_G \oket{\Psi} = 0$.
Since $\hmP$ is positive semi-definite, we can conclude that $\hmP\oket{\Psi} = 0$, which also means that $\hmL_G\oket{\Psi}=0$, showing that $\oket{\Psi}$ is in the commutant $\mC_{\text{ext-loc}}$.
Applying this to asymptotic QMBS, where $\{\hH_\alpha\}$ are the generators of $\mA_{\text{tower}}$ of Eq.~(\ref{eq:ACtower}) and $\hG = \sum_j{\sigma^z_j}$, we have $\hmL_{\hG} = \sum_j{(\sigma^z_{j;t} - \sigma^z_{j;b})}$. 
Since $\hmP$ has the form of the dissipator of a Lindblad master equation [see Eq.~(\ref{eq:lindbladian})] with jump operators $\{\hH_\alpha\}$, the full $\hmP_{\nh}$ has the form of a full Lindbladian with the Hamiltonian $\hG$ and the jump operators $\{\hH_\alpha\}$.
Hence the eigenvalues of $\hmP_{\nh}$ are guaranteed to have non-negative real parts. 
It is easy to verify that the eigenstates $\ket{\Phi_{n,k}}_t \otimes \ket{\Phi_{m,0}}_b$ of $\hmP$, discussed in Eq.~(\ref{eq:aQMBSeigenstate}), continue to be eigenstates of $\hmP_{\nh}$, and we have
\begin{equation}
    \hmP_{\nh}\oket{\ketbra{\Phi_{n,k}}{\Phi_{m,0}}} = [\kappa p_k + 2 i  K (n-m) ]\oket{\ketbra{\Phi_{n,k}}{\Phi_{m,0}}}, 
\end{equation}
where $p_k = 8[1 - \cos(k)]$ and we have used that $\ket{\Phi_{n,0}}$ and $\ket{\Phi_{n,k}}$ are eigenstates of
$Z_\text{tot}$ with eigenvalue $L-2n$. 
We can then follow the same arguments as in Sec.~\ref{subsec:asymptoticQMBS} to obtain exact results -- lower bounds -- for the ensemble-averaged fidelity, by relating it to the autocorrelation function of an operator $\hA = \ketbra{\Phi_{n,k}}{\Phi_{m,0}}$, which leads to Eq.~(\ref{eq:overlapavg}), consistent with the expected slow decay of asymptotic QMBS.
\end{document}